\newcommand{\Imc}{\ensuremath{\mathcal{I}}}
\newcommand{\I}{\ensuremath{\mathcal{I}}\xspace}
\newcommand{\C}{\ensuremath{\mathcal{C}}\xspace}
\newcommand{\T}{\ensuremath{\mathcal{T}}\xspace}
\newcommand{\A}{\ensuremath{\mathcal{A}}\xspace}
\newcommand{\K}{\ensuremath{\mathcal{K}}\xspace}
\newcommand{\R}{\ensuremath{\mathcal{R}}\xspace}
\newcommand{\ISA}{\ensuremath{\sqsubseteq}}
\newcommand{\ent}{\ensuremath{\vDash}}
\newcommand{\subsume}{\ensuremath{\sqsubseteq}}
\newcommand{\term}{\ensuremath{\emph{term}}}
\newcommand{\dllite}{\textit{DL-Lite}\xspace}
\newcommand{\dlliteR}{\ensuremath{\textit{DL-Lite}_\R}\xspace}
\newcommand{\dlliter}{\dlLiteR}
\newcommand{\dlLiteR}{\dlliteR}
\newcommand{\newDL}{\mbox{$\dllite^{\mathcal{HR}}$}\xspace}
\newcommand{\newDLnr}{\mbox{$\dllite_\mathsf{non\mbox{-}rec}^{\mathcal{HR}}$}\xspace}
\newcommand{\vars}{\mathit{vars}}
\newcommand{\conc}[1]{\mathrm{#1}}
\newcommand{\role}[1]{\mathrm{#1}}
\newcommand{\ind}[1]{\mathsf{#1}}
\newcommand{\nc}{{\sf N_C}}
\newcommand{\nr}{{\sf N_R}}
\newcommand{\nind}{{\sf N_I}}
\newcommand{\nrsimple}{{\sf N_{R_{s}}}}
\newtheorem{definition}{Definition}
\newtheorem{theorem}{Theorem}
\newtheorem{lemma}{Lemma}
\newtheorem{example}{Example}
\newtheorem{claim}{Claim}
\newtheorem{proposition}{Proposition}
\newcommand{\myspe}{{\leadsto_{\T}}\,}
\newcommand\xqed[1]{%
  \leavevmode\unskip\penalty9999 \hbox{}\nobreak\hfill
  \quad\hbox{#1}}
\newcommand\demo{\xqed{$\triangle$}}
\newcommand{\Smod}[2]{\ensuremath{\mathcal{E}_{#1 , #2}}\xspace}
\newcommand{\ord}{\mathit{ord}} 
\newcommand{\bfA}{\mathbf{A}\xspace}
\newenvironment{lemma2}[1]
{\innercustomthm}
{\endinnercustomthm}
\newenvironment{theorem2}[1]
{\innercustomthmT}
{\endinnercustomthmT}
\newenvironment{proposition2}[1]
{\innerproposition}
{\endinnerproposition}
\newcommand{\myspeS}{{\leadsto_{\T}^s}\,}
\newcommand{\mygen}{{\leadsto_{\T}^g}\,}
\newcommand{\gsArrow}[2]{\ensuremath{\leadsto_{#1}^{#2}}}
\newcommand{\gsArrowT}[2]{\ensuremath{{{\leadsto_{#1}^{#2}}^*}}}
\tikzstyle{every node} = [font=\scriptsize, text centered]
\tikzstyle{every label} = [inner sep=0pt, font=\scriptsize, text centered]
\tikzstyle{aboxNode} = [rectangle,rounded
\tikzstyle{anon} = [rectangle,rounded
\tikzstyle{varnode} = [node/.style={rectangle,rounded corners,draw,align=center}] 
\tikzstyle{role} = [->,shorten <=2pt,shorten >=2pt]
\tikzstyle{match} = [blue,densely dotted,->,shorten <=-1pt,shorten >=-1pt]
\tikzstyle{matchR} = [BrickRed,densely dotted,->,shorten <=-1pt,shorten >=-1pt]
\tikzstyle{nosepn} = [outer sep=1pt, inner sep=1pt]
\tikzstyle{querymatch} = [PineGreen,dashed,->]
\begin{document}
%
\title{
Relaxing and Restraining Queries for OBDA 
}

\author{Medina Andre\c{s}el \and Yazm\'in Ib\'a\~nez-Garc\'ia \and Magdalena
  Ortiz 
\and Mantas \v{S}imkus \\
\{andresel,ibanez,ortiz\}@kr.tuwien.ac.at $\mid$ simkus@dbai.tuwien.ac.at \\
Faculty of Informatics, TU Wien, Austria \\ 
}

\maketitle
\begin{abstract}
In ontology-based data access (OBDA), ontologies have been successfully
employed for querying possibly unstructured and incomplete data. In this
paper, we advocate using ontologies not only to formulate queries and compute
their answers, but also for modifying queries by relaxing or restraining
them, so that they can retrieve either more or less answers over a given
dataset. Towards this goal, we first illustrate that some domain knowledge
that could be naturally leveraged in OBDA can be expressed using complex role inclusions (CRI). 
Queries over ontologies with CRI are not first-order (FO) rewritable in general.  
We propose an extension of DL-Lite with CRI, and show that  conjunctive queries over ontologies in this extension 
are FO rewritable.    
Our main contribution is a set of rules to relax and
restrain conjunctive queries (CQs). Firstly, we define rules that use the
ontology to produce CQs that are relaxations/restrictions over any dataset. Secondly, we introduce a set of data-driven rules, that leverage
patterns in the current dataset, to obtain more fine-grained relaxations and
restrictions.
\end{abstract}


\maketitle \acks{This research was funded by FWF Projects P30360 and W1255-N23.} 

\section{Introduction}
\emph{Ontology based data access (OBDA)} is one of the most
successful use cases of description logic (DL) ontologies. 
The core idea in  OBDA is to use an ontology to provide a conceptual view of a collection of data sources, 
thus abstracting away  from the specific way data is
stored. The role of the ontology in this setting is to describe the domain
of interest at a high level of abstraction. This allows users to formulate queries over the data sources using a familiar controlled  vocabulary. 
Further, knowledge represented in the ontology can be leveraged to retrieve more complete answers.  
For example,  
consider the dataset $\A_e$ in Figure~\ref{fig:eventData}, which includes
information on some cultural events and their locations, and the
 ontology in Figure~\ref{fig:eventOnt} which captures additional information, e.g. the knowledge that  
both concerts and exhibitions are cultural events. 
By posing  the query
\[ q_1(x)\gets\conc{CulturEvent}(x) \]
one can retrieve all cultural events, $\sf ex_1$, $\sf ev_1$, and
$\sf c_1$, regardless  whether they are stored as concerts,
exhibitions, or cultural events of unspecified type. 

In the OBDA paradigm an ontology can be linked to a collection of heterogeneous data sources by defining \emph{mappings} from the data to the vocabulary used in the  ontology~\cite{OBDA}. This allows to integrate e.g., data from relational databases and unstructured datasets. In this framework, an ontology acts as a \emph{mediator} between the query and a set of heterogeneous data sources.  
%
Description logics of the so-called \dllite family have been particularly tailored for 
OBDA~\cite{CalvaneseGLLR07}. One crucial property of  \dllite, is that queries mediated by such ontologies are first-order(FO)-rewritable. 
In a nutshell, this means that evaluating a query $q$ over a dataset $\A$ using knowledge from an ontology  $\T$
can be reduced to evaluate  a query $q_\T$, that incorporates knowledge from $\T$, over $\A$. This amounts to query evaluation in relational databases.   
In our example, a rewriting of $q$ is 
\[ q_\T(x) \gets \conc{CulturEvent}(x) \lor \conc{Exhibition}(x) \lor
\conc{Concert}(x) \]
\begin{figure*}[t]
	\begin{minipage}[t]{0.5\linewidth}
		\includegraphics[width=\textwidth]{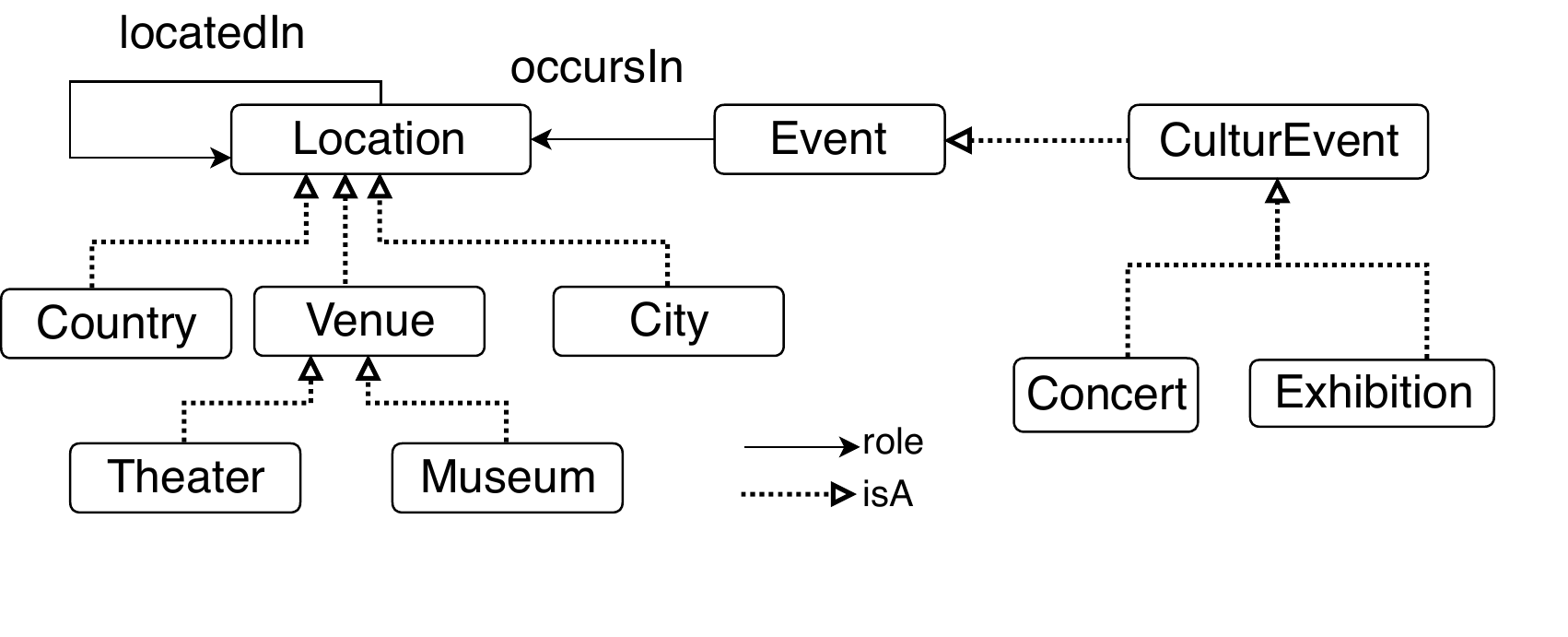}
		\caption{Event ontology $\T_e$.}
		\label{fig:eventOnt}
	\end{minipage}
	\hfill
	\begin{minipage}[t]{0.45\textwidth}
		\begin{small}
			%
			\begin{tikzpicture}
			\node (Vie) [aboxNode, label={[shift={(0,-0.8)}]$\conc{City}$}] at (0,0) {$\sf Vienna$} ;
			\node (Au) [aboxNode, label={[shift={(0,-0.8)}]$\conc{Country}$}] at (2,0) {$\sf Austria$};
			\node (StOper) [aboxNode, label={[shift={(0,-0.8)}]$\conc{Venue}$}] at (-2,0) {$\sf StateOpera$};
			\node (c1) [aboxNode, label={[shift={(0,0.2)}]$\conc{Concert}$}] at (-2,1.6) {$\sf c_1$};
			\node (ex1) [aboxNode, label={[shift={(0,0.2)}]$\conc{Exhibition}$}] at (0,1.6) {$\sf ex_1$} ;
			\node (ev1) [aboxNode, label={[shift={(0,0.2)}]$\conc{CulturEvent}$}] at (2,1.6) {$\sf ev_1$};
			\draw [role] (Vie) -- (Au) node[midway,above=5pt] {$\role{locatedIn}$};
			\draw [role] (StOper) -- (Vie) node[midway,above=5pt, ] {$\role{locatedIn}$};
			\draw [role] (ex1) -- (Vie) node[midway,above=2pt, rotate=90] {$\role{occursIn}$};
			\draw [role] (ev1) -- (Au) node[midway,above=2pt, rotate=90] {$\role{occursIn}$};
			\draw [role] (c1) -- (StOper) node[midway,above=2pt, rotate=90] {$\role{occursIn}$};
			\end{tikzpicture}
		\end{small}
		\caption{Event dataset $\A_e$.}
		\label{fig:eventData}
	\end{minipage}
	\vspace{-0.6cm}
\end{figure*}

We investigate the use of ontologies not only as query mediators, but also for 
\emph{query reformulation}: by modifying queries in order to \emph{relax} them and retrieve more answers, or  \emph{restrain} them and reduce answers. These reformulations can be used to explore a given 
dataset, or to modify queries to fit the information needs of a user.
For example, answers to queries for concerts may be too scarce or nonexistent, then by relaxing the query  
to find all cultural events, one might get more answers.  Conversely, if a query for cultural events produces too many
answers, it is possible to restrict this query to events of a specific type (for instance concerts).


In our example, the query 
$q_c(x) \gets \conc{Concert}(x)$ that specializes  
$q_1(x) \gets \conc{CulturEvent}(x)$ occurs as a disjunct in its rewriting.
A key observation within our approach is  that query
rewriting rules for \dllite (such as the ones from ~\cite{CalvaneseGLLR07}) 
yield query specializations, and that
counterparts of these rules produce query
generalizations. 
However, there are intuitive query answers and query reformulations 
that cannot be produced by these rewriting rules. 
For example,  consider the following  query retrieving concerts occurring in Vienna. 
 \[q_2(x) \leftarrow \conc{Concert}(x) \land \role{occursIn}(x,y) \land y={\sf
   Vienna}\] 
This  query does not return any answers when evaluated over the data from $\A_e$ w.r.t.\,
$\T_e$. However, ${\sf c_1}$ may be considered an answer to this 
query, by following the intuition that if an event occurs in a venue located in a city, then it occurs in that city.
Still, this knowledge cannot be expressed in \dllite.  

To obtain this kind of reformulations, we propose to extend \dllite with  so-called \emph{complex
role inclusions} (CRI). In our example we could add the following: 
		\[\role{occursIn} \cdot \role{locatedIn} \subsume
                \role{occursIn} \]
This axiom captures the intuition above, and would allow to retrieve ${\sf
  c_1}$. 
Moreover, we could also use it to generate some
interesting query reformulations. For instance,  the query
\[q_3(x) \leftarrow \conc{Concert}(x) \land \role{occursIn}(x,y) \land
\role{City}(y).\]
could be specialized to  
 \begin{align*}
q'_3(x) \leftarrow \, & \conc{Concert}(x) \land \role{occursIn}(x,z)  \land \role{locatedIn}(z,y) \land \role{City}(y)
 \end{align*}
which specializes from all concerts known to occur in a city, to only those for which a more
specific location within a city is known.

Unfortunately, adding CRIs to \dllite increases the
worst case data complexity of query answering, which means that queries are no longer  
FO-rewritable. We  propose two extensions of \dllite with CRIs for which queries remain FO-rewritable. 
The first extension imposes  some acyclicity conditions 
between the roles that occur on the right-hand-side of CRIs. 
This extension, however, would not be sufficient to capture our example above,
where $\role{occursIn}$ appears on both sides of the inclusion. 

A more expressive  extension of \dllite allowing recursive role inclusions
can be defined based on the observation that chains of some roles have bounded length. 
In our example, we note that concepts occurring along chains  of the role  $\role{locatedIn}$ can be 
\emph{ordered} in the sense that  $\role{occursIn}$ edges can only 
connect `smaller' locations to `larger' ones: from  venues to cities, from
cities to countries, etc. 
Based on this observation we propose yet another extension of  \dllite allowing recursion along ordered bounded concept chains. 
We then propose reformulation rules for relaxing and specializing queries using ontologies
in this extension. The resulting rules allow to generalize and specialize queries ``moving'' not only along the subclass relation and subrole relations, but also along dimensions defined by the ordered concepts (in our example along the different kinds of locations).

Using ontologies expressed in the proposed extension of \dllite is possible to reformulate queries along dimensions expressed at the intentional level. 
%
However, there are some intuitive reformulations  that cannot be obtained  on the basis of an ontology alone. 
Let us illustrate this in our example. Recall the query $q_2$ asking for concerts occurring in Vienna. 
It could be  specialized, for instance, to
concerts in some venue in Vienna, like the State Opera, or  generalized  to
all concerts in Austria. This can only be done by taking into consideration the dataset at hand (that is the intentional knowledge).

To capture this intuition, we  propose rules considering instances of concepts and relations, 
as well as inclusions between concepts that are not necessarily implied by the TBox, 
but that can be guaranteed to hold in the current dataset.   
Applying the resulting rules to $q_2$ 
produces 
the following reformulations: 
\begin{align*}
q^s_2(x) \leftarrow {} & \conc{Concert}(x) \land \role{occursIn}(x,y) \land y={\sf
   StateOpera}\\
q^g_2(x) \leftarrow {} & \conc{Concert}(x) \land \role{occursIn}(x,y) \land y={\sf
   Austria} 
\end{align*}
Note that these reformulations are not data independent, but instead,
refer to $\A_e$. 

The proposed  ontology and data-driven reformulations can aid users to explore heterogeneous,  
unstructured and incomplete datasets in the same spirit as online
analytical processing (OLAP) supports the exploration of structured data~\cite{citeulike:907825}. 
For that purpose, we illustrate how our extension of 
\dllite can describe \emph{dimensional} knowledge, analogous to the
multi-dimensional data model considered in OLAP. We also exemplify how our rules for relaxing and restraining queries can be applied 
in a way that closely resembles the so-called `rolling up' and `drilling down'  along
dimensions.

\section{Preliminaries} 
We start by introducing the syntax and semantics of \dlliteR~\cite{CalvaneseGLLR06}. 
We assume an alphabet consisting of countable infinite sets $\nc$,$\nr$, $\nind$ 
of \emph{concept}, \emph{role}, and \emph{individual} names, respectively. 
%
%
\dlliteR expressions 
are constructed according to the following grammar: 
\begin{align*}
B  &:= \top \mid \bot \mid A \mid \exists r &    r &:= p \mid p^-, 
\end{align*} 
where $A \in \nc$, $p \in \nr$. Concepts of the form $B$ are called \emph{basic concepts}, and roles of the form $p^-$ are called \emph{inverse roles}.  A \dlliteR \emph{TBox} (or \emph{ontology}) is a finite set of  axioms of the form  
\[B_1 \ISA B_2, \quad  r_1 \ISA r_2, \quad \mathbf{disj}(B_1,B_2), \quad \mathbf{disj}(r_1,r_2).\]  
%
A \dlliteR \emph{ABox} (or \emph{dataset}) is a finite set of assertions of the forms $A(a)$, and  $p(a,b)$, 
with $a,b \in \nind$, $A \in \nc$, and $p\in \nr$. A knowledge base (KB) is a pair $\K= (\T, \A)$.

The semantics is defined as usual in terms of interpretations.  
An \emph{interpretation} $\Imc= (\Delta^\Imc, \cdot^\Imc)$ consists of a non-empty \emph{domain} $\Delta^\Imc$ and an \emph{interpretation function} $\cdot^\Imc$ assigning to every concept name a set $A^\Imc \subseteq \Delta^\Imc$, and to every role name $p$ a binary relation $p^\Imc \subseteq \Delta^\Imc \times \Delta^\Imc$. The interpretation of more complex concepts and roles 
is defined as follows:
\begin{align*}
&\top^\Imc = \Delta^\Imc,   &(\exists r)^\Imc &= \{d \mid \exists d'. (d,d') \in r^\Imc \}, \\
&\bot^\Imc = \emptyset,  &(p^-)^\Imc &= \{(d,d') \mid (d',d) \in p^\Imc \} %
\end{align*}
Further, each individual name in $\nind$ is interpreted as an element  $a^\Imc \in \Delta^\Imc$, such that $a^\Imc = a$ for every $a \in \nind$ 
(i.e., we adopt the standard name assumption).

An interpretation $\Imc$ \emph{satisfies} an axiom of the form $\alpha
\ISA \beta$ if $\alpha^\Imc \subseteq \beta^\Imc$, an axiom of the
form $\mathbf{disj}(\alpha, \beta)$ if $\alpha^\Imc \cap
\beta^\Imc=\emptyset$, an assertion $A(a)$ if $a \in A^\Imc$, and an
assertion $p(a,b)$ if $(a,b) \in p^\Imc$.
Finally, $\Imc$ is a \emph{model} of a KB $\K= (\T, \A)$, denoted
$\Imc \models \K$, if $\Imc$ satisfies every axiom in \T, and every
assertion in \A. An ABox $\A$ is \emph{consistent} with a TBox $\T$ if there
exists a model of the KB $(\T, \A)$.
\begin{example}[Event KB]
	The ontology in Figure \ref{fig:eventOnt} is formalized into the  \dlliteR TBox $\T_e$:
\begin{align*}
\exists\conc{locatedIn} & \subsume \conc{Location} \, &\exists \role{occursIn} & \subsume \conc{Event}  \\
 \exists\conc{locatedIn}^- & \subsume \conc{Location} \, &\conc{CulturEvent} &\subsume \conc{Event}  \\
\exists \role{occursIn}^- & \subsume \conc{Location} \, 	&\conc{Concert} &\subsume \conc{CulturEvent}  \\
\conc{Country} & \subsume \conc{Location}  \, &\conc{Exhibition} &\subsume \conc{CulturEvent} \\
 \conc{Venue} & \subsume \conc{Location} \, &\conc{Theater} & \subsume   \conc{Venue}  \\
 \conc{City} & \subsume \conc{Location} \, &\conc{Museum} & \subsume   \conc{Venue}   
\end{align*}

\noindent The dataset $\A_e$ in Figure \ref{fig:eventData} together with $\T_e$ form a \dlliteR KB, which we  
denote as \emph{event KB} $\K_e$.
\label{ex:eventKB}
\demo
\end{example}

\paragraph{Normal form.}
W.l.o.g., in the rest of this paper we will consider  TBoxes in 
\emph{normal form}. In particular, we assume that all axioms in a TBox
$\T$ have one of the following forms: (i) $A\ISA A'$, (ii) $A\ISA
\exists p$, (iii) $\exists p \ISA A $, (iv) $p\ISA s$, (v) $p\ISA
s^{-}$, (vi) $\mathbf{disj}(A,A')$, and (vii) $\mathbf{disj}(p,p')$, where $A,A' \in \nc$ and $p,p',s \in \nr$. We note that by using (linearly many) fresh symbols, a general TBox can be transformed into a TBox in normal form so that the models are preserved up to the original signature.


\paragraph{Queries.}
We  consider the class of conjunctive queries and unions thereof. 
A \emph{term} is either an individual name or a variable.   
A \emph{conjunctive query} 
is a first order formula with free variables $\vec{x}$ that takes the form 
$\exists \vec{y}. \varphi(\vec{x}, \vec{y})$, with $\varphi$ a conjunction of 
\emph{atoms} of the form $$A(x), \quad r(x,y), \text{ and } \quad t = t',$$ where
$A \in  \nc$, $r \in \nr$,  and  
$t,t'$ range over terms. The set of terms occurring in a query $q$ is denoted $\term(q)$.
The free variables on a query are called the \emph{answer variables}. We  use
the notation $q(\vec{x})$ to make explicit reference to the 
answer variables of $q$. The \emph{arity of $q(\vec{x})$} is defined as the length of $\vec{x}$, denoted $|\vec{x}|$. Queries of arity 0 are called \emph{Boolean}. We  sometimes omit the existential variables and use 
$$q(\vec{x}) \leftarrow \varphi(\vec{x}, \vec{y}),$$  to denote a query
$q(\vec{x})= \exists \vec{y}. \varphi(\vec{x},\vec{y})$. Further, when operating on
queries,  it will be convenient to identify a CQ $q(\vec{x}) \leftarrow \varphi(\vec{x},\vec{y})$ with the set of atoms occurring in $\varphi(\vec{x},\vec{y})$. We also denote $\vars(q)=\vec{x} \cup \vec{y}$ to be the set of all variables occuring in $q$.

Let $\Imc$ be an interpretation, $q(\vec{x})$ a CQ and $\vec{a}$ a tuple from
$\Delta^\Imc$ of length $|\vec{x}|$, we call $\vec{a}$ an \emph{answer to $q$
  in $\I$} and write $\Imc \models q(\vec{a})$ if
there is a map \[ \pi: \term(q) \mapsto \Delta^\Imc\] such that 
\begin{enumerate*}[label=(\it\roman*)]
	\item $\pi(\vec{x}) \,{=}\, \vec{a}$,
	\item $\pi(b) \,{=}\, b$ for each individual $b$,
	\item $\Imc \models P(\pi(\vec{z}))$ for each atom $P(\vec{z})$ in
          $q$, and
	\item  $\pi(t)=\pi(t')$ for each atom $t=t'$ in $q$.
\end{enumerate*}
The map $\pi$ is called a \emph{match} for $q$ in \I. We denote $ans(q,\I)$ to be the set of all answers
to $q$ in $\I$.

Let $(\T, \A)$ be a KB. 
A tuple of individuals $\vec{a}$ from $\ind{\A}$ with $|\vec{a}| = |\vec{x}|$
is a \emph{certain answer} of $q(\vec{x})$ over $\A$ wrt. \T if $\Imc \models
q(\vec{a})$ for all models $\I$ of $(\T,\A)$; 
$cert(q, \T , \A)$  denotes the set of certain answers of $q$ over  $\A$
wrt. \T. For queries  $q_1(\vec{x})$ and $q_2(\vec{x})$, we write $q_1(\vec{x})
\subseteq_{\T,\A} q_2(\vec{x})$ if $cert(q_1, \T , \A) \subseteq cert(q_2, \T
, \A)$.


\section{DL-Lite with Complex Role Inclusions}\label{sec:query-answering-dl}
In this section we study the restrictions required to add CRIs to \dllite in order to preserve its  nice
computational properties. In particular, we are interested in ensuring FO-rewritability, 
as well as a polynomial rewriting in the case of CQs. For this goal, a first restriction is to 
 assume a set $\nrsimple \subseteq \nr^\pm$ of \emph{simple roles}
closed w.r.t. inverses (i.e. $s \in \nrsimple$ implies $s^- \in
\nrsimple$); for each $r \in \nr \setminus \nrsimple$,  $r$ and $r^-$ are \emph{non-simple} roles. 
 We then define the extension of \dlLiteR with CRIs  as follows:

\begin{definition}[CRIs, \newDL]
	A \emph{complex role inclusion} (CRI) is an expression of the form $r \cdot s \ISA t $, with  $r,s,t \in \nr$. 
	
	A \newDL TBox $\T$ is a \dlliter TBox that may also contain CRIs such
	that: 
	\begin{itemize}
		\item For every CRI $r \cdot s \ISA t \in \T$, $s$ is simple and $t$ is non-simple. 
		\item If $s \ISA t \in \T$ and $t \in \nrsimple$, then $s \in \nrsimple$.
	\end{itemize}
	An interpretation $\I = (\Delta^\I, \cdot^\I)$ \emph{satisfies} a CRI 
	$r \cdot s \ISA t $ if for all $d_1,d_2,d_3 \in \Delta^\I$, $(d_1,d_2) \in
	r^\I$,  $(d_2,d_3) \in s^\I$ imply $(d_1,d_3) \in t^\I$. \
	\label{def:CRI}
\end{definition}
CRIs are a powerful extension of DLs, but unfortunately, their addition has a
major effect in the complexity of reasoning, and syntactic conditions such as
\emph{regularity} \cite{Kazakov2010AnEO} are often needed to preserve decidability. 
In the case of \dllite, even one single fixed CRI  $r \cdot e \ISA r$ destroys
first-order rewritability, since it can easily enforce $r$ to capture
reachability along  the edges $e$ of a given graph.  
%
\begin{lemma}\label{lemma:nloghard} \cite{Artale:2009}
	Instance checking in \newDL is 
	\textsc{NLogSpace}-hard in data complexity, already
	for 
	TBoxes consisting of the  CRI $r \cdot s \ISA r$ only.  
\end{lemma}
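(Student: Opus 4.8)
The plan is to reduce directed graph reachability (\textsc{st}-connectivity), a classical \textsc{NLogSpace}-complete problem, to instance checking in \newDL with the \emph{fixed} TBox $\T = \{\, r \cdot s \ISA r \,\}$, where $s \in \nrsimple$ is simple and $r$ is non-simple. Note first that this $\T$ is a legitimate \newDL TBox: the CRI $r \cdot s \ISA r$ has simple middle role $s$ and non-simple right-hand side $r$, and $\T$ contains no plain role inclusions, so both conditions of Definition~\ref{def:CRI} hold. Since $\T$ does not depend on the input graph, it suffices to give a logspace mapping from graphs to ABoxes; the hardness then holds in data complexity. The guiding intuition, as anticipated in the text, is that the single CRI forces $r$ to contain, for a seed element, exactly the vertices reachable from it along $s$-edges.

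\textbf{Construction.} Given a directed graph $G = (V,E)$ with distinguished vertices $\ind{src},\ind{tgt} \in V$, I would build the ABox
\[ \A_G \;=\; \{\, s(u,v) \mid (u,v) \in E \,\} \;\cup\; \{\, r(\ind{src},\ind{src}) \,\}. \]
Every edge becomes an $s$-assertion, and the single seed assertion places $(\ind{src},\ind{src})$ into $r$. This map is computable in logarithmic space (one pass over the edge list, plus one extra assertion). The statement to establish is
\[ \ind{tgt} \text{ is reachable from } \ind{src} \text{ in } G \iff (\T,\A_G) \ent r(\ind{src},\ind{tgt}). \]

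\textbf{Correctness.} For the left-to-right direction I would induct on the length of a directed $s$-path from $\ind{src}$ to $\ind{tgt}$: in \emph{every} model $\Imc$ of $(\T,\A_G)$ we have $(\ind{src},\ind{src}) \in r^{\Imc}$ and $E \subseteq s^{\Imc}$, and since $\Imc$ satisfies $r \cdot s \ISA r$, the relation $r^{\Imc}$ is closed under appending an $s$-edge on the right; hence $(\ind{src},v) \in r^{\Imc}$ for every $v$ reachable from $\ind{src}$, in particular for $\ind{tgt}$. For the converse I would exhibit the canonical model $\Imc_G$ with domain $V$, $s^{\Imc_G} = E$, and $r^{\Imc_G}$ the least relation containing $(\ind{src},\ind{src})$ and closed under the CRI. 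Since the CRI only alters the second coordinate, $r^{\Imc_G} = \{\ind{src}\} \times R$, where $R$ is exactly the set of vertices reachable from $\ind{src}$; so whenever $\ind{tgt}$ is not reachable, $\Imc_G$ is a model of $(\T,\A_G)$ in which $r(\ind{src},\ind{tgt})$ fails, witnessing $(\T,\A_G) \not\ent r(\ind{src},\ind{tgt})$.

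\textbf{Main obstacle and remark.} Because the TBox here is the CRI \emph{alone}, the query must be a role assertion $r(\ind{src},\ind{tgt})$; if one prefers a concept instance check, the same reduction works after adding the single inclusion $\exists r^- \ISA A$ and asking whether $A(\ind{tgt})$ is entailed, since $(\exists r^-)^{\Imc_G}$ then coincides with $R$. The step that I expect to require the most care is the left-to-right (completeness) direction: one must argue that $r^{\Imc}$ is forced to dominate the reachability relation in \emph{all} models and not merely in the canonical one, which is precisely where the induction on path length and the semantics of the CRI are needed. Everything else is routine, and the logspace-computability of $\A_G$ together with the \textsc{NLogSpace}-completeness of \textsc{st}-connectivity yields the claimed lower bound.
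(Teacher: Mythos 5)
Your reduction is correct, and it is exactly the argument the paper has in mind: the paper does not spell out a proof (it cites Artale et al., 2009) but explicitly notes that the fixed CRI $r \cdot s \ISA r$ forces $r$ to capture reachability along $s$-edges of a given graph, which is precisely your st-connectivity encoding with the seed assertion $r(\ind{src},\ind{src})$ and the least-fixed-point countermodel. Your remark that the check must be a role assertion (or else the TBox needs an extra axiom $\exists r^- \ISA A$) is also the right reading of the lemma as stated.
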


\subsection{Non recursive \newDL.} 
To identify FO-rewritable  fragments of \newDL 
it is natural to disallow $r \cdot s \ISA r$ by  restricting
cyclic dependencies between roles  occurring  in 
CRIs. 

\begin{definition}[\newDLnr TBoxes]	
	For a \newDL TBox $\T$, the \emph{recursion graph of \T} is
	the directed graph containing a node $v_A$ for each concept name $A$, and a
	node $v_r$ for each role name $r$ occurring in $\T$ and for each:
	\begin{compactitem}[-]
		\item $A_1 \ISA A_2 \in \T$, there exists an edge from $v_{A_2}$ to $v_{A_1}$;
		\item $t \ISA r \in \T$, there exists an edge from $v_{r}$ to $v_{t}$;
		\item $r \ISA p^- \in \T$, there exists an edge from $v_{p}$ to $v_{r}$;
		\item $A \ISA \exists r \in \T$, there exists an edge from $v_{r}$ to $v_{A}$;
		\item $\exists r \ISA A \in \T $, there exists an edge from $v_{A}$ to $v_{r}$;
		\item $r \cdot s \ISA t \in \T$, there exists an edge from $v_{t}$ to $v_{r}$ and to $v_{s}$.
	\end{compactitem}
	A role name $r$ is \emph{recursive in $\T$} if $v_r$ participates in a cycle 
	in the recursion graph of $\T$, and $t \cdot s \ISA r$ is
	\emph{recursive in $\T$} if $r$ is. 
	
	A \newDLnr TBox is a \newDL TBox \T  where no CRIs are recursive. 
\end{definition} 

Restricting CRIs to be non-recursive indeed guarantees
FO-rewritability. 

For a CQ $q$, we denote by 
$z^q$ an arbitrary but fixed variable not occurring in $q$; we will use such a
variable in the query rewriting rules through the rest of the paper. Additionally, 
we write $r^{(-)}(x,y) \in q$ if either $r(x,y) \in q$ or $r(y,x) \in q$;

\begin{definition}\label{def:spec}
	Let $\T$ be a \newDLnr TBox. 
	Given a pair $q,q'$ of CQs, we
	write $q \myspe q' $ whenever $q'$ is obtained by applying an
	\emph{atom substitution} $\theta$ or a \emph{variable substitution}
	$\sigma$ on $q$, where $\theta$ and $\sigma$ are as follows: 

		\begin{enumerate}[label=\bf S\arabic*., align=left, leftmargin=*]
			\item  \label{s:1} if $A_1\ISA A_2\in \T$ and $A_2(x)\in q$, then $\theta = [A_2(x) \mid A_1(x)]$;
			\item  if \label{s:2} $A\ISA \exists r\in \T$, $r(x, y)\in q$ and  $y$ is a non-answer variable occurring only once in $q$, then 
			$\theta = [r(x,y) \mid A(x)]$;
			\item  if \label{s:3} $\exists r \ISA A\in \T$ and $A(x)\in q$, then $\theta = [A(x) \mid r(x, z^q)]$;
			\item \label{s:4} if $r \ISA s \in \T$ and $s(x,y)\in q$, then $\theta = [s(x,y) \mid r(x,y)]$;
			\item \label{s:5} if $r \ISA s^{-}\in \T$ and $s(x,y)\in q$, then $\theta = [s(x,y) \mid r(y,x)]$;
			\item \label{s:6} if $t \cdot s \ISA r \in \T$ and $r(x,y)\in q$, then $\theta = [r(x,y) \mid \{t(x,z^q),s(z^q,y)\}]$;
			\item \label{s:7} if $x,y \in vars(q)$, then $\sigma= [x \mapsto y]$. 
		\end{enumerate}
	\smallskip \noindent
	We write $q \myspe^{*} q'$ if there is a finite sequence
	$q_0,\ldots,q_n$ of CQs such that $q=q_1,q'=q_n$, and $q_i {\myspe}
	q_{i+1}$ for all $0 \leq i < n$.
	\label{def:SpeTBox}
\end{definition}

By applying $\myspe$ 
exhaustively, we obtain a FO-rewriting of a given query $q$. 
\begin{definition} The \emph{rewriting of $q$ wrt. \T} is
	$\mathit{rew}(q,\T)= \{q\} \cup
	\{q'\mid q\myspe^* q'\}$.
\end{definition}
For any CQ $q$, $\mathit{rew}(q,\T)$ is a finite
query that can be effectively computed. 

\begin{lemma} 
	Let $\T$ be a \newDLnr TBox and let $q$ a CQ. 
	Each $q' \in \mathit{rew}(q, \T)$ is polynomially bounded in the size
	of $\T$ and $q$, and can be obtained in a polynomial number of steps. 
	\label{lemma:polybound}
\end{lemma}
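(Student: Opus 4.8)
The plan is to prove the two claims---polynomial size of every $q' \in \mathit{rew}(q,\T)$ and a polynomial bound on the number of $\myspe$-steps needed to obtain it---by a rule-by-rule analysis of how S1--S7 affect the number of atoms and variables, and then to use the non-recursion condition to tame the only genuinely size-increasing rule. First I would classify the rules by their effect on the pair $(\#\text{atoms},\,\#\text{vars})$: rules S1, S4, S5 replace one atom in place (same counts); S7 (variable unification) never increases either count and strictly decreases the number of variables; S2 turns a role atom into a concept atom, keeping the atom count but removing a variable; S3 is its near-inverse, keeping the atom count while adding the fresh variable $z^q$; and S6 is the unique rule that increases the number of atoms, replacing $r(x,y)$ by $t(x,z^q),s(z^q,y)$, adding one atom and one fresh variable. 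Hence the whole problem reduces to bounding the total effect of S6 (and of the fresh variables it and S3 introduce).

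Next I would exploit the structure of S6 together with the \newDLnr condition. In a CRI $t\cdot s\ISA r$ the second factor $s$ is simple, so the produced atom $s(z^q,y)$ carries a simple role; since simple roles are closed under S4/S5 (by the second condition of Definition~\ref{def:CRI}) and never occur as the consequent of a CRI, an $s$-atom can never itself trigger another S6. Thus an S6 application expands into one \emph{frozen} simple-role atom plus one \emph{spine} atom $t(x,z^q)$ that may be expanded further, and each spine expansion follows a CRI edge $v_r \to v_t$ of the recursion graph of $\T$. The key point is that, since $\T$ is \newDLnr, every CRI consequent is non-recursive, so no CRI edge lies on a cycle; contracting strongly connected components, each CRI edge strictly advances the topological order of the SCC-DAG, so any walk uses at most $|\T|$ CRI edges. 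Consequently a single original atom triggers at most $|\T|$ applications of S6 along its spine, producing at most $|\T|+1$ atoms, and summing over the $\le |q|$ atoms of $q$ yields an $O(|q|\cdot|\T|)$ bound on atoms; since each atom has at most two variables, the number of variables is $O(|q|\cdot|\T|)$ as well, establishing the polynomial size bound.

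The step I expect to be the main obstacle is making this ``spine/frozen'' picture rigorous across arbitrary derivations, because S6 interacts with S2 and S3: a frozen $s$-atom could in principle be turned into a concept atom by S2 and then back into a (possibly non-simple) role atom by S3, thereby \emph{reviving} it and creating a second expandable spine, which, if unchecked, would turn the linear unfolding into a branching---hence exponential---one. To rule this out I would maintain an invariant on the fresh linking variables: immediately after an S6 step the variable $z^q$ occurs in exactly the two atoms $t(x,z^q)$ and $s(z^q,y)$, and S2 (which deletes a use-once second argument) cannot fire on either while they share $z^q$; tracking how these private fresh variables are created, shared, and eliminated shows that each contributes only a bounded, acyclic unfolding, so no genuine branching arises and the $O(|q|\cdot|\T|)$ count survives. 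This bookkeeping on fresh variables, rather than the high-level non-recursion argument, is where the real work lies.

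For the bound on the number of steps, I would first note that the size bound just proved applies to \emph{every} query occurring in any derivation, so all intermediate queries have $O(|q|\cdot|\T|)$ atoms and variables. It then suffices to exhibit a well-founded, polynomially bounded measure that strictly decreases along a minimal-length derivation of $q'$. I would use a lexicographic measure whose primary component is the remaining CRI-unfolding budget (bounded by $O(|q|\cdot|\T|)$ and consumed, never replenished, by S6, thanks to non-recursion and the invariant above), with secondary components counting the number of atoms (decreased by atom-merging S7 steps), the number of variables, and the total length of the subsumption/role-inclusion chains still available to S1/S4/S5. The only subtlety is that plain inclusions may form equivalence cycles (e.g. $A_1\ISA A_2$ and $A_2\ISA A_1$) along which S1, S4, or S5 could loop; but such a loop returns to an already-seen query and can be deleted from a minimal derivation, so it does not affect the measure. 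With these components the measure strictly decreases at every step of a minimal derivation, bounding its length---and hence the number of steps needed to obtain $q'$---by a polynomial in $|q|$ and $|\T|$.
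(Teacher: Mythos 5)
Your strategy is recognizably close in spirit to the paper's (classify the rules, use the acyclicity of the recursion graph to bound how often {\bf S6} can fire, then a decreasing measure for the step count), but the step you yourself flag as ``where the real work lies'' is exactly where the argument breaks. Your freezing invariant is false. Rule {\bf S2}'s side condition concerns the \emph{second} argument of the atom being replaced: for the atom $s(z^q,y)$ produced by {\bf S6}, that argument is $y$, not the shared variable $z^q$. If $y$ was a private non-answer variable of the original atom $r(x,y)$ (the typical case), it is still private in $s(z^q,y)$ after the {\bf S6} step, so {\bf S2} fires on $s(z^q,y)$ despite the shared $z^q$, yielding $A(z^q)$ for any $A\ISA \exists s\in\T$; then {\bf S3} with any $\exists u \ISA A\in\T$ yields $u(z^q,z')$, where $u$ may be non-simple --- nothing in \newDLnr forbids this. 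Concretely, take the TBox with axioms $u_{i+1}\cdot s_{i+1}\ISA u_i$, $A_{i+1}\ISA \exists s_{i+1}$, $\exists u_{i+1}\ISA A_{i+1}$ for $0\le i<n$ (each $s_i$ simple, each $u_i$ non-simple; the recursion graph is acyclic, so this is \newDLnr) and the query $q(x)\gets u_0(x,y)$. At every level the simple-role atom whose tail is the current private variable gets revived into a fresh expandable non-simple atom, so the number of expandable atoms grows by one per level and the derived queries reach size $\Theta(n^2)$ --- polynomial, so the lemma survives, but strictly larger than your claimed $O(|q|\cdot|\T|)$ bound, and obtained precisely through the branching you claim cannot arise. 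A correct proof must either bound revivals directly (each one consumes the query's single private tail, which {\bf S6} merely transfers and {\bf S2}/{\bf S3} recycle, so growth is at most roughly quadratic in $|\T|$ per original atom) or, as the paper does, work with a global degree $\sum_{P(\vec{x})\in q} mpath(P)$ over the SCC-contracted recursion graph and an amortized counting argument; you have made neither argument.

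The step-count part inherits the same defect --- your ``CRI-unfolding budget, consumed and never replenished'' is defined via the broken invariant, and revivals do replenish expandable atoms --- and it has an independent flaw: under {\bf S3} the budget and the atom count are unchanged while the variable count \emph{increases}, so with your stated ordering (budget, atoms, variables, chains) the lexicographic measure increases; the chain/depth component would have to dominate the variable count, and even then steps inside an equivalence SCC keep every component fixed, so ``strictly decreases at every step of a minimal derivation'' is not true as stated. That residual looping is exactly what the paper handles with its separate bounded-repetition claim $(\ddagger)$ (at most $k^2$ same-degree rewritings inside a bag of size $k$); your ``delete loops from a minimal derivation'' remark points in the right direction but, without a bound on how many distinct same-measure queries exist, does not by itself yield a polynomial bound on derivation length.
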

\begin{proof}[Proof]
	Due to the non-recursiveness of the
	dependency graph and the restriction on simple roles, we show that we
	can assign to queries a (suitably bounded) degree that roughly corresponds to the
	number of rewriting steps that can be further applied. We 
	prove that for each $q'$ such that 
	$q \myspe^* q'$, the degree does not increase, and after
	polynomially many steps we will reach $q \myspe^* q''$ such that the
	degree strictly decreases. 
	
		We first define ${\cal G}_{acycl}$ as the acyclic version of the recursion graph of $\T$ in which nodes $n$ are labeled with a bag of predicates symbols, $bag(n)$, 
	and each maximal cycle in the recursion graph of $\T$ denotes a single node with a bag containing all predicates symbols participating in the cycle. 
	All other nodes are labeled with a bag consisting of single predicate symbol. The edges in ${\cal G}_{acycl}$ are obtained from the recursion graph, namely there is an edge between node $n$ and $n'$ if there exists an edge in the recursion graph between some $P \in bag(n)$ and $P' \in bag(n')$. 
	
	The function $mpath$ assigns a level to each node $n$ in ${\cal G}_{acycl}$ as follows:
	\begin{itemize}
		\item if $n$ has no outgoing edges, then $mpath(n)=0$;
		\item otherwise, $mpath(n)=max\{mpath(n')+1 \mid n \rightarrow n' \}$.
	\end{itemize}
	For a given query $q$, we define a function $\mathit{dgr}_{\T}(q)$ that,
	roughly,  bounds the number of rewriting steps that may be iteratively
	applied to it.  
	It is defined as follows:
	$$\mathit{dgr}_{\T}(q)= \underset{P(\vec{x}) \in q}{\sum} mpath(P)$$
	
	We will show that the application of the rules decreases the degree, except
	for some cases where the degree stays the same, but can only do so for 
	polynomially many rewriting steps (in the size of largest bag of ${\cal G}_{acycl}$).
	We show this bound before proving the main claim: 
	
	$(\ddagger)$ For each query of the form $q_1=q \cup \{P(\vec{x})\}$ such that
	$P$ participates in a cycle, then there are at most $k^2$ different queries of
	the form $q_2=q \cup \{P'(\vec{x'})\}$ that can be obtained by the rewriting
	rules and such that $\mathit{dgr}_{\T}(q_2)=\mathit{dgr}_{\T}(q_1)$,
	where $k$ is the size of the bag in ${\cal G}_{acycl}$ containing
	$P$ (unique bag containing $P$, since bags are triggered by largest cycles in $\T$).
	
	Let query $q_2$ be obtained by replacing $P(\vec{x})$ with $P'(\vec{x'})$ in
	$q_1$, where $\vec{x'}$ differes from $\vec{x}$ by at most one variable, $z^{q_1}$.
	Since $\mathit{dgr}_{\T}(q_2)=\mathit{dgr}_{\T}(q_1)$ it must be that $P,P'$
	belong to the same bag in ${\cal G}_{acycl}$.  
	If $P$ occurs in a cycle then, by the restriction of \newDL TBoxes, $P$ cannot be a non-simple role, 
	hence  $q_2$ is not obtained by applying {\bf S6}. Then, applying the axioms that
	trigger this cycle in $\T$, it must be that $q_2$ is obtained again after
	at most $k^2$ rewriting steps (number of distinct pairs of symbols in the bag). 
	
	Now that we have a bound on the number of times that the degree can stay the
	same for rewritings of a specific form, we can prove the lemma. 
	We will distinguish between the types of queries produced by the
	rules in Definition 3:
	\begin{enumerate}
		\item for rules {\bf S1-5}: $q \cup \{P(\vec{x}\}) \myspe q \cup \{P'(\vec{x'})\}$, and there is an arc between node labeled with $P$, and node labeled with $P'$, or they occur in the same bag in $\in{\cal G}_{acycl}$;
		\item for rule {\bf S6}: $q \cup \{r(x,y)\} \myspe q \cup \{t(x,z^q),s(z^q,y)\}$, where $z^q$ arbitrary fixed variable not occuring in $q$ and there exists arcs between the node labeled with $r$ and nodes labeled with $t$ and $s$;
		\item for rule {\bf S7:} $q(\vec{x}) \myspe \sigma(q(\vec{x}))$, where $\sigma$ is replaces one variable by another in $q$. 
	\end{enumerate}

	We now show that if $q_1 \myspe q_2$, then either \emph{(i)}
	$\mathit{dgr}_{\T}(q_2) < \mathit{dgr}_{\T}(q_1)$, or  \emph{(ii)} $\mathit{dgr}_{\T}(q_2) =\mathit{dgr}_{\T}(q_1)$ 
	if $q_1,q_2$ are as in $(\ddagger)$, and thus can only preserve the same degree for
	at most $k^2 \cdot |q_1|$ rewriting steps, or if $q_2$ is obtained by applying a substitution on $q_1$, which eventually leads to 
	a query with a unique variable.
	This will imply that, after at most $\mathit{dgr}_{\T}(q_1)\cdot (|q_1| \cdot k)^2$ steps,
	the degree will be zero and no more steps will be applicable.
	
	In what follows we show a proof by cases that matches cases 1--3 above.
	Firstly, for case $1$ above, if $P$ and $P'$ do not occur in same cycle, then $\mathit{dgr}_{\T}(q \cup \{P(\vec{x}\})) < \mathit{dgr}_{\T}(P'(\vec{x'}))$
	since there is an arc between $P$ and $P'$, hence $mpath(P) > mpath(P')$. The other subcase follows from $(\ddagger)$, therefore the degree of the queries obtained by  rules {\bf S1-5} decreases after at most $k^2 \cdot |q|$ rewriting steps.
	
	Next, we show for case $2$ above that: for each pair of queries  $q_1=q \cup \{r(x,y)\}$,  $q_2= q \cup \{t(x,z^q),s(z^q,y)\}$ such that $q_1 \myspe q_2$, we have that 
	$\mathit{dgr}_{\T}(q_2) < \mathit{dgr}_{\T}(q_1)$. 
	Since $q_2$ is obtaind by applying $t \cdot s \subsume r \in \T$, the claim
	follows immediately from the fact that $r$ cannot occur in a cycle and there
	must be an arc between node labeled $r$ and nodes labeled $t$ and
	$s$ in ${\cal G}_{acycl}$ ($s,t$ cannot belong to same bag due to the restriction of role inclusions between simple and non-simple roles).
	
	Lastly, for case $3$ above: for each pair of queries $q_1=q(\vec{x}) \myspe q_2=\sigma(q(\vec{x}))$, we have that $\mathit{dgr}_{\T}(q_2)=\mathit{dgr}_{\T}(q_1)$, however 
	in this case $vars(q_1) \subsetneq vars(q_2)$, hence such rule will eventually either reduce the size of $q_1$ and potentially make applicable 
	cases $1$ or $2$ hence  $dgr_{\T}(q_2) < dgr_{\T}(q_1)$ after at most $(k\cdot|q_1|)^2$ applications. 
	
	Therefore, we can conclude that each query $q' \in rew(\T,q)$ can be obtained after applying at most $\mathit{dgr}_{\T}(q)\cdot(|q|\cdot k)^2$ rewriting steps.
	
	We now argue the other part of the lemma, namely that each query in the rewriting has polynomial size. In case $1$ at most one new variable is introduced but the size of the query remains the same, and in case $2$ the size of the query increases by one, however only one of the newly introduced atoms (the non-simple role atom) may further trigger application of rule {\bf S6}, but only a polynomially bounded number of times, since the degree decreases. Therefore the size of each query in the rewriting is polynomially bounded. 
\end{proof}

The next result is shown analogously as for \dlLiteR \cite{CalvaneseGLLR07},
extended for new rule ${\bf S6}$. The full proof can be found in the appendix.

\begin{lemma}\label{lemma:PerfectRef}
	Let $\T$ be a \newDLnr TBox, $q$ a CQ. For every ABox $\A$ consistent with
	$\T$:
	\[cert(q,\T,\A) = \underset{q' \in \mathit{rew}(q,\T)}{\bigcup}
	cert(
	q',\emptyset,\A).\]
\end{lemma}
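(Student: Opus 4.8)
The plan is to follow the classical correctness argument for \dlLiteR query rewriting and adapt it to the new CRI rule {\bf S6}. The backbone is the \emph{canonical model} $\canmod$ of $(\T,\A)$: the interpretation obtained by chasing $\A$ with all axioms of $\T$ where, in addition to the usual \dlLiteR chase steps, each CRI $t\cdot s\ISA r\in\T$ is applied by adding an edge $(d_1,d_3)\in r^{\canmod}$ whenever $(d_1,d_2)\in t^{\canmod}$ and $(d_2,d_3)\in s^{\canmod}$. As for \dlLiteR, one first shows that $\canmod$ is a model of $(\T,\A)$ (here the consistency of $\A$ with $\T$ is used to guarantee that the $\mathbf{disj}$ axioms are not violated) and that it is \emph{universal}, i.e. it embeds homomorphically into every model of $(\T,\A)$ while preserving the individuals. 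Consequently $\vec a\in cert(q,\T,\A)$ iff $\canmod\models q(\vec a)$, so it suffices to prove that for every tuple $\vec a$ over $\indiv$, $\canmod\models q(\vec a)$ iff $\A\models q'(\vec a)$ for some $q'\in \mathit{rew}(q,\T)$.

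For the soundness inclusion ($\supseteq$) I would show that each single rewriting step is sound, i.e. if $q\myspe q'$ then $\I\models q'(\vec a)$ implies $\I\models q(\vec a)$ for every $\I\models\T$, whence $cert(q',\T,\A)\subseteq cert(q,\T,\A)$. Since every model of $(\T,\A)$ is a model of $(\emptyset,\A)$, we also have $cert(q',\emptyset,\A)\subseteq cert(q',\T,\A)$, and chaining the two inclusions over all $q'\in\mathit{rew}(q,\T)$ yields $\supseteq$. Soundness of the individual steps is verified rule by rule: {\bf S1}--{\bf S5} and {\bf S7} are exactly as in \dlLiteR, each using the corresponding inclusion axiom to convert a match of $q'$ into a match of $q$. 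The only genuinely new case is {\bf S6}: if $t\cdot s\ISA r\in\T$ and $q'$ replaces $r(x,y)$ by $\{t(x,z^q),s(z^q,y)\}$, then any match $\pi'$ of $q'$ in $\I$ gives $(\pi'(x),\pi'(z^q))\in t^\I$ and $(\pi'(z^q),\pi'(y))\in s^\I$; as $\I$ satisfies the CRI, $(\pi'(x),\pi'(y))\in r^\I$, so the restriction of $\pi'$ to $\vars(q)$ is a match of $q$ with the same answer tuple.

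For completeness ($\subseteq$) I would take $\vec a\in cert(q,\T,\A)$, fix a match $\pi$ of $q$ into $\canmod$, and show that it can be ``folded back'' into the ABox part by applying rewriting rules. The argument exploits the forest-like shape of $\canmod$: every anonymous element is introduced by a unique existential axiom $A\ISA\exists r$ as the $r$-child of its generator, while CRIs only add composite $r$-edges (with $r$ non-simple) whose intermediate witness is reached by a simple role $s$. I would introduce a well-founded measure on matches combining the depth reached by $\pi$ in the anonymous forest with the CRI-derivation complexity of the canonical edges used, and show that unless $\pi$ already maps $q$ entirely into individuals, some rule applies yielding $q\myspe q_1$ and a match $\pi_1$ of $q_1$ into $\canmod$ of strictly smaller measure. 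Concretely, a composite edge used for an atom $r(x,y)$ is undone by {\bf S6} (mapping the fresh $z^q$ to the intermediate witness), decreasing the derivation complexity; a deepest anonymous leaf $w$ in the image of $\pi$ is eliminated by first unifying via {\bf S7} all variables mapped to $w$, rewriting the concept and role atoms incident to $w$ via {\bf S1}, {\bf S3}, {\bf S4}, {\bf S5} to expose the generator axiom, and finally applying {\bf S2} to remove the witnessing atom together with $w$, decreasing the depth. When the measure is minimal, $\pi$ maps the resulting $q'$ into individuals, so $\A\models q'(\vec a)$.

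The main obstacle, and the only point where the argument genuinely departs from \dlLiteR, is the treatment of CRI-generated edges in the completeness direction: I must verify that every non-simple edge used by the match decomposes as a $t$-step followed by an $s$-step through a witness that is strictly simpler in the chosen measure, so that repeated applications of {\bf S6} terminate and interleave correctly with the leaf-folding steps. Here the syntactic restrictions of \newDL are essential---$s$ being simple guarantees that $s$-edges are never themselves produced by a CRI, and non-recursiveness of the CRIs bounds the length of the decomposition chains---precisely the properties underpinning the termination and polynomial-size bound already established in Lemma~\ref{lemma:polybound}. Making the measure explicit so that both the CRI-unfolding steps and the anonymous-leaf-folding steps provably decrease it is the technical heart of the proof.
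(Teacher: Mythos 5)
Your proposal is correct and follows the same overall strategy as the paper: both reduce certain answering to answering over the chase-constructed canonical model, both establish soundness rule by rule with {\bf S6} as the only genuinely new case (your semantic argument in an arbitrary model of $\T$ and the paper's chase-based argument are interchangeable here), and both prove completeness by folding a match of $q$ in the canonical model back toward the ABox. The one substantive difference is how the completeness induction is organized. You induct on a bespoke well-founded measure on matches (anonymous-forest depth combined with CRI-derivation complexity), and you correctly flag the definition and verification of that measure as the open technical heart. The paper avoids this issue entirely by inducting on the chase stratification instead: it defines the witness ${\cal G}_k$ of the match inside $chase_k(\T,\A)$ and pre-witnesses ${\cal G}_{k-i}$ at earlier stages (the assertions that generate the witness via later chase steps), and shows by induction on $i$ that each pre-witness is itself a witness for some $q' \in \mathit{rew}(q,\T)$, doing a case analysis on the axiom applied at each chase step (with $t \cdot s \ISA r$ handled by {\bf S6}, and $A \ISA \exists r$ requiring {\bf S7} followed by {\bf S2} when several atoms hit the fresh witness). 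Well-foundedness then comes for free from the chase-step index. If you want to keep your formulation, the cleanest instantiation of your measure is the multiset of chase stages at which the assertions in the image of the match were created: both your CRI-unfolding steps and your leaf-folding steps replace an atom mapped to a stage-$j$ assertion by atoms mapped to strictly earlier assertions, so the multiset ordering decreases --- which is precisely the paper's pre-witness induction in disguise.
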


Non-recursive CRIs preserve FO-rewritability of \dllite, but their addition 
is far from harmless. Indeed, unlike the extension with transitive roles, 
even non-recursive CRIs increase the complexity of testing KB consistency.

\begin{theorem}
	Consistency checking in \newDLnr is coNP-complete. 
\end{theorem}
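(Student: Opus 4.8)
The plan is to prove the two bounds separately, obtaining membership in \textsc{coNP} and \textsc{coNP}-hardness. I read the statement as referring to combined complexity, which is consistent with the FO-rewritability (hence $\mathsf{AC}^0$ data complexity) of query answering established in Lemmas~\ref{lemma:polybound} and~\ref{lemma:PerfectRef}: there the ABox is only \emph{assumed} consistent, and it is precisely the consistency test that turns out to be hard.

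For membership, I would show that \emph{in}consistency is in \textsc{NP}. The starting point is that a \newDLnr KB $(\T,\A)$ is inconsistent iff some disjointness axiom of $\T$ is violated in every model, equivalently iff $\T,\A \ent q$ for the Boolean ``clash query'' $q$ associated with that axiom, e.g.\ $\exists x.\,A(x)\wedge A'(x)$ for $\mathbf{disj}(A,A')$ and $\exists x,y.\,p(x,y)\wedge p'(x,y)$ for $\mathbf{disj}(p,p')$. Extending the correctness argument of Lemma~\ref{lemma:PerfectRef} to these clash queries (the rule \textbf{S6} is already accounted for), $\T,\A \ent q$ holds iff $\A \ent q'$ for some $q' \in \mathit{rew}(q,\T)$. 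By Lemma~\ref{lemma:polybound} every such $q'$ is of polynomial size and is reachable by a polynomial number of rewriting steps, even though $\mathit{rew}(q,\T)$ itself may be exponentially large. This yields the \textsc{NP} procedure: guess a disjointness axiom, guess the polynomially many rewriting steps producing a single disjunct $q'$, and guess a homomorphism $q'\to\A$; all three are verifiable in polynomial time. Hence inconsistency is in \textsc{NP} and consistency is in \textsc{coNP}.

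For hardness I would reduce an \textsc{NP}-complete problem to inconsistency, so that consistency becomes \textsc{coNP}-hard. The reduction mirrors the membership argument: since checking inconsistency amounts to finding a homomorphism of some rewritten clash query into $\A$, the goal is to make this induced matching problem as hard as Boolean CQ evaluation, which is \textsc{NP}-complete in combined complexity. Concretely, I would start from a $3$-SAT instance $\varphi$ (or directly from a hard Boolean CQ), place a fixed truth-value gadget in $\A$, and encode the clauses of $\varphi$ into $\T$ using a family of CRIs together with a single $\mathbf{disj}$ axiom, arranging that $\mathit{rew}$ of the corresponding clash query contains a disjunct whose match into $\A$ exists exactly when $\varphi$ is satisfiable. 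The non-determinism needed to range over truth assignments comes from the choice of homomorphism (and, at the level of rewriting, from the unification rule \textbf{S7} composed with the CRI rule \textbf{S6}), not from any disjunction in the logic, which remains Horn. Then $\varphi$ is satisfiable iff $(\T,\A)$ is inconsistent, giving \textsc{coNP}-hardness and, with membership, \textsc{coNP}-completeness.

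The main obstacle I expect is the hardness reduction, specifically forcing the induced clash query to be genuinely hard to match. The rewriting of a single disjointness is, without unification, essentially tree-shaped and hence easy to match, so the gadget must exploit the interplay of \textbf{S6} and \textbf{S7} (plus auxiliary roles and concepts) to build a rewritten query whose variable sharing encodes the clause structure of $\varphi$, and then one must verify both directions of correctness while respecting the non-recursiveness restriction of \newDLnr. The membership direction, by contrast, is routine once Lemmas~\ref{lemma:polybound} and~\ref{lemma:PerfectRef} are in place.
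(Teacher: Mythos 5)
Your membership argument is fine and is exactly the paper's: guess a disjointness axiom $\alpha$, guess a polynomial-length rewriting sequence producing one disjunct $q'_\alpha$ of the clash query (Lemma~\ref{lemma:polybound}), and check a match of $q'_\alpha$ in $\A$ (Lemma~\ref{lemma:PerfectRef}). The gap is in your hardness direction. Your plan is to make \emph{matching} hard: a single designed disjunct $q^*$, with variable sharing created by \textbf{S7} on top of \textbf{S6}, that maps into a truth-value gadget in $\A$ iff $\varphi$ is satisfiable. But inconsistency holds iff \emph{some} disjunct of the rewriting matches $\A$, and $\mathit{rew}(q_\alpha,\T)$ by definition contains every intermediate query along the way to $q^*$ --- in particular the queries \emph{before} the \textbf{S7} unifications that create the coordination. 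Since unification only shrinks the set of matches, those uncoordinated intermediates are strictly easier to match; they ask only that each clause gadget be matched independently, which any truth gadget rich enough to support your satisfiable direction will allow (a single 3-clause is always satisfiable). So the KB would come out inconsistent even when $\varphi$ is unsatisfiable, and the reduction breaks. Note also that without \textbf{S7} the expansions of a two-atom clash query stay of bounded treewidth (a ``theta graph'' of two paths sharing endpoints), whose matching is polynomial, so you cannot avoid \textbf{S7} and the intermediate-disjunct problem it brings. At bottom, the logic is Horn: the chase is deterministic, and hardness cannot be sourced from homomorphism choice into a fixed ABox gadget in the way you describe.

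The paper's reduction places the hardness elsewhere: the ABox is a single assertion $\{A_0(a)\}$, and the TBox $\T_\varphi$ does all the work inside the \emph{anonymous} part of the canonical model. Axioms $A_i \sqsubseteq \exists r_{x_i}$, $A_i \sqsubseteq \exists \bar{r}_{x_i}$, $\exists r_{x_i}^- \sqsubseteq A_{i+1}$, $\exists \bar{r}_{x_i}^- \sqsubseteq A_{i+1}$ with $\mathbf{disj}(r_{x_i},\bar{r}_{x_i})$ grow a full binary tree whose branches are exactly the truth assignments; role inclusions $r_{x_i} \sqsubseteq s_{c_j}$ (resp.\ $\bar{r}_{x_i} \sqsubseteq s_{c_j}$) mark which clauses each choice satisfies; a stratified family of non-recursive CRIs ($s_{c_j}\cdot t \sqsubseteq s^1_{c_j}$, $s^i_{c_j}\cdot t \sqsubseteq s^{i+1}_{c_j}$, $s^i_{c_j}\sqsubseteq s^*_{c_j}$) propagates this information down to the leaves, and a second family of CRIs with $\exists t_n \sqsubseteq \bot$ fires exactly at a leaf whose branch satisfies all clauses. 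Hence $(\T_\varphi,\{A_0(a)\})$ is inconsistent iff $\varphi$ is satisfiable, i.e., the complement of 3SAT reduces to consistency. The moral, and what your attempt misses, is that for this logic coNP-hardness must be engineered in the existentially generated (exponential-size) part of the chase, where exponentially many distinct ``extended types'' can be realized, not in query matching over the data.
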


\begin{proof}
	\emph{Upper-bound:} Similarly as for standard \dllite, 
	inconsistency checking can be reduced to UCQ
	answering, using a CQ $q_\alpha$ for testing whether each
	disjointness axiom $\alpha$ is violated. By  Lemmas \ref{lemma:polybound} and
	\ref{lemma:PerfectRef}, an NP procedure can guess one such
	$q_\alpha$, guess a $q_\alpha'$ in its rewriting, and evaluate $q_\alpha'$ over
	$\A$.
	
	\emph{Lower-bound:} We reduce the complement of 3SAT  to KB satisfiability. 
	Suppose we are given a conjunction $\varphi= c_1 \land \dots \land c_n$ of clauses of the form 
	$\ell_{i_1} \lor \ell_{i_2} \lor \ell_{i_3}$, where the $\ell_k$ are literals, i.e., propositional variables or their negation. 
	Let $x_0, \dots, x_m$ be all the propositional variables occurring in $\varphi$.

	In order to encode the possible truth assignments of each variable  $x_i$, we take two fresh roles $r_{x_i}$ and ${\bar{r}}_{x_i}$, intended to be disjoint. 
	We construct a \newDLnr TBox $\T_{\varphi}$ containing, for every  $0 \leq i \leq m$, the following axioms:
	\begin{align*}
		\mathbf{disj}(r_i, &{\bar{r}}_{x_i}), &A_i &\subsume \exists r_{x_i} \sqcap \exists {\bar{r}}_{x_i}, &\exists r_{x_i}^- &\subsume A_{i+1},   \\
		\exists ({\bar{r}}_{x_i})^-  &\ISA A_{i+1}, & r_{x_i} &\subsume t,  & {\bar{r}}_{x_i} &\subsume t
	\end{align*}
	These axioms have a model that is a full binary tree, rooted at $A_0$ and
	whose  edges are labeled with the role $t$, and with different combinations
	of the roles  $r_i$ and ${\bar{r}}_{x_i}$. Intuitively, each path represents
	a  possible variable truth assignment. Further, $\T_\varphi$ contains axioms relating each variable assignment with the clauses it satisfies, using roles $s_{c_1}, \dots, s_{c_n}$. More precisely, we have the following role inclusions for $0\leq i \leq m$, and $1\leq j\leq n$:
	\begin{align}
		r_{x_i} &\subsume s_{c_j}, \quad  \text{ if } x_i \in c_j  & 
		\bar{r}_{x_i} &\subsume s_{c_j},  \quad \text{ if } \neg x_i \in c_j \label{eq:4}
	\end{align}
	To encode the evaluation of all clauses, we have axioms propagating down the tree all clauses  satisfied by some assignment. Note that we could do this easily using a CRI such as $s_{c_j} \cdot t 
	\ISA s_{c_j}$. However, this would need a recursive role $s_{c_j}$. Since the depth of the assignment tree is bounded by $m$, we can encode this (bounded) propagation using at most $m$ roles $s^i_{c_j}$ ($1\leq i \leq n$) for each clause $c_j$, which will be declared as subroles of another role $s^*_{c_j}$. 
	For  $1 \leq j \leq n$ and $1 \leq i < m$,  we have the following CRIs: 
	\begin{align*}
		s_{c_j} \cdot t &\subsume s^1_{c_j}  &s^{i}_{c_j} \cdot t &\subsume s^{i+1}_{c_j} &	s^i_{c_j} &\subsume s^*_{c_j}
	\end{align*}
	Thus, if  $c_j$ is satisfied in a $t$-branch of the assignment tree,  its leaf will have an incoming  $s^*_{c_j}$ edge. Now, in order to encode that there is at least one clause that is not satisfied, we need to forbid the existence of a leaf satisfying the concept $\exists (s^*_{c_1})^- \sqcap \dots \sqcap \exists (s^*_{c_n})^-$.  This  cannot be straightforwardly written in \newDLnr, 
	but we resort again to CRIs to propagate information: 
	\begin{align}
		\exists (s^*_{c_1})^- & \subsume \exists t_1 &
		s^*_{c_k} \cdot t_1 &\subsume p_k^1, \, 2 \leq k \leq n 
	\end{align}
	Next, for $2 \leq i \leq n, \, i < k \leq n$ we have the following:
	\begin{align}
			\exists (p_i^{i-1})^- &\subsume \exists t_i &
		p_k^{i-1} \cdot t_i &\subsume p_k^i
	\end{align}
	By adding the axiom
	$\exists t_n \subsume \bot$, we obtain the required restriction. 
	In the appendix we prove that $\varphi$ is
	unsatisfiable iff 
	$(\T_\varphi,\{ A_0(a)\})$ is satisfiable.
\end{proof}

\begin{theorem} 
	CQs over \newDLnr KBs are FO-rewritable. The complexity of answering CQs over consistent \newDLnr KBs is in $AC_0$ in data, and \textit{NP-complete} in combined complexity.
\end{theorem}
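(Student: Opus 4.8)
The plan is to derive all three claims from the rewriting apparatus already in place. For FO-rewritability, the workhorse is Lemma~\ref{lemma:PerfectRef}: whenever $\A$ is consistent with $\T$ we have $cert(q,\T,\A)=\bigcup_{q'\in\mathit{rew}(q,\T)}cert(q',\emptyset,\A)$, and by Lemma~\ref{lemma:polybound} the set $\mathit{rew}(q,\T)$ is finite with each member of polynomial size. Hence the UCQ $Q_{\mathrm{cons}}=\bigvee_{q'\in\mathit{rew}(q,\T)}q'$ is a bona fide first-order query that returns the certain answers over every consistent ABox.

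The remaining gap is the inconsistent case, in which every tuple of individuals is a certain answer. I would close it by also rewriting the standard consistency test. As in the proof of the preceding consistency theorem, $(\T,\A)$ is inconsistent exactly when one of the finitely many Boolean CQs $q_\alpha$ (one per disjointness axiom $\alpha$) has a match; each $q_\alpha$ is itself FO-rewritable by the same two lemmas, so their rewritings combine into a fixed Boolean UCQ $Q_{\mathrm{inc}}$ detecting inconsistency. The full rewriting is then
\[
Q_\T = Q_{\mathrm{cons}} \vee \bigl(Q_{\mathrm{inc}}\wedge\gamma(\vec{x})\bigr),
\]
where $\gamma(\vec{x})$ is the first-order guard asserting that each answer variable ranges over an element of $\indiv$. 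On consistent $\A$ the disjunct $Q_{\mathrm{inc}}$ is false and $Q_\T$ agrees with $Q_{\mathrm{cons}}$; on inconsistent $\A$ it returns all tuples of individuals, as required. The $AC_0$ data-complexity bound is then immediate: $Q_\T$ is a fixed first-order formula when $\T$ and $q$ are fixed, and evaluating a fixed FO query over a finite structure (here $\A$ read as an interpretation) is in $AC_0$ by the classical correspondence between FO and $AC_0$.

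For combined complexity I restrict, as the statement does, to KBs already known to be consistent, so Lemma~\ref{lemma:PerfectRef} applies directly. The NP upper bound is a guess-and-check: by Lemma~\ref{lemma:polybound} every $q'\in\mathit{rew}(q,\T)$ is reachable from $q$ by polynomially many $\myspe$-steps and has polynomial size, so a nondeterministic procedure can guess such a derivation, guess a candidate match $\pi\colon\term(q')\to\indiv$, and verify in polynomial time that $\pi$ is a match whose restriction to the answer variables yields the candidate answer. The matching NP-hardness is inherited from plain conjunctive query evaluation: setting $\T=\emptyset$ specializes the problem to CQ evaluation over a relational database, which is already NP-hard in combined complexity (Chandra--Merlin).

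The one point requiring care throughout is the interaction with consistency: Lemma~\ref{lemma:PerfectRef} holds only for consistent ABoxes, so the inconsistency handling must be packaged separately (into $Q_{\mathrm{inc}}$ for the rewriting, and assumed away for the NP bound), so that the coNP cost of deciding consistency established earlier does not leak into the clean $AC_0$ and NP statements. Keeping this separation explicit is the main subtlety; the rest is an assembly of the two rewriting lemmas with the standard FO-to-$AC_0$ and Chandra--Merlin results.
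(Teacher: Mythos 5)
Your proposal is correct and takes essentially the same route as the paper's (very terse) proof: FO-rewritability and the $AC_0$ data bound come from Lemma~\ref{lemma:PerfectRef} with sizes controlled by Lemma~\ref{lemma:polybound}, NP membership by guessing a rewriting together with a match and verifying in polynomial time, and NP-hardness inherited from plain relational CQ evaluation. Your additional $Q_{\mathrm{inc}}$ construction for possibly inconsistent ABoxes is a tightening the paper omits (its proof implicitly restricts everything to consistent KBs); just note that to invoke Lemma~\ref{lemma:PerfectRef} on the queries $q_\alpha$ one must rewrite with respect to the positive inclusions only, since the lemma presupposes consistency and $(\T_p,\A)$ is always consistent.
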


The FO-rewritability and data complexity follow from Lemma \ref{lemma:PerfectRef}, while the  NP-hardness in combined complexity is inherited from  CQs over plain relational databases. The NP membership follows from Lemma \ref{lemma:polybound} and the fact that 
guessing a rewriting, it is possible to verify in polynomial time if it has a match over the ABox.

\subsection{Recursion-safe \newDL} 

Additionally to the increased complexity, 
\newDLnr has another relevant limitation:
it cannot express  CRIs like $\role{occursIn} \cdot \role{locatedIn} \subsume
\role{occursIn} $ as we need in our motivating example. 
We introduce another extension of \newDL that allows for 
CRIs with some form of controlled recursion. 

\begin{definition}[Recursion safe \newDL]
In a \emph{recursion safe} \newDL TBox all CRIs $r_1\cdot s \ISA r_2 \in \T$ satisfy: 
\begin{itemize}
	\item If $r_2$ participates in some cycle  in the recursion graph of $\T$, then the cycle has length at most one, and $r_1 = r_2$.   
	\item There is no axiom of the form $A \ISA \exists t \in \T$ with $t
          \ISA_\T^{\sf s} s $ or   $t \ISA_\T^{\sf s} s^- $, 
where $\ISA_\T^{\sf s}$ denotes the reflexive and transitive
closure of the simple inclusions in a  \newDL TBox $\T$, that is, of the relation $s_1 \ISA s_2  \in \T$
with  $s_2 \in \nrsimple$.
	\end{itemize}

\end{definition}

The key idea behind  recursion safety is that every 
recursive CRI is `guarded' by a simple role that is not existentially 
implied. For query answering, we can assume that only ABox individuals are
connected by these guarding roles, and thus CRIs only `fire' close to the
ABox (that is, each
pair in the extension of a recursive roles has at least one individual).  In fact, we show below that every consistent recursion-safe KB has a model
where both conditions hold.

\begin{example}
$\K_e$ is recursion safe, since
$ \role{occursIn} \cdot \role{locatedIn} \ISA \role{occursIn}$ is the only CRI,
and $\role{locatedIn}$ 
is not implied by any existential axiom in 
  $\T_e$.
\end{example}


\subsubsection{Reasoning in recursion safe \newDL.}

Standard reasoning problems like consistency checking and answering instance
queries are tractable for recursion safe \newDL KBs. In fact, for a given KB,  
we can build a polynomial-sized interpretation that is a model whenever the KB
is consistent, and that can be used for testing entailment of assertions and
of disjointness axioms.


\begin{definition}
	Let $(\T,\A)$ be a recursion safe \newDL KB. We define an
        interpretation $\Smod{\T}{\A}$ 
as follows. 
As domain we use the individuals in $\A$, 
fresh individuals $c_{ar}$ that serve as $r$-fillers for individual $a$, 
and fresh individuals $c_r$ that serve as shared $r$-fillers for the objects
that are not individuals in $\A$. That is, $\Delta^{\Smod{\T}{\A}} = D_0 \cup
D_1 \cup D_2$, where 
		\begin{align*}
	&D_0 = ind({\A}), \\
	&D_1 = \{ c_{ar} \mid a \in D_0, r \text{ occurs on the rhs of a CI in }\T \}, \\
	&D_2 = \{ c_r \mid  r \text{ occurs on the rhs
          of a CI in }\T  \}.
\end{align*}
   The interpretation function has 
   $a^{\Smod{\T}{\A}} = a$ for each $a \in \Delta^{\Smod{\T}{\A}}$, and
   assigns to each concept name $A$ and each role name $r$ in $\Sigma_{\T}$
   the minimal set of the form $A^{\Smod{\T}{\A}} \subseteq
   \Delta^{\Smod{\T}{\A}}$, $r^{\Smod{\T}{\A}} \subseteq
   \Delta^{\Smod{\T}{\A}} \times \Delta^{\Smod{\T}{\A}}$ such that the
   following conditions hold, for all $A \in \nc$, $B$ a basic concept, and $r,r_1,r_2,s,t \in \nr$:
	\begin{enumerate}
		\item  if $A(a) \in \A$ then $a \in A^{\Smod{\T}{\A}}$, 
 and if $r(a,b) \in \A$ then $(a,b) \in r^{\Smod{\T}{\A}}$.
		\item  If $B \ISA \exists r \in \T$, $a \in B^{\Smod{\T}{\A}} \cap D_0 
                  $ then $(a, c_{ar}) \in
                  r^{\Smod{\T}{\A}}$.
		\item  if $B \ISA \exists r \in \T$, $d \in B^{\Smod{\T}{\A}} \cap (D_1 \cup D_2)$ then  $(d, c_r) \in r^{\Smod{\T}{\A}}$.
		\item  if $B \ISA A \in \T$, $d \in  B^{\Smod{\T}{\A}}$
                  then $d \in A^{\Smod{\T}{\A}}$. 
		\item if  $r_1 \ISA r_2 \in \T$,  $(a,b) \in r_1^{{\Smod{\T}{\A}}}$ then $(a,b) \in r_2^{{\Smod{\T}{\A}}}$.
		\item if $r_1 \ISA r_2^- \in \T$, $(a,b) \in r_1^{{\Smod{\T}{\A}}}$ then $(b,a) \in r_2^{{\Smod{\T}{\A}}}$.
		\item  if $r \cdot s \ISA t \in \T$, $(a,b) \in r^{{\Smod{\T}{\A}}}$ and $(b,c) \in s^{{\Smod{\T}{\A}}}$ then
		$(a,c) \in t^{{\Smod{\T}{\A}}}$.
	\end{enumerate}
\label{def:ABoxExt}
\end{definition}

For $\Smod{\T}{\A}$, we can show the following useful
properties:

\begin{proposition}
	Let $\T = \T_p \cup \T_n$ be a recursion safe \newDL TBox, where $\T_p$ contains only positive inclusions, and $\T_n$ contains only disjointness axioms. Then, for every ABox $\A$:
	\begin{enumerate}
		\item [$\bf P1$] \label{C1} If $(\T,\A)$ is consistent, then
                  $\Smod{\T}{\A} \ent (\T, \A)$. 
	\item [$\bf P2$] \label{C2}  $(\T,\A)$ is inconsistent iff  $\Smod{\T}{\A}
                  \not\models \alpha$ for some $\alpha \in \T_n$.
		\item [$\bf P3$] \label{C3} If $(\T,\A)$ is consistent and $q$ is an
                  instance query, then $cert(q,\T,\A) = ans(q,\Smod{\T}{\A})$. 
	\end{enumerate}
\label{prop:recsafeABoxExt}
\end{proposition}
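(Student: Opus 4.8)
The plan is to prove P1 first and derive P2 and P3 from it. The basic observation, which holds regardless of consistency, is that $\Smod{\T}{\A}$ satisfies $\A$ and every positive inclusion of $\T_p$: conditions 1--7 of Definition~\ref{def:ABoxExt} are precisely the closure rules for $\A$ and for the positive inclusions and CRIs of $\T_p$, and minimality of the extensions guarantees that each such axiom is satisfied. Thus the only axioms $\Smod{\T}{\A}$ can possibly fail are the disjointness axioms of $\T_n$. This already yields the forward direction of P2: if $(\T,\A)$ is inconsistent then, since $\Smod{\T}{\A}\models(\T_p,\A)$, it cannot satisfy all of $\T_n$ (else it would be a model of $(\T,\A)$), so $\Smod{\T}{\A}\not\models\alpha$ for some $\alpha\in\T_n$; the converse direction of P2 is the contrapositive of P1.

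For P1 I would compare $\Smod{\T}{\A}$ with the canonical model $\mathcal{U}$ of $(\T_p,\A)$, obtained by chasing the positive inclusions and closing under the CRI rule. As for \dlLiteR (now extended with the CRI-closure rule), $\mathcal{U}$ is \emph{universal}: it admits a homomorphism that is the identity on $\indiv$ into every model of $(\T_p,\A)$, and such homomorphisms preserve membership in basic concepts and in roles. Hence any disjointness axiom violated in $\mathcal{U}$ is violated in every model, witnessing inconsistency. I would next define the \emph{folding} $f\colon\mathcal{U}\to\Smod{\T}{\A}$ that is the identity on $\indiv$, maps the anonymous $r$-successor of an individual $a$ to $c_{ar}$, and collapses every deeper anonymous $r$-successor to the shared filler $c_r$, and check that $f$ is a homomorphism. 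The decisive ingredient is a \emph{faithfulness} lemma asserting the converse: for every $d\in\Delta^{\Smod{\T}{\A}}$ and every $w$ with $f(w)=d$, the element $w$ satisfies in $\mathcal{U}$ exactly the basic concepts (and, over $\indiv$, the roles) that $d$ satisfies in $\Smod{\T}{\A}$. Given faithfulness, a disjointness violation in $\Smod{\T}{\A}$ at some $d$ transfers to a genuine violation at a preimage $w$ in $\mathcal{U}$, hence to every model, proving the contrapositive of P1.

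The hard part will be the faithfulness lemma, and this is exactly where recursion safety enters. The folding is sound because in the plain \dlLiteR chase the type of an anonymous element is determined solely by the role that generated it, so all $r$-successors may be merged without gaining or losing concept memberships; the risk is that CRIs could make an $r$-successor's type depend on its depth or on its siblings. I would dispose of non-recursive CRIs by the bounded, acyclic propagation already analysed for \newDLnr, and treat a recursive CRI $r\cdot s\ISA r$ with the two recursion-safety conditions: since its guard $s$ is not existentially implied (no $A\ISA\exists t$ with $t\ISA_\T^{\sf s} s$ or $t\ISA_\T^{\sf s} s^-$), no $s$-edge is ever created among anonymous elements, so the CRI can fire only when its middle $s$-step lies inside the ABox. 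Consequently recursive roles relate only pairs containing an individual and never propagate new facts purely within the anonymous part. This is mirrored by the split of the domain: first-level fillers $c_{ar}$ are kept per individual, because adjacent to the ABox the guard edges (and hence recursive CRIs) may still act, whereas from the second level on no guard edge is reachable, types are role-determined, and the shared fillers $c_r$ are sound. An induction on the stages of the fixpoint then shows that the anonymous part of $\Smod{\T}{\A}$ is type-isomorphic, via $f$, to that of $\mathcal{U}$, establishing faithfulness.

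Finally, P3 follows from P1 together with universality and faithfulness; here certain answers, and hence the relevant answers in $\Smod{\T}{\A}$, range over $\indiv$. For $cert(q,\T,\A)\subseteq\ans(q,\Smod{\T}{\A})$ it suffices that, by P1, $\Smod{\T}{\A}$ is a model of $(\T,\A)$, so every certain answer is an answer in it. For the converse, if a tuple $\vec a$ of individuals answers the instance query $q$ in $\Smod{\T}{\A}$, then faithfulness (being the identity on $\indiv$) yields the same atomic fact in $\mathcal{U}$, and universality propagates it to every model of $(\T_p,\A)$, in particular to every model of $(\T,\A)$, so $\vec a\in cert(q,\T,\A)$.
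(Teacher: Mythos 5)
Your proposal is correct and rests on the same core insight as the paper, but the decomposition is genuinely different. The paper proves a single type-containment claim directly against an arbitrary model $\I$: for every $d\in\Delta^{\Smod{\T}{\A}}$ there is $e\in\Delta^{\I}$ whose set of basic concepts (and, for pairs, of roles) includes that of $d$; this is shown by case analysis over $D_0, D_1, D_2$ and the closure conditions defining $\Smod{\T}{\A}$, and P1--P3 are then read off from that claim in exactly the way you read them off your lemmas. You instead factor the comparison through the canonical model: universality of the chase (which the paper also establishes in its appendix, so you can cite it rather than reprove it), a folding homomorphism $f\colon\mathcal{U}\to\Smod{\T}{\A}$, and a faithfulness lemma asserting type \emph{equality} between an element and its $f$-preimages. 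Your route is more modular and proves a stronger intermediate fact than the paper's containment claim, which makes P3 essentially immediate; the paper's route avoids introducing the folding and handles arbitrary models in one pass. Both arguments turn on the same use of recursion safety: guard roles are never existentially implied, so no guard edge ever appears in the anonymous part, CRIs fire only adjacent to the ABox, and this is precisely what makes the split into per-individual fillers $c_{ar}$ and shared fillers $c_r$ sound.

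Two points to repair, neither fatal. First, you propose to ``dispose of non-recursive CRIs by the bounded, acyclic propagation already analysed for \newDLnr''. That analysis concerns termination of query rewriting and does not by itself give faithfulness: an acyclic CRI firing deep inside the anonymous part would still make the types of same-role successors depend on their context, and collapsing them onto the shared filler $c_r$ would then be unsound. What saves you is that the second recursion-safety condition is imposed on \emph{all} CRIs $r_1\cdot s\ISA r_2$ of the TBox, not only the recursive ones, so the guard argument you give for $r\cdot s\ISA r$ applies verbatim to every CRI; this uniform treatment is also how the paper's case analysis proceeds, and with it your faithfulness lemma goes through. Second, a cosmetic gap: your faithfulness statement quantifies over preimages, so you should note explicitly that every element of $\Smod{\T}{\A}$ carrying a nonempty type lies in the image of $f$ (unused fillers have empty extensions), so that a disjointness violation always has a preimage to transfer to $\mathcal{U}$; this falls out of the same induction on the fixpoint stages.
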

\newcommand{\tp}{\mathsf{tp}}
\begin{proof}[Proof (sketch)] 
  To prove  $\bf P1$, we assume that $(\T,\A)$ is consistent. 
Verifying that  $\Smod{\T}{\A}$ satisfies all but the disjointness axioms is
easy from the definition of $\Smod{\T}{\A}$. 
Let $\I$ be an arbitrary model of $(\T,\A)$. 
For  $d,d' \in \Delta^{\I}$, let $\tp_{\I} (d) = \{B \mid d \in B^{\I}\}$ the set of basic concepts satisfied at $d$ in $\I$, and
$\tp_{\I}(d,d') = \{r \mid (d,d') \in r^{\I} \}$, the set of roles connecting $d$ and $d'$ in $\I$. 
The following claim shows a key property of $\Smod{\T}{\A}$. The proof of the claim can be found in the appendix.
	\begin{claim}
	 For any given $d \in \Delta^{\Smod{\T}{\A}}$ \emph{(i)} there exists $e \in \Delta^{\I}$ such that $\tp_{\Smod{\T}{\A}}(d) \subseteq \tp_{\I}(e)$ and  \emph{(ii)} for each $d' \in \Delta^{\Smod{\T}{\A}}$ such that $\tp_{\Smod{\T}{\A}} (d,d') \neq \emptyset$ we have that there exists $e' \in \Delta^{\I}$ such that $\tp_{\Smod{\T}{\A}} (d,d') \subseteq \tp_{\I} (e,e')$.
	 \label{pr:1}
	\end{claim}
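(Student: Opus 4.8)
The plan is to prove both parts simultaneously by induction on the stage at which facts enter the least fixpoint defining $\Smod{\T}{\A}$ (Definition~\ref{def:ABoxExt}). Concretely, I would exhibit for every $d$ a witness $e\in\Delta^{\I}$ with $\tp_{\Smod{\T}{\A}}(d)\subseteq\tp_{\I}(e)$ such that every $\Smod{\T}{\A}$-edge leaving $d$ is matched in $\I$ out of this same $e$. One cannot hope for a single global homomorphism $\Delta^{\Smod{\T}{\A}}\to\Delta^{\I}$, because the elements of $D_2$ are \emph{shared} anonymous $r$-fillers: two distinct predecessors of some $c_r$ may be mapped to elements of $\I$ whose $r$-successors have incomparable types. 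This is exactly why the statement is phrased locally (a witness $e$ per $d$, and a witness $e'$ per neighbour $d'$), and the induction is allowed to re-choose witnesses for anonymous elements along different paths.

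For the base case, I would map each ABox individual $a\in D_0$ to itself (standard names). Since $\I\models\A$ and $\I$ is closed under the positive inclusions $\T_p$, every basic concept derived for $a$ by conditions (1) and (4) of Definition~\ref{def:ABoxExt}, and every $\exists r$ it acquires through condition (2), also holds at $a$ in $\I$, giving $\tp_{\Smod{\T}{\A}}(a)\subseteq\tp_{\I}(a)$. For an anonymous element — either $c_{ar}\in D_1$ created by condition (2) from a predecessor, or $c_r\in D_2$ created by condition (3) from a predecessor $d$ — the inductive hypothesis supplies a witness $e_0$ for that predecessor; since the element was created because $d$ satisfied some $B$ with $B\ISA\exists r\in\T$, and $B\in\tp_{\I}(e_0)$, we get $\exists r\in\tp_{\I}(e_0)$, so I may pick any $r$-successor $e$ of $e_0$ in $\I$. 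As $e$ is an $r$-filler, $\exists r^-\in\tp_{\I}(e)$, and closure of $\I$ under $\T_p$ makes $e$ satisfy the entire $\T_p$-closure of $\{\exists r^-\}$, which is a superset of the minimal type $\Smod{\T}{\A}$ assigns to that filler; this discharges part~(i). The forward edges out of the anonymous element, generated by conditions (3), (5), (6), are matched by repeating this successor choice in $\I$, supplying the witnesses $e'$ needed in part~(ii).

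The main obstacle is condition (7), i.e.\ the CRIs, and in particular the \emph{recursive} ones, which are the reason plain \dlliter arguments do not transfer. Here recursion safety is used decisively. For a recursive CRI $r\cdot s\ISA r$, the guard $s$ is simple and, by the second recursion-safety condition, is not existentially implied; propagating this through $\ISA_\T^{\sf s}$ shows that every role whose extension can feed $s$ is likewise never produced by an existential axiom. Consequently all $s$-edges, hence all edges that a recursive CRI can fire, lie within $D_0$, and every recursive-CRI edge $(x,z)$ has target $z\in D_0$. Such edges therefore never enrich the type of an anonymous element, so they do not disturb the successor-matching above; and when such an edge connects two individuals $d,d'\in D_0$, the identity witness $(d,d')$ works, because the generating $r$- and $s$-edges already hold in $\I$ by the inductive hypothesis and $\I$ satisfies the CRI, whence $(d,d')\in r^{\I}$. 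Non-recursive CRIs only ever fire in the forward direction and are absorbed into the successor choice exactly like the role inclusions of conditions (5)--(6).

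Finally, part~(ii) is read off from the same witness $e$ used in part~(i): for each neighbour $d'$ with $\tp_{\Smod{\T}{\A}}(d,d')\neq\emptyset$, the edge-matching maintained by the induction provides a single $e'$ with $(e,e')\in r^{\I}$ for every role $r$ connecting $d$ to $d'$, giving $\tp_{\Smod{\T}{\A}}(d,d')\subseteq\tp_{\I}(e,e')$. I expect the most delicate point to be the bookkeeping for several roles holding simultaneously of the same pair $(d,d')$ (role names, inverses from conditions (5)--(6), and CRI-derived roles), since a single $e'$ must realize all of them at once; recursion safety again settles this, because the roles involved are either simple ABox roles, realized by the identity map, or forward existential roles, realized by one common successor in $\I$.
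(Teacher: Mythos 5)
Your overall strategy is the same as the paper's: a stratified induction over the fixpoint construction of $\Smod{\T}{\A}$, identity witnesses for $D_0$, existential successors in $\I$ as witnesses for $D_1\cup D_2$, and recursion safety to confine guard edges to $D_0$. However, your treatment of CRIs has a genuine gap: the claims that all edges involved in a CRI firing ``lie within $D_0$'', that such firings ``never enrich the type of an anonymous element'', and that non-recursive CRIs ``only ever fire in the forward direction'' are all false. The culprit is condition (6) of Definition~\ref{def:ABoxExt} (inverse role inclusions, normal form (v)): an anonymous filler can acquire an edge pointing \emph{back into the ABox}, and a CRI can then fire \emph{from} that anonymous element. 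Concretely, take $\T=\{B\ISA\exists p,\ p\ISA r^-,\ r\cdot s\ISA r\}$ with $s$ simple and $r$ non-simple, and $\A=\{B(a),\,s(a,b)\}$. This $\T$ is recursion safe: the guard $s$ is not existentially implied ($p$ is existentially implied, but $p$ is not $\ISA_\T^{\sf s}$-below $s$ or $s^-$, since $p\ISA r^-$ is not a simple inclusion), and the only cycle is the self-loop on $r$ with matching left role. In $\Smod{\T}{\A}$, condition (2) gives $(a,c_{ap})\in p^{\Smod{\T}{\A}}$, condition (6) gives $(c_{ap},a)\in r^{\Smod{\T}{\A}}$, and condition (7) then fires along the ABox guard edge $(a,b)\in s^{\Smod{\T}{\A}}$, yielding $(c_{ap},b)\in r^{\Smod{\T}{\A}}$. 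So a recursive CRI produces a pair with anonymous source and target $b\neq a$; part~(ii) for the pair $(c_{ap},b)$ is covered neither by your identity-witness case ($c_{ap}\notin D_0$) nor by your successor-matching ($b$ is not a successor created for $c_{ap}$). Worse, in the non-recursive variant $\{B\ISA\exists p,\ p\ISA r_1^-,\ r_1\cdot s\ISA r_2\}$ the same firing puts $\exists r_2$ into $\tp_{\Smod{\T}{\A}}(c_{ap})$, and $\exists r_2$ is \emph{not} in the $\T_p$-closure of $\{\exists p^-\}$: a $p$-filler has an $r_2$-successor only thanks to the ABox assertion $s(a,b)$, not by TBox reasoning. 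So your part-(i) argument, which bounds the type of an anonymous filler by that closure, also breaks. Recursion safety pins the guard edge inside $D_0$, but it places no constraint on the \emph{left} role of a CRI, and that is exactly where anonymous elements sneak in.

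The gap is repairable with your own local-witness technique, and it is in fact the case the paper's proof labors over (its sub-case ``$d_1\in D_1$, $d_1=c_{d_2t^-}$''): the existential witness $e$ chosen for $c_{ap}$ matches the backward pair $(c_{ap},a)$ by $(e,a)$ --- crucially with the ABox individual $a$ itself as second component, because $\I\models p\ISA r_1^-$ --- and then $(a,b)\in s^\I$ together with $\I$ satisfying the CRI gives $(e,b)\in r_2^\I$, supplying both the missing part-(ii) witness and the missing $\exists r_2$ for part~(i). But as written, your proposal rules these configurations out by assertion instead of handling them, so the induction it describes does not go through.
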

Towards a contradiction, assume there is $\alpha = {\bf disj}(B_1,B_2) \in \T$
such that $\Smod{\T}{\A} \not \models \alpha$; the case of role disjointness
axioms is analogous.  Then there is $d \in \Delta^{\Smod{\T}{\A}}$ with $B_1,
B_2 \in \tp_{\Smod{\T}{\A}}$, and by the claim above, $B_1,B_2 \in \tp_{\I}(d)$
for each model $\I$.  Hence $\Smod{\T}{\A} \ent \alpha$, and this concludes
proof of $\bf P1$. 
Properties
$\bf P2$ and $\bf P3$ can also be shown using the above Claim \ref{pr:1}  and
the fact that $\Smod{\T}{\A}$ is a model of the KB. 
\end{proof}

This proposition allows us to establish the following results: 

\begin{theorem} 
For recursion safe \newDL KBs, consistency checking and instance query
answering are feasible in polynomial time in combined complexity. 
\end{theorem}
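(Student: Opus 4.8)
The plan is to lean on Proposition~\ref{prop:recsafeABoxExt}, which reduces both tasks to operations over the interpretation $\Smod{\T}{\A}$; the whole argument then rests on showing that this interpretation is polynomially sized and polynomial-time computable. First I would bound its size: by Definition~\ref{def:ABoxExt} the domain is $D_0 \cup D_1 \cup D_2$ with $|D_0| = |ind(\A)|$, $|D_1| \leq |ind(\A)| \cdot m$ and $|D_2| \leq m$, where $m$ is the number of roles occurring on right-hand sides of inclusions in $\T$; hence $|\Delta^{\Smod{\T}{\A}}|$ is polynomial in the combined size of $(\T,\A)$.

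Next I would show that $\Smod{\T}{\A}$ is computable in polynomial time. Its interpretation function is the minimal solution of the monotone closure conditions (1)--(7) of Definition~\ref{def:ABoxExt}, i.e.\ a least fixpoint, which I would compute by saturation: seed the concept and role extensions with the ABox assertions (condition 1), then repeatedly apply conditions (2)--(7) until no new concept membership or role tuple is derivable. Since the domain is fixed in advance and polynomial, there are at most $|\Delta^{\Smod{\T}{\A}}| \cdot k$ possible concept memberships and $|\Delta^{\Smod{\T}{\A}}|^2 \cdot m$ possible role tuples (with $k$ the number of concept names in $\T$); every non-final round of saturation adds at least one of these, so only polynomially many rounds occur, and each round tests polynomially many axiom instances in polynomial time.

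Given $\Smod{\T}{\A}$, consistency checking follows from $\bf P2$: $(\T,\A)$ is inconsistent exactly when $\Smod{\T}{\A}$ violates some disjointness axiom $\alpha \in \T_n$, and there are linearly many such axioms, each checkable by inspecting the polynomial-sized extensions, so consistency is decided in polynomial time. For instance query answering, if $(\T,\A)$ is consistent then $\bf P1$ and $\bf P3$ give $cert(q,\T,\A) = ans(q,\Smod{\T}{\A})$, and since an instance query is a single atom, computing its answers is a lookup in the precomputed extensions; if $(\T,\A)$ is inconsistent, the consistency test detects it and every tuple is trivially a certain answer.

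The step I expect to be the main obstacle is arguing polynomial-time termination of the saturation in the presence of the recursive CRI rule (condition 7), where a naive reading might fear unbounded cascading rule firings. This is exactly where recursion safety earns its keep: because the domain of $\Smod{\T}{\A}$ is a fixed polynomial set chosen in advance, condition (7) can only ever add tuples drawn from $\Delta^{\Smod{\T}{\A}} \times \Delta^{\Smod{\T}{\A}}$, so the number of derivable role tuples---and hence the number of rule firings---is polynomially bounded regardless of any recursion among the CRIs. Proposition~\ref{prop:recsafeABoxExt} guarantees that this small fixed domain already yields a model of the KB, which is precisely what makes the bound sound.
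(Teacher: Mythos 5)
Your proposal is correct and follows essentially the same route as the paper: the paper establishes the theorem directly from Proposition~\ref{prop:recsafeABoxExt} together with the observation that $\Smod{\T}{\A}$ has a polynomial-sized domain, using $\bf P2$ for consistency and $\bf P3$ for instance queries, exactly as you do. Your explicit saturation argument for computing $\Smod{\T}{\A}$ as a least fixpoint over the fixed polynomial domain (and your remark that recursion safety is needed for correctness of the model, not for termination of the saturation) just fills in details the paper leaves implicit.
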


The recursion safe fragment of \newDL is not FO-rewritable: indeed,
the TBox in the proof of Lemma \ref{lemma:nloghard} is recursion safe.  
However, we can get rid of recursive CRIs and regain rewritability if we 
have guarantees that they will only be relevant on paths of bounded length. 
We formalize this rough intuition next. 
\begin{definition}[k-bounded ABox]
	Let $\T$ be a 
    \newDL TBox and $\A$ an ABox.
	Let $S$ be a set of simple roles. 
    Given $a,b \in ind(\A)$, we say that there exists an
        \emph{$S$-path of length $n$ between $a$ and $b$ (in $\A$ w.r.t.\,\T)} 
 if there exist pairwise distinct 
	$d_1, \dots,d_{n-1} \in ind(\A)$ with $d_i \not\in \{a,b\}$, and 
     $s_1 (a,d_1),$ $\dots,$ $s_i (d_{i-1},d_i),$ $\dots,$ $s_n (d_{n-1},b) \in \A$
     such that $s_i \ISA_{\T}^{\sf s} s$ and $s \in S$, $1 \leq i <n$. 
	Let $S_r = \{s \mid r \cdot s \ISA r \in \T \}$. We say that $\A$ is
        \emph{$k$-bounded for $\T$} if  for each recursive $r \in \T$
        there is no $S_r$-path 
   of size larger than $k$. 
\end{definition}
We simulate recursive CRIs by unfolding them into
$k$ non-recursive ones. 
\begin{definition}[$k$-unfolding, $k$-rewriting]
Let $\T$ be an arbitrary recursion-safe \newDL TBox. For any $k \geq 0$, a \emph{$k$-unfolding of $\T$}
is a \newDLnr TBox $\T_k$ obtained by replacing each $r \cdot s \ISA r \in \T$
with the 
axioms 	\[ r_{j-1} \cdot s \subsume r_{j}  \quad 	 r \subsume r_0  \quad  r_j
\subsume \hat{r} \] 
for $1 \leq j \leq k$, where $\hat{r}$ and each $r_j$ are
fresh role names. 
 For any CQ $q$ over $\Sigma_{\T}$, let $\hat{q}$ be the
query obtained from $q$ by replacing with $\hat{r}(x,y)$ each 
 occurrence of $r(x,y)$, where $r$ is a recursive role in $\T$.
We call $rew(\hat{q}, \T_k)$ the $k$-rewriting of $q$ w.r.t.\,
$\T$. 
\end{definition}

For $k$-bounded ABoxes, the $k$-rewriting is a complete FO-rewriting:

\begin{lemma}
	Let $\T$ be a recursion safe \newDL TBox, $\T_k$ a $k$-unfolding of $\T$, for some $k \geq 0$, and $q$ a CQ over $\Sigma_{\T}$.
	Then, for every $k$-bounded ABox $\A$: 
	$$cert(q,\T,\A) = \underset{q' \in rew(\hat{q}, \T_k) }{\bigcup}cert(q',\emptyset, \A) $$
	\label{claim:kboundFO}
\end{lemma}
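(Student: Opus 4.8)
The plan is to first remove the union on the right-hand side and then reduce the statement to a comparison of canonical models. Since $\T_k$ is a \newDLnr TBox by construction, Lemma~\ref{lemma:PerfectRef} applied to $\T_k$ and the rewritten query $\hat{q}$ gives $\bigcup_{q'\in rew(\hat{q},\T_k)} cert(q',\emptyset,\A) = cert(\hat{q},\T_k,\A)$ (using that $\A$ is consistent with $\T_k$ whenever it is with $\T$, the inconsistent case yielding all tuples on both sides). So it suffices to prove $cert(q,\T,\A)=cert(\hat{q},\T_k,\A)$. For this I would pass to canonical models: let $\canmod[\T,\A]$ and $\canmod[\T_k,\A]$ be the chase interpretations obtained by closing $\A$ under the positive inclusions, the existential axioms, and the CRIs of $\T$ (resp.\ $\T_k$). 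Extending the universal-model construction of \dlliteR by closing under CRIs adds no fresh elements and preserves homomorphisms, so I would argue these interpretations are universal; then $cert(q,\T,\A)=ans(q,\canmod[\T,\A])$ and $cert(\hat{q},\T_k,\A)=ans(\hat{q},\canmod[\T_k,\A])$, and the goal becomes $ans(q,\canmod[\T,\A]) = ans(\hat{q},\canmod[\T_k,\A])$.

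The heart of the argument is a structural correspondence between the two canonical models, driven by two facts. First I would establish \emph{locality of the guard}: by the second recursion-safety condition no simple subrole of $s$ (nor of $s^-$) is existentially implied, so — exactly as in the construction of $\Smod{\T}{\A}$ (Definition~\ref{def:ABoxExt}), where the anonymous elements $D_1\cup D_2$ receive no such edges — every $s$-edge of $\canmod[\T,\A]$ connects two elements of $ind(\A)$. Consequently the only pairs added to $r^{\canmod[\T,\A]}$ by a recursive CRI $r\cdot s \ISA r$ are of the form $(d,e)$, where $(d,e_0)$ is a seed $r$-pair and $e_0\leadsto e$ is an $s$-path inside $ind(\A)$. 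Second, I would invoke $k$-boundedness of $\A$ to bound the length of every such $S_r$-path, hence of this composition, by $k$. Together these give the clean characterization: $(d,e)\in r^{\canmod[\T,\A]}$ iff $(d,e_0)$ is a seed pair and there is an $s$-path of length at most $k$ from $e_0$ to $e$.

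Matching this against $\T_k$ is then direct. Apart from the split of $r$, the two chases are built from identical axioms, so all predicates other than $r$ coincide, and in $\canmod[\T_k,\A]$ one has $\hat{r} = \bigcup_{0\le j\le k} r_j$ with $r_0$ the seed of $r$ and $r_j = r_{j-1}\cdot s$, i.e.\ precisely the seed extended by at most $k$ $s$-steps. The inclusion $\hat{r}^{\canmod[\T_k,\A]}\subseteq r^{\canmod[\T,\A]}$ is immediate, since each $r_j$-pair arises from $j$ applications of the CRI; the converse inclusion is exactly where $k$-boundedness is spent. Hence $\canmod[\T,\A]$ and $\canmod[\T_k,\A]$ agree up to the renaming $r\leftrightarrow\hat{r}$, so a match of $q$ in the former is a match of $\hat{q}$ in the latter and conversely, giving the desired equality of answer sets.

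I expect the main obstacle to be making the correspondence faithful for \emph{all} query-relevant atoms rather than just the role pairs. The subtle point is that an extended $r$-pair $(d,e)$ also forces, via range/domain axioms such as $\exists r^-\ISA A$, concept memberships of $e$ in $\canmod[\T,\A]$; these must be reproduced on the $\hat{r}$-targets in $\canmod[\T_k,\A]$, which requires the unfolding to carry the concept and role inclusions of $r$ over to $\hat{r}$ — otherwise a length-$k$ extended target would satisfy $A$ in $\canmod[\T,\A]$ but not in $\canmod[\T_k,\A]$, and completeness of the $k$-rewriting would fail. The other delicate step is the locality-of-the-guard claim itself: the finite unfolding is complete only because the recursion cannot escape into the anonymous part of the model, and it is precisely the recursion-safety guard condition that rules this out. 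Once both are secured, the $k$-bounded hypothesis caps the path length and the bounded unfolding reproduces the recursive closure exactly.
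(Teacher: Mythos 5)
Your proposal follows essentially the same route as the paper's own proof: the paper also (implicitly) collapses the union via Lemma~\ref{lemma:PerfectRef} applied to the \newDLnr TBox $\T_k$, passes to the chase interpretations $\I_{chase(\T,\A)}$ and $\I_{chase(\T_k,\A)}$, and establishes exactly your role-level correspondence --- its key claim is that $r(a,b)\in chase(\T,\A)$ for recursive $r$ implies $\hat{r}(a,b)\in chase(\T_k,\A)$, with $k$-boundedness and the guard condition spent where you spend them, and the converse direction argued through the rewriting rules.

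However, the ``obstacle'' you flag in your final paragraph should be upgraded from an anticipated difficulty to a genuine gap, and it is one the paper's proof shares: the paper's claim tracks only atoms over recursive roles and tacitly assumes every other predicate has the same extension in both chases. That assumption fails, because CRI-derived $r$-pairs can trigger further consequences through axioms in which $r$ occurs on the left-hand side. Concretely, take $\T=\{\, r\cdot s \ISA r,\ \exists r^- \ISA A \,\}$ (recursion-safe, since $r$ only sits on a self-loop and no existential axiom implies a subrole of $s$), the ABox $\A=\{\, r(a,b),\ s(b,c) \,\}$ (which is $1$-bounded), and the Boolean query $q \gets A(c)$, so that $\hat{q}=q$. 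Over $(\T,\A)$ the chase derives $r(a,c)$ and hence $A(c)$, so $q$ is certainly true. But in $chase(\T_k,\A)$ the pair $(a,c)$ lands only in $r_1$ and $\hat{r}$, never in $r$, so $A(c)$ is not derived; and since no axiom of $\T_k$ has $r$ on its right-hand side, $\mathit{rew}(\hat{q},\T_k)$ consists only of $\{A(c)\}$ and $\{r(z,c)\}$, both false over $\A$ alone. So under the literal definition of $k$-unfolding the stated equality fails, and the repair you identify --- carrying the axioms that \emph{read} $r$ (here $\exists r^- \ISA A$, and likewise any inclusion $r \ISA t$) over to $\hat{r}$, while the axioms that \emph{write} into $r$ keep feeding the seed $r_0$ --- is exactly what is needed to make both your argument and the paper's go through. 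In short: your proof is at least as rigorous as the paper's, and more self-aware about the precise point where the construction can break.
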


\begin{proof}[Proof (sketch)] 
In a nutshell, recursion-safety ensures that recursive CRIs in $\T$ can only
`fire' in the chase along $S_r$-paths in the ABox. If 
 $\A$ is $k$-bounded for $\T$, then such paths  have length $\leq k$, so we
 get  that every pair $(d,d')$ that should be added to a
recursive role $r$ is added to some $r_j$, and hence to $\hat{r}$ (see appendix for full proof).
\end{proof}




\section{Ontology-driven Query Reformulations }\label{sec:queryreform}


In this section we focus on leveraging \newDL ontologies for modifying queries to either decrease or increase the set of answers. 
Such query operations semantically denote query containment, over any data set.

\subsection{Query Restraining}
A closer look to the query rewriting rules of \newDLnr, allows us to obtain TBox-driven 
query modifications, and it follows from Lemma \ref{lemma:PerfectRef} that each such query 
is more restrictive, meaning that it potentially has less answers.

\begin{definition}\label{def:spec}
  Let $\T$ be a \newDL TBox. 
  Given a pair of CQs $q,q'$ , we
  write $q \myspeS q' $ if $q \myspe q'$.
\smallskip \noindent
 We write $q \myspeS^{*} q'$ if there is a finite sequence
  $q_0,\ldots,q_n$ of CQs such that $q=q_1,q'=q_n$, and $q_i\myspeS
  q_{i+1}$ for all $0 \leq i < n$, and call $q'$ 
a \emph{restraining of } $q$ w.r.t. $\T$ if $q \myspe^{*} q'$.
\label{def:SpeTBox}
\end{definition}

We say that the application of any of the rules \ref{s:1}-\ref{s:7} \emph{restrains} a query, because
the answers of the resulting query are necessarily contained in the answers of
the original one. 

\begin{example}
	 Suppose we want to query all events occurring in some city, over $\T_e' = \T_e \cup \{\role{occursIn} \cdot \role{locatedIn} \ISA \role{occursIn}\}$:
	 $$q(x)\leftarrow\conc{Event}(x), \role{occursIn}(x,y), \conc{City}(y)$$
	 Then by rule {\bf S1} applied for axiom $\conc{CulturEvent} \subsume \conc{Event}$ in previous definition we can obtain
	 $$q_1(x)\leftarrow\conc{CulturEvent}(x), \role{occursIn}(x,y),\conc{City}(y)$$
	 We next apply {\bf S6} for $\role{occursIn} \cdot \role{locatedIn} \ISA \role{occursIn}$  to obtain
	 \begin{align*}
	 	q_2(x)\leftarrow \, &\conc{CulturEvent}(x), \role{occursIn}(x,z^{q_1}), \role{locatedIn}(z^{q_1},y), \conc{City}(y).
     \end{align*}\demo
   \label{ex:CRITBox}
\end{example}

Such reformulations can be evaluated efficiently when the TBox is recursion-safe and the ABox is $k$-bounded. The following Proposition follows from Lemma \ref{claim:kboundFO}.

\begin{proposition}
		Let $\T$ be a recursion safe \newDL TBox. For any two CQs, such that $q_1 \myspeS q_2$ we have that 
	$cert(q_2, \T, \A) \subseteq cert(q_1, \T, \A)$, for any $k$-bounded ABox $\A$.
	\label{prop:speContTBox}
\end{proposition}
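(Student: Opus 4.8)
The plan is to establish the containment $\mathit{cert}(q_2,\T,\A) \subseteq \mathit{cert}(q_1,\T,\A)$ by reducing it, via the machinery already developed, to an inclusion among the certain answers computed through rewritings. The key observation is that the restraining relation $\myspeS$ coincides with the specialization relation $\myspe$ used in the definition of $\mathit{rew}(q,\T)$, so a single application $q_1 \myspeS q_2$ means exactly that $q_2 \in \mathit{rew}(q_1, \T)$ (or at least $q_2$ is one of the rewritings produced by a single rule application).

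\medskip
\noindent\textbf{Main argument.} First I would fix a $k$-bounded ABox $\A$ consistent with $\T$ and consider the $k$-unfolding $\T_k$, which by definition is a \newDLnr TBox. By Lemma~\ref{claim:kboundFO}, for any CQ $q$ over $\Sigma_\T$ we have
\[ \mathit{cert}(q,\T,\A) = \bigcup_{q' \in \mathit{rew}(\hat{q},\T_k)} \mathit{cert}(q',\emptyset,\A). \]
So it suffices to show that every rewriting of $\hat{q_2}$ w.r.t.\ $\T_k$ contributes answers already contained in the right-hand side for $q_1$. The crucial point is that the rewriting operator $\myspe$ (now over the non-recursive TBox $\T_k$) is \emph{monotone with respect to rewriting closure}: since $q_1 \myspeS q_2$ means $q_1 \myspe q_2$ is one rewriting step, we have $q_2 \in \mathit{rew}(q_1,\T)$, and correspondingly (after the $\hat{\cdot}$ translation and unfolding) $\mathit{rew}(\hat{q_2},\T_k) \subseteq \mathit{rew}(\hat{q_1},\T_k)$. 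This inclusion of rewriting sets, combined with the union formula above, yields the desired containment of certain answers directly.

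\medskip
\noindent\textbf{Where the obstacle lies.} The main technical obstacle is verifying that the single rewriting step $q_1 \myspe q_2$ survives the translation $q \mapsto \hat{q}$ and the unfolding, i.e.\ that $\hat{q_2}$ is reachable from $\hat{q_1}$ by $\myspe$-steps over $\T_k$. This is immediate for rules \ref{s:1}--\ref{s:5} and \ref{s:7}, since those rules do not involve recursive roles and the replaced atoms are identical under $\hat{\cdot}$. The delicate case is rule \ref{s:6} applied to a \emph{recursive} CRI $r \cdot s \ISA r$: here $q_2$ replaces $r(x,y)$ by $\{r(x,z^{q_1}), s(z^{q_1},y)\}$, and under the translation $r(x,y)$ becomes $\hat{r}(x,y)$. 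I would argue that the corresponding unfolded axioms $r \subseteq r_0$, $r_{j-1}\cdot s \subseteq r_j$, and $r_j \subseteq \hat{r}$ in $\T_k$ allow exactly this replacement to be simulated (possibly through several $\myspe$-steps introducing the intermediate roles $r_j$), so that $\hat{r}(x,y)$ can be rewritten into $\hat{r}(x,z)\land s(z,y)$-shaped atoms. Establishing this simulation is the heart of the argument; once it is in place, the containment $\mathit{rew}(\hat{q_2},\T_k)\subseteq \mathit{rew}(\hat{q_1},\T_k)$ and hence Proposition~\ref{prop:speContTBox} follow routinely from Lemma~\ref{claim:kboundFO}.
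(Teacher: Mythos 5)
There is a genuine gap, and it sits exactly where you located ``the heart of the argument.'' The simulation you hope for is impossible: over the unfolded TBox $\T_k$, the specialization rules \ref{s:1}--\ref{s:7} only apply axioms right-to-left, and the fresh role $\hat{r}$ occurs in $\T_k$ \emph{only} on the right-hand side of the inclusions $r_j \ISA \hat{r}$. Consequently no rule over $\T_k$ can ever \emph{introduce} an $\hat{r}$-atom into a query: rule \ref{s:4} can replace $\hat{r}(x,y)$ by $r_j(x,y)$ (moving down the hierarchy), rule \ref{s:6} can then replace $r_j(x,y)$ by $r_{j-1}(x,z),s(z,y)$, but one can never climb back up to $\hat{r}$. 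Hence $\hat{q_2}$, which contains the atoms $\hat{r}(x,z^{q_1}) \land s(z^{q_1},y)$, is \emph{not} an element of $\mathit{rew}(\hat{q_1},\T_k)$, and your key claim $\mathit{rew}(\hat{q_2},\T_k) \subseteq \mathit{rew}(\hat{q_1},\T_k)$ is false. The mismatch is not repairable by a query-by-query subsumption argument either: the fully unfolded members of $\mathit{rew}(\hat{q_2},\T_k)$ contain $r$ followed by $s$-chains of length up to $k+1$, whereas every member of $\mathit{rew}(\hat{q_1},\T_k)$ has chains of length at most $k$, and over plain databases (empty TBox) a match of the longer-chain query need not yield a match of any shorter one when the chain endpoint is constrained (e.g.\ by $y = \mathsf{Vienna}$). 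Showing that such answers are nonetheless recovered would force you back into a chase/semantic argument, at which point the detour through Lemma~\ref{claim:kboundFO} buys nothing.

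The statement itself is best proved directly and semantically, which is also how the paper treats the companion proposition for relaxations: each rule application $q_1 \myspeS q_2$ is licensed by an axiom $\alpha \in \T$, and every model $\I$ of $(\T,\A)$ satisfies $\alpha$; therefore every match $\pi$ of $q_2$ in $\I$ induces a match of $q_1$ in $\I$ (for \ref{s:1}, \ref{s:4}, \ref{s:5} the same $\pi$ works; for \ref{s:3} and \ref{s:6} restrict $\pi$ to $\vars(q_1)$ and use $\exists r \ISA A$, resp.\ $t \cdot s \ISA r$, to verify the replaced atom; for \ref{s:2} extend $\pi$ to the dropped variable $y$ using $A \ISA \exists r$; for \ref{s:7} compose with the substitution). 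This gives $\ans(q_2,\I) \subseteq \ans(q_1,\I)$ in every model, hence $cert(q_2,\T,\A) \subseteq cert(q_1,\T,\A)$ --- in fact for \emph{arbitrary} ABoxes, with neither recursion-safety nor $k$-boundedness playing any role in the containment; those hypotheses (and Lemma~\ref{claim:kboundFO}) matter only for the surrounding claim that the reformulated queries can then be \emph{evaluated} by FO-rewriting.
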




\subsection{Query Relaxation}\label{sec:query-reformulation}

We have seen that the query reformulation rules that `apply' the axioms in a
right-to-left fashion, provide natural means to restrain queries. 
The natural next step is to define analogous rules that use the axioms in a
left-to-right fashion, to relax queries. Note that in the next definition,
rules \ref{g:1}--\ref{g:6} are, essentially, the dual of rules 
 \ref{s:1}--\ref{s:6}; 
Rule \ref{g:7} is a bit different, since it simply allows us to relax a query
by dropping atoms. 



%

 %
 

\begin{definition}
	Let $\T$ be a \newDL TBox. Given a pair $q,q'$ of CQs, we
	write $q \mygen q' $ whenever $q'$ is obtained from $q$ by applying atom substitution $\theta$ as follows:
	\begin{enumerate} [label=\bf G\arabic*., align=left, leftmargin=*]
		\item\label{g:1} if $A_1\ISA A_2\in \T$, and $A_1(x)\in q$, then $\theta = [A_1(x) \mid A_2(x)]$;
		\item\label{g:2} if $A\ISA \exists r\in \T$, and $A(x)\in q$,  then $\theta = [A(x)\mid r(x,z^q)]$;
		\item\label{g:3} if $\exists r \ISA A\in \T$,  $r(x,y)\in q$ and $y$ is a non-answer variable occurring only once in $q$, then $\theta = [r(x,y) \mid A(x) ]$;
		  \item\label{g:4} if $r \ISA s \in \T$ and  $r(x,y)\in q$, then $\theta=[ r(x,y) \mid s(x,y) ]$;
		\item \label{g:5} if $r \ISA s^{-}\in \T$, and $r(x,y)\in q$, then $\theta=[r(x,y) \mid s(y,x)]$;
		\item\label{g:6} if $r \cdot s \ISA r \in \T$, $r(x,y), s(y,z) \in q$ and $y$ is a non-answer variable that does not occur elsewhere in $q$, 
	then $\theta = [\{r(x,y), s(y,z)\} \mid r(x,z) ]$;
		\item\label{g:7} if $A(x) \in q$ and $x$ is a non-answer variable, then $\theta = [A(x) \mid \emptyset]$.
	\end{enumerate}
\smallskip \noindent
	We write $q \mygen^{*} q'$ if there is a finite sequence
	$q_0,\ldots,q_n$ of CQs such that $q=q_1,q'=q_n$, and $q_i \mygen
	q_{i+1}$ for all $0 \leq i < n$. 
	We call $q'$ a \emph{query relaxation of} $q$ w.r.t $\T$ whenever $q \mygen^{*} q'$. 
\end{definition}

\begin{example}
	Considering the following query over $\T'_e$:
	\begin{align*}
	q(x)\leftarrow \,&\conc{Concert}(x), \role{occursIn}(x,y), \role{locatedIn}(y,z), 
	z={\sf Vienna}
	\end{align*}
	We can apply rule {\bf G1} with axiom $\conc{Concert} \subsume \conc{CulturEvent}$ and obtain 
	query:
		\begin{align*}
	q_1(x)\leftarrow&\conc{CulturEvent}(x), \role{occursIn}(x,y), \role{locatedIn}(y,z), 
	z={\sf Vienna}
	\end{align*}
	Further, by applying {\bf G6} with axiom $\role{occursIn \cdot locatedIn} \subsume \role{occursIn}$ 
	we obtain 
		\begin{align*}
	q_2(x)\leftarrow&\conc{CulturEvent}(x), \role{occursIn}(x,z), z={\sf Vienna}
	\end{align*}\demo
\end{example}

The following result is the analogous of Proposition~\ref{prop:speContTBox}.
 
\begin{proposition}
	Let $\T$ be a recursion safe \newDL TBox. For any two CQs, such that $q_1 \mygen q_2$ we have that 
	$cert(q_1,\T,\A) \subseteq cert(q_2, \T,\A)$, for any $k$-bounded ABox $\A$. 
\end{proposition}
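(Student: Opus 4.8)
The plan is to prove the statement one rule at a time, establishing the stronger fact that $\I\models q_1(\vec{a})$ implies $\I\models q_2(\vec{a})$ in \emph{every} model $\I$ of $(\T,\A)$, from which the inclusion of certain answers is immediate. So I fix an arbitrary model $\I\models(\T,\A)$ and a tuple $\vec{a}$ with $\I\models q_1(\vec{a})$ via a match $\pi\colon\term(q_1)\to\Delta^\I$, and for each rule \ref{g:1}--\ref{g:7} I exhibit a match $\pi'$ of $q_2$ with $\pi'(\vec{x})=\vec{a}$. Since $\I$ is arbitrary, this gives $\vec{a}\in cert(q_2,\T,\A)$. This argument invokes only that $\I$ satisfies the single axiom triggering the rule, so it is in fact insensitive to $k$-boundedness; the hypothesis is retained only to keep the statement parallel to Proposition~\ref{prop:speContTBox}.

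The per-rule verifications mirror the semantics of the triggering axiom. For \ref{g:1} ($A_1\ISA A_2$), $\pi(x)\in A_1^\I\subseteq A_2^\I$, so $\pi'=\pi$ works. For \ref{g:2} ($A\ISA\exists r$), $\pi(x)\in A^\I\subseteq(\exists r)^\I$ yields a witness $d$ with $(\pi(x),d)\in r^\I$; extend $\pi$ by $z^q\mapsto d$, which is legitimate because $z^q$ is fresh. For \ref{g:3} ($\exists r\ISA A$), from $(\pi(x),\pi(y))\in r^\I$ we get $\pi(x)\in(\exists r)^\I\subseteq A^\I$, and we drop $y$ (a non-answer variable with a single occurrence). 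Rules \ref{g:4} and \ref{g:5} are immediate from $r^\I\subseteq s^\I$ and $r^\I\subseteq(s^-)^\I$, respectively. For \ref{g:6} ($r\cdot s\ISA r$), the two atoms give $(\pi(x),\pi(y))\in r^\I$ and $(\pi(y),\pi(z))\in s^\I$, whence $(\pi(x),\pi(z))\in r^\I$ by the CRI semantics, and $y$ is dropped (it occurs nowhere else and is non-answer). Finally \ref{g:7} only deletes a conjunct, so the restriction of $\pi$ to $\term(q_2)$ is a match.

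An equivalent route, closer to the framing as ``the analogous of Proposition~\ref{prop:speContTBox},'' is by duality: reading each of \ref{g:1}--\ref{g:6} backwards is exactly the corresponding restraining rule \ref{s:1}--\ref{s:6}, with the fresh variable $z^q$ playing the role of the once-occurring non-answer variable. Hence $q_1\mygen q_2$ (via \ref{g:1}--\ref{g:6}) implies $q_2\myspeS q_1$ up to renaming of that variable, and Proposition~\ref{prop:speContTBox}---whose hypotheses, recursion-safety of $\T$ and $k$-boundedness of $\A$, are exactly ours---yields $cert(q_1,\T,\A)\subseteq cert(q_2,\T,\A)$ directly.

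The only delicate points, under either route, are the freshness and occurrence side conditions of \ref{g:2}, \ref{g:3}, and \ref{g:6}: one must check that the witness extension (\ref{g:2}) and the variable deletions (\ref{g:3}, \ref{g:6}) are well-defined, i.e.\ that $z^q$ is genuinely fresh and that the removed variable is non-answer and absent from the retained atoms, which are precisely the preconditions the rules impose. The main obstacle specific to the duality route is that \ref{g:7} is not the reverse of any of \ref{s:1}--\ref{s:7}, so dropping a concept atom over a non-answer variable must be handled separately by the elementary match-restriction argument above.
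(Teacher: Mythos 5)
Your proposal is correct, and your primary route---fixing an arbitrary model $\I \models (\T,\A)$ and converting a match for $q_1$ into a match for $q_2$ rule by rule---is essentially the paper's own argument: the paper disposes of rules \ref{g:1}, \ref{g:2}, \ref{g:4}, \ref{g:5}, \ref{g:7} as ``clearly'' holding and only comments on \ref{g:3} and \ref{g:6}, noting that the dropped variable is non-answer and occurs nowhere else, so no terms of $q_1$ are disconnected; your version simply makes the match constructions and the freshness/occurrence side conditions explicit. Beyond that, two of your additions are worth flagging. First, your observation that the argument never uses recursion-safety of $\T$ or $k$-boundedness of $\A$ is correct: the per-rule containment holds model by model, hence $cert(q_1,\T,\A) \subseteq cert(q_2,\T,\A)$ over \emph{every} ABox; these hypotheses appear in the statement only for uniformity with Proposition~\ref{prop:speContTBox} and because that is the regime in which the reformulated queries can be evaluated efficiently. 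Second, your duality route---each application of \ref{g:1}--\ref{g:6} to $q_1$ is undone by the corresponding rule \ref{s:1}--\ref{s:6} applied to $q_2$, up to renaming of the fresh/once-occurring variable, so Proposition~\ref{prop:speContTBox} yields the containment---is a genuinely different reduction that the paper does not take. It buys brevity and makes the ``analogous'' relationship between the two propositions literal, but, as you correctly note, it cannot absorb \ref{g:7} (atom deletion reverses no S-rule, since \ref{s:7} is variable unification, not atom addition), and it inherits the hypotheses of Proposition~\ref{prop:speContTBox}; the direct route is therefore both more elementary and more general.
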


\begin{proof}
	The claim clearly holds whenever $q_2$ is obtained from $q_1$ using rules {\bf G1-2, G4-5, G7}. 
	In case of {\bf G3} and {\bf G6}, the replacement of query atoms results in dropping a variable in $q_1$.
	Since such variable is not an answer-variable and does not occur elsewhere in $q_1$, the replacement which is justified by an axiom in $\T$,
	does not disconnect terms of $q_1$.
\end{proof}

\section{Data-driven Query Reformulations} \label{sect:datadriven}
In this section we are interested in characterizing query modifications which are data dependent, therefore the containement 
relation holds only for the current dataset. 
The ontology-driven query reformulation rules in the
previous section may not capture all  the query variations that the user is
interested in. 
For instance, they do not allow to generalize a query about concerts
in Vienna, to a query  asking for all concerts in Austria, or
to specialize to one about concerts at the State Opera. 
Clearly, such reformulations cannot be done on the basis of the TBox alone,
since they consider the specific dataset, more specifically, they are based
on the assertions  $\role{locatedIn}(\sf Vienna, \sf Austria)$ and 
$\role{locatedIn}(\sf StateOper, \sf Vienna)$. 

We may also be interested in some reformulations that do not consider 
specific instances, but they are based on \emph{dependencies} that hold in our
dataset rather than on TBox axioms.   
In our running example, a quick inspection at the data in Figure \ref{fig:eventData}
tells us that \textit{every existing venue is located in a city}. 
We could use this information to relax the query 
\begin{align*}
q(x)\leftarrow & \conc{Event}(x), \role{occursIn}(x,y), \conc{Venue}(y) \mbox{\qquad
  into}\\
q'(x) \leftarrow &\conc{Event}(x), \role{occursIn}(x,y), \role{locatedIn}(y,z), \conc{City}(z)
\end{align*}
We note that such a reformulation could be done with using rules  \ref{s:1}-\ref{s:7}, if 
we had an inclusion $\conc{Venue} \ISA \exists\role{locatedIn}.\conc{City}$  
in the TBox\footnote{This is not in the syntax of \dlliter, but easily
  expressible.}. However, we may not have such an axiom, and it may not
be possible or desirable to add it. 
For that reason, we allow to also reformulate the query using some
containments that are not guaranteed by the TBox, but we can test that they 
hold for the dataset being considered. For that, we write $q_1 \subseteq_\K q_2$ to denote that $cert(q_1, \K) \subseteq cert(q_2, \K)$.


\newcommand{\myspeP}{{\leadsto_{\K}^s}}
\newcommand{\mygenP}{{\leadsto_{\K}^g}}

\begin{definition} 
	Let $\K = (\T, \A)$ be a \newDL KB.
	\begin{itemize}
		\item Given a pair $q, q'$ of CQs, we
	write $q \myspeP q' $ if $q \myspeS q'$ or $q'$ is obtained from $q$ using atom substitution $\theta$ as follows:
	\begin{enumerate}[label=\bf SD\arabic*, align=left, leftmargin=*]
		\item\label{sd:1} if $A(x) \in q$, $\K\models A(a) $, then $\theta = [ \emptyset \mid x=a ]$;
		\item\label{sd:2} if $r(x,y) \in q$ and $\K \models r(a,b) $, then  either (i) $\theta= [\emptyset \mid x=a ]$ or (ii) $\theta= [\emptyset \mid x=b]$; 
\item \label{sd:3} if $A_1(x)\subseteq_\K A_2(x)$ and $A_2(x) \in q$ then $\theta= [A2(x) \mid A_1(x)]$;
\item\label{sd:4} for $q^*(x) \leftarrow r(x,y),A'(y)$ if $q^*(x)\subseteq_\K A(x)$ and $A(x) \in q$, then $\theta = [A(x) \mid \{r(x,z^q),A'(z^q)\}]$;
\item\label{sd:5}  for $q^*(x) \leftarrow r(x,y),A'(y)$ if  $A(x)\subseteq_\K q^*(x)$, and $r(x,y),A'(y) \in q$ such that $y$ does not occur elsewhere in $q$, then $\theta = [\{r(x,y),A'(y)\} \mid A(x)]$; 
\item \label{sd:6} for $q^*(x )\gets r(x,y),A(y)$ and $\hat{q}(x)\gets
                p(x,y),A'(y)\quad$, if $\hat{q}(x)\subseteq_\K
                q^*(x)$ and $r(x,y),A(y) \in q$ such that $y$ does not occur elsewhere in $q$, then $\theta = [\{r(x,y),A(y)\} \mid \{p(x,z^q),A'(z^q)\} ]$.
	\end{enumerate}
	\item  Given a pair $q, q'$ of CQs, we
	write $q \mygenP q' $ if $q \mygen q'$ or $q'$ is obtained from $q$ using atom substitution $\theta$ as follows:
	\begin{enumerate}[label=\bf GD\arabic*, align=left, leftmargin=*]

        \item\label{gd:1} if $x=a \in q$ and $\K\models A(a)$, then $\theta = [ x=a \mid A(x)]$;

        \item\label{gd:2} if $x=a \in q$ and $\K\models r(a,b) $, then $\theta = [x=a \mid \{r(x,y),y=b\}]$;
 
        \item\label{gd:3} if $A_1(x)\subseteq_\K A_2(x)$ and $A_1(x) \in q$, then $\theta=[A_1(x) \mid A_2(x)]$;

        \item\label{gd:4} for $q^*(x) \gets r(x,y),A'(y)$, if $A(x)
          \subseteq_\K q^*(x)$ and $A(x) \in q$, then $\theta = [A(x) \mid \{r(x,z^q),A'(z^q)\}]$; 

        \item\label{gd:5} for $q^*(x) \gets r(x,y),A'(y)$, if $ q^*(x)
          \subseteq_\K A(x)$ and $r(x,y),A'(y) \in q$ such that $y$ does not occur elsewhere in $q$, then $\theta = [\{r(x,y),A'(y)\} \mid A(x)]$

        \item\label{gd:6} for $q^*(x)= r(x,y),A(y)$ and
          $\hat{q}(x)\gets P(x,y),A'(y)\quad$, if $ q^*(x)
          \subseteq_\K \hat{q}(x)$ and $r(x,y),A(y) \in q$ such that $y$ does not occur elsewhere in $q$, then $\theta = [\{r(x,y),A(y)\}\mid \{p(x,z^q),A'(z^q)\}]$.
	\end{enumerate}
\end{itemize}
For $\delta \in \{g,s\}$, we write $q \gsArrowT{\K}{\delta} q' $ if there is a finite sequence
  $q_0,\ldots,q_n$ of CQs such that $q=q_1,q'=q_n$, and $q_i \gsArrow{\K}{\delta}
  q_{i+1}$ for all $0 \leq i < n$.

  We call $q'$ a \emph{(data-driven) restriction} of $q$ w.r.t. $\K$ if $q
  \gsArrowT{\K}{s} q'$, 
   and we call $q'$ a \emph{(data-driven) relaxation} of $q$ w.r.t. $\K$ if
  $q \gsArrowT{\K}{g} q'$. 
\end{definition}

In a nutshell, we have two kinds of rules: those that use assertions, and
those that use inclusions $q_A(x) \subseteq_\K q_B(x)$. 

In the first group, we have specialization rules \ref{sd:1} and \ref{sd:2}, and generalization
rules \ref{gd:1} and \ref{gd:2}.
If our query contains an atom $A(x)$ and we know that $a$ is an instance of
$A$ (that is, $\K \models A(a)$), then we can
specialize the query by making $x$ equal to $a$ (\ref{sd:1}); similarly, 
if $r(x,y)$ is in $q$ and we have $\K \models r(a,b)$, then we can
add either $x = a$ or $y = b$ (\ref{sd:2}). 
In the converse direction, if the query is equating some variable $x$ to a
constant $a$  that is an instance of $A$, then we can replace $x=a$ with
$A(x)$, and generalize the query by allowing $x$ to be any instance of $A$,
rather than just $a$ (\ref{gd:1}). Similarly we can use role assertions, and 
if $x=a$ is in $q$ and $r(a,b) \in \A_c$, then we can replace $x=a$ with the
pair $r(x,y),y=b$ (\ref{gd:2}). 

\begin{example} 
Using (\ref{gd:2}) and the assertion \[ \role{locatedIn}({\sf
  Vienna},{\sf Austria})\]
 we can relax the query 
$$q(x)\leftarrow\conc{Concert}(x), \role{occursIn}(x,y),y={\sf Vienna}$$ 
obtaining the following query:
\begin{align*}q'(x)\leftarrow &\conc{Concert}(x), \role{occursIn}(x,y), \role{locatedIn}(y,z^{q}), z^q = {\sf Austria}\end{align*}
\noindent
That is, we relax the query from the concerts in Vienna, to those that occur in
an Austrian city. \demo
\end{example}

The second group of rules,  (\ref{sd:3}--\ref{sd:6}) and
(\ref{gd:3}--\ref{gd:6}) are very similar to the ones in the previous section, 
however, now they allow to replace 
$B(x)$ by $A(x)$ in a specialization, 
not only when $A \ISA B$ is in $\T$, but also when the weaker 
condition $A(x) \subseteq_\K B(x)$ holds.  
Such replacements are also allowed for some more complex pairs of atoms. For
example, if  $r(x,y),{B}(y)  \subseteq_\K  A(x)$, $A(x)$ can be replaced  by
$r(x,y),B(y)$ to specialize the query. This would be similar to a rule for (non-\dllite) 
  axioms of the form  $\exists{r}.{B} \ISA A$ in
  Definition~\ref{def:spec}.

We remark that in this second group of rules,  defined in terms of inclusions $q_A(x) \subseteq_\K q_B(x)$, 
  the queries $q_A$ and $q_B$ have a restricted shape, with at most two atoms and two variables. 
Testing for  their containment is not 
an expensive task for recursion safe \newDL TBox and $k$-bounded ABox, since query answering in this case is FO-rewritable. 
Moreover, the search space of all queries that could result in applicable rules is polynomially bounded in the input. 

\begin{example}
Consider 
  $$q(x)\leftarrow\conc{Event}(x), \role{occursIn}(x,y), \conc{City}(y)$$ 
If $\conc{City}(x) \subseteq_\K \exists \role{locatedIn}.\conc{Country}(x)$,
then using rule \ref{gd:4}  we obtain:
  \begin{align*}
  q'(x)\leftarrow&\conc{Event}(x), \role{occursIn}(x,y), \role{locatedIn}(y,z^q), \conc{Country}(z^q)
  \end{align*}
Note that, after this rule application, we can actually apply the rule
\ref{g:6}  and obtain  
  \begin{align*}
  q'(x)\leftarrow&\conc{Event}(x), \role{occursIn}(x,z^q), \conc{Country}(z^q)
  \end{align*}
This illustrates that our data-driven rules are useful for query reformulation
not only on their own, but also because they may allow  other relevant
reformulations that were not applicable otherwise. 
\demo
\end{example} 
We show that our rules  indeed relax and restrain queries. Note that in
this case,  the answers containment only holds when evaluated over
$(\A,\T)$, but not for an arbitrary~$\A$. 
\begin{proposition}
Let $\K= (\T,\A)$ be a KB where $\T$ is a recursion safe \newDL TBox and $\A$ is $k$-bounded for $\T$.
For any  two CQs $q_1$, $q_2$:
		\begin{itemize}
		 \item[(\textbf{g})]  $q_1 \gsArrowT{\K}{g} q_2$ implies  $cert(q_1, \T,\A) \subseteq cert(q_2, \T, \A)$, and 
		 \item[(\textbf{s})] $q_1 \gsArrowT{\K}{s} q_2$  implies  $cert(q_2, \T,\A) \subseteq cert(q_1, \T, \A)$.
		 \end{itemize}
\end{proposition}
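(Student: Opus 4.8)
The plan is to reduce both claims to the soundness of a \emph{single} reformulation step and then close under the transitive closure by induction. Since $\gsArrowT{\K}{g}$ and $\gsArrowT{\K}{s}$ are the transitive closures of $\gsArrow{\K}{g}$ and $\gsArrow{\K}{s}$, it suffices to prove (i) $q_1 \gsArrow{\K}{g} q_2$ implies $cert(q_1,\T,\A) \subseteq cert(q_2,\T,\A)$ and (ii) $q_1 \gsArrow{\K}{s} q_2$ implies $cert(q_2,\T,\A) \subseteq cert(q_1,\T,\A)$; statements \textbf{(g)} and \textbf{(s)} then follow by an easy induction on the length $n$ of the witnessing sequence $q_0,\dots,q_n$, chaining the inclusions at each step. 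For the single step I would split on which rule was applied. Whenever the step is one of the purely ontology-driven rules (that is, $q_1 \mygen q_2$ in the relaxation direction or $q_1 \myspeS q_2$ in the restriction direction), there is nothing to prove: these cases are exactly Proposition~\ref{prop:speContTBox} and its relaxation analogue, which already give the required containment for recursion-safe $\T$ and $k$-bounded $\A$. Hence the real work is confined to the new data-driven rules \textbf{SD1}--\textbf{SD6} and \textbf{GD1}--\textbf{GD6}.

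The assertion-based rules \textbf{SD1}, \textbf{SD2}, \textbf{GD1}, \textbf{GD2} admit a direct model-theoretic argument that uses neither recursion-safety nor $k$-boundedness. Take \textbf{GD1} as representative: here $x=a$ in $q_1$ is replaced by $A(x)$ under the side condition $\K \models A(a)$. Given $\vec c \in cert(q_1,\T,\A)$ and any model $\I$ of $\K$, pick a match $\pi$ for $q_1$ with $\pi(\vec x)=\vec c$; since $\pi$ satisfies $x=a$ we have $\pi(x)=a$, and since $\I \models \K$ and $\K \models A(a)$ we get $a \in A^\I$, so $\pi$ is already a match for $q_2$, whence $\vec c \in cert(q_2,\T,\A)$. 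The rule \textbf{GD2} is the same, additionally setting $\pi(y)=b$, which is legitimate because $\K \models r(a,b)$ forces $(a,b)\in r^\I$ in every model and $y$ is fresh; \textbf{SD1} and \textbf{SD2} run the computation in the opposite direction, where adjoining the equality $x=a$ (resp.\ $y=b$) only shrinks the set of admissible matches, yielding $cert(q_2,\T,\A)\subseteq cert(q_1,\T,\A)$.

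The crux, and the step I expect to be the main obstacle, is the containment-based group \textbf{SD3}--\textbf{SD6} and \textbf{GD3}--\textbf{GD6}, whose soundness rests on the data-dependent certain-answer inclusions $q_A(x) \subseteq_\K q_B(x)$ rather than on TBox entailments. The clean sub-case is when the reformulated variable $x$ is an \emph{answer} variable: then in any model $\I$ a witnessing match $\pi$ pins $\pi(x)$ to the fixed individual component $b$ of the answer, and because such a match exists in \emph{every} model, $b$ is a certain instance of $q_A$; the side condition $q_A \subseteq_\K q_B$ then gives $b \in cert(q_B,\K)$, so $b$ also satisfies $q_B$ in $\I$ and $\pi$ (suitably extended on the fresh variable $z^q$) becomes a match for $q_2$. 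The genuinely delicate case is when $x$ is \emph{existential}: there $\pi(x)$ may be an anonymous witness, and $\subseteq_\K$ — which constrains only named individuals — does not transfer a priori. Handling this case is the heart of the proof, and I would address it by leveraging the structural restrictions of the rules (each $q_A,q_B$ has at most two atoms and two variables, with the inner variable fresh and occurring nowhere else in $q$) together with $k$-boundedness and recursion-safety, arguing via the finite canonical model $\Smod{\T}{\A}$ of Proposition~\ref{prop:recsafeABoxExt} and the FO-rewriting of Lemma~\ref{claim:kboundFO} that the witness relevant to the replaced pattern can be taken among the ABox individuals.

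Finally, the compound rules \textbf{GD4}--\textbf{GD6} and \textbf{SD3}--\textbf{SD6} follow the same template as \textbf{GD3}, the only additional bookkeeping being that the replaced pattern introduces or eliminates the fresh existential variable $z^q$. This is harmless precisely because, by the side conditions of the rules, that variable does not occur elsewhere in $q$, so extending or contracting the match on $z^q$ neither creates nor severs any connection with the remaining atoms. Assembling the cases gives soundness of a single $\gsArrow{\K}{g}$- (resp.\ $\gsArrow{\K}{s}$-) step, and the induction described above then yields the full statement.
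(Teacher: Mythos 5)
Your overall architecture is correct and, for most of the rules, coincides with the paper's proof: the decomposition into single-step soundness plus induction on the derivation sequence is implicit there; the ontology-driven steps are discharged by Proposition~\ref{prop:speContTBox} and its relaxation analogue there as well; and your model-theoretic treatment of \textbf{SD1}, \textbf{SD2}, \textbf{GD1}, \textbf{GD2} is essentially identical to the paper's, whose only worked-out case is \textbf{GD2} (extend a match of $q_1 \setminus \{x=a\}$ by $y \mapsto b$, using $\K \ent r(a,b)$ and freshness of $y$). Your analysis of the answer-variable sub-case of the containment-based rules is also correct.

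The genuine gap is exactly where you located it, but your attempt does not close it, and the route you sketch cannot be completed. For the existential-variable case of \textbf{SD3}--\textbf{SD6} and \textbf{GD3}--\textbf{GD6} you only offer a plan, whose key claim is that ``the witness relevant to the replaced pattern can be taken among the ABox individuals.'' That claim fails under the proposition's hypotheses. Take $\T = \{B \ISA \exists p,\; p \ISA u^-,\; \exists u \ISA A_1\}$ and $\A = \{B(b)\}$: this TBox is in normal form, has no CRIs, hence is trivially recursion safe, and $\A$ is $k$-bounded for every $k$ (there are no recursive roles). Since $cert(A_1(x),\T,\A) = \emptyset$, the precondition $A_1(x) \subseteq_\K A_2(x)$ holds vacuously for a fresh concept name $A_2$, so \textbf{GD3} relaxes the Boolean query $q_1 \gets p(y,x) \land A_1(x)$ into $q_2 \gets p(y,x) \land A_2(x)$. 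But $q_1$ is certainly true (in every model, $b$ has a $p$-successor that is forced into $A_1$ via $u$), while $q_2$ is false in the chase model, so $cert(q_1,\T,\A) \not\subseteq cert(q_2,\T,\A)$; the same KB also refutes the \textbf{(s)} direction via \textbf{SD3}. Because the counterexample already satisfies recursion-safety and $k$-boundedness, no appeal to $\Smod{\T}{\A}$ or to the FO-rewriting can repair the argument; what is needed is an additional side condition on the rules (e.g., that the reformulated variable is an answer variable, or is otherwise forced to be matched to an ABox individual). For what it is worth, the paper's own proof does not contain the missing argument either: it dismisses \textbf{GD3}--\textbf{GD6} and \textbf{SD3}--\textbf{SD6} with ``this clearly holds'' and concentrates on \textbf{GD2}. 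So your instinct about where the difficulty lies is sharper than the paper's treatment, but as a proof your text leaves the decisive case open, and the completion you propose for it is provably blocked.
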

\begin{proof}
Let us start with statement (\textbf{g}). This clearly holds for every $q_1,q_2$, such that $q_2$ is obtained from $q_1$ using one of the rules 
\ref{gd:3}-\ref{gd:6}.  It is also straightforward that the containment holds also when generalizing using rule \ref{gd:1}.
The most interesting case is when $q_2$ is obtained from $q_1$ by means of rule \ref{gd:2}:
	let $\vec{t} \in cert(q_1,\T,\A)$ and $q^* = q_1 \cap q_2$ (hence $q^* = (q_1 \setminus \{x=a\})$). Then, there exists a match $\pi$ such that $\pi (x)=a$ and
	$(\T,\A) \ent \pi(q_1(\vec{t}))$. It must be the case that $\pi$ is a match also for $q^*$. 
	Since $(\T,\A) \ent r(a,b)$,  we can construct $\pi' = \pi \cup \{y = b\}$, which is clearly a match for $q^* \cup \{r(x,y), y=b\}$, where $y$ does not occur in $q_1$. Hence,  $\vec{t} \in  cert(q_2,\T,\A)$.
For statement (\textbf{s}), if $q_2$ if obtained from $q_1$ via any of the rules \ref{sd:3}-\ref{sd:6}, then clearly $cert(q_2,\T,\A) \subseteq cert(q_1,\T,\A)$. Since 
rules \ref{sd:1}, \ref{sd:2} imply adding more query atoms, then also in this case the proposition is straightforward.
\end{proof}


\section{CRIs for Modeling Dimensional Data} 
\label{sec:MMD} 
%

A range of applications that need to access data from multiple
perspectives and at various granularity levels adopt the so-called
\emph{multi-dimensional data model} \cite{HurtadoMendelzonMD}.  
This model is usually formalized as a set of \emph{dimensions}, comprising 
 a finite 
set of \emph{categories} and a partial order between them, 
sometimes called \emph{child-parent relation}. 
We may also have a \emph{dimension instance} that defines \emph{members} for
each category, and a child-parent relation between members of connected
categories. In Figure \ref{ex:DI} a dimension schema and instance of some $\sf Location$ hierarchy are ilustrated, which makes use of  concepts 
from ontology $\T_e$ as categories.

In what follows, we argue that the language of recursion safe \newDL together with $k$-bounded datasets, are well-suited to provide a similar
multi-dimensional description while mentaining efficient query answering over the dimensions. Moreover, the relaxing or restraining operators can be used for navigating along various granularity levels encoded by some given dimension.


\subsection{Dimensions as Order Constraints} 

\begin{algorithm}[t]
	\caption{CheckAdmissibility}
	\label{alg:Admissib}
	\SetKwInput{KwInput}{Input} \SetKwInput{KwOutput}{Output} \SetKwInput{KwReturn}{return} 
	\KwInput{$(\T, \A)$ satisfiable recursion safe \newDL KB, $\C$ - order constraints;}
	\KwOutput{$\bf true$ if $(\A,\T)$ is $\C$-admissible, $\bf false$ otherwise;}
	\ForEach{$ord(s, {\bf A}, \prec) \in \C$}{
		$q_1(x,y) \leftarrow s(x,y)$, \quad 
		$q_2(x,y) \leftarrow \underset{A_1 \prec A_2}{\bigvee} A_1(x), s(x,y), A_2(y)$ \;
		\lIf{$ans(q_1, \Smod{\T}{\A}) \not \subseteq ans(q_2,\Smod{\T}{\A})$}{	\Return $\bf false$ }	
		$q_3(x,y) \leftarrow \underset{A_1 \not\prec A_2}{\bigvee} A_1(x), s(x,y), A_2(y)$ \;
		\lIf{$ans(q_1,\Smod{\T}{\A}) \cap  ans(q_3, \Smod{\T}{\A}) \neq \emptyset$}{	\Return $\bf false$ }
	} 
	\Return $\bf true$.
\end{algorithm}

We introduce \emph{order constraints} to encode dimensions schemes as follows:
\begin{definition} 
	An \emph{order constraint} takes the form 
	$\ord(s,\bfA,\prec)$,  
	with $s \in \nrsimple$, $\bfA \subseteq \nc$ finite, and 
	$\prec$ a strict partial order over $\bfA$. 
 $\I$ satisfies  $\ord(s,\bfA,\prec)$ if \\
\begin{tabular}{p{6cm}p{6cm}}
	{\begin{align}
	&s^\I  \subseteq \underset{A_1, A_2 \in \bfA}{\bigcup}(A_1^\I \times
        A_2^\I),   \label{eqord:1} 
	\end{align}}
 &
{\begin{align}
	&s^\I \cap \underset{A_1 \nprec A_2}{\bigcup}(A_1^\I \times A_2^\I) = \emptyset. \label{eqord:2}
	\end{align}}
\end{tabular} 
\end{definition} 
Intuitively, if $\ord(s,\bfA,\prec)$ is satisfied in $\I$, then
all objects connected via role $s$ are instances of $\bfA$-concepts,
in a way that is compliant with the order $\prec$.

\begin{example} \label{examp:Dim} 
     The {\sf Location} dimension in our example 
  is captured in $\K_e' = (\T_e' , \A_e')$ by adding the constraint 
\[\mathrm{c} = \ord(\role{locatedIn}, \{\conc{Venue}, \conc{City},
  \conc{Country}\}, \prec)\] 
	where the order is 
	$\conc{Venue} \prec \conc{City} \prec \conc{Country}$, $\T_e'$ is as in Example \ref{ex:CRITBox} and $\A_e'$ extends $\A_e$ with the following assertions:
	\[ \begin{array}{ll}
	\conc{Venue}({\sf VolksTheater}), &  \role{locatedIn}({\sf VolksTheater}, {\sf Vienna}),  \\
	\conc{Venue}({\sf GarnierOpera}), &  \role{locatedIn}({\sf GarnierOpera}, {\sf Paris}),  \\ 	  
      \conc{City}({\sf Paris}),  & \role{locatedIn}({\sf Paris}, {\sf France}), \\
    \conc{Country}({\sf France}). &
	\end{array}
	\] 
In the models of $\K_e'$ that satisfy $\mathrm{c}$, the role
$\role{locatedIn}$ can only relate instances of $\conc{Venue}$ with instances
of $\conc{City}$ or $\conc{Country}$, and instances of $\conc{City}$ with only instances
of $\conc{Country}$. This holds, in particular, for $\Smod{\T_e'}{\A_e'}$, as
well as for the universal model $\I^{\T_e',\A_e'}$. 
\demo
\end{example}

An useful insight is that order constraints can provide $k$-bounded guarantees.

\begin{definition} 
Let $\T$ be recursion-safe. 
We say that 
 $\C$  \emph{covers $\T$} if for each $S_r = \{s \mid r \cdot s \ISA r \in \T \}$ there exists a strict partial order $(\prec, \bfA)$ such that for each 
 $s \in S_r$ there is $ \ord(s,\bfA',\prec) \in \C$, where $\bfA' \subseteq \bfA$. We
       call $(\T,\A)$ \emph{$\C$-admissible} if 
 $\Smod{\T}{\A} \ent \C$.
\end{definition}
%

\begin{example} \label{examp:order} 
The set  $\{\rm c\}$ with ${\rm c}$ the order constraint from the previous
example
covers $\T_e'$, and since $\Smod{\T_e'}{\A_e'} \ent \{\rm c\}$, we have that $(\T_e', \A_e')$  is
$\{\rm c\}$-admissible. 
 	\demo
\end{example}

\noindent
$\C$-admissibility guarantees $k$-boundedness, for $k$ determined by 
constraints in  $\C$.

\begin{lemma} 
	Let $(\T,\A)$ be a recursion-safe \newDL KB, and let $\C$ be a
        set of order constraints that covers $\T$. 
   Let $\ell(\C) =max \{ |\bfA| \mid ord(s, \bfA, \prec) \in \C \}$.
 If $(\T,\A)$ is $\C$-admissible, then $\A$ is $\ell(\C)$-bounded for $\T$.
	\label{lemma:adm-kbound}
\end{lemma}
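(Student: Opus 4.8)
The plan is to convert any $S_r$-path in $\A$ into a strictly $\prec$-increasing chain of concepts drawn from the set of the order constraint that covers $S_r$, and then bound its length by the height of that order. Fix a recursive role $r$, put $S_r=\{s\mid r\cdot s\ISA r\in\T\}$, and let $s_1(d_0,d_1),\ldots,s_n(d_{n-1},d_n)$ be an arbitrary $S_r$-path in $\A$ (so $d_0=a$, $d_n=b$ and each $s_i\ISA_{\T}^{\sf s}s$ for some $s\in S_r$). Since $\C$ covers $\T$, there is one strict partial order $(\prec,\bfA)$ with a constraint $\ord(s,\bfA',\prec)\in\C$, $\bfA'\subseteq\bfA$, for every $s\in S_r$. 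I carry out the argument inside $\Smod{\T}{\A}$, using two of its defining closure conditions from Definition~\ref{def:ABoxExt}: closure condition~(1) lifts each assertion $s_i(d_{i-1},d_i)\in\A$ to $(d_{i-1},d_i)\in s_i^{\Smod{\T}{\A}}$, and conditions~(5)--(6), together with $s_i\ISA_{\T}^{\sf s}s$, propagate this to $(d_{i-1},d_i)\in s^{\Smod{\T}{\A}}$. Hence every edge of the path is an $s$-edge of $\Smod{\T}{\A}$, and this requires only the structural closure of $\Smod{\T}{\A}$, not consistency of the KB.

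I then invoke $\C$-admissibility, $\Smod{\T}{\A}\ent\C$, which gives $\Smod{\T}{\A}\ent\ord(s,\bfA',\prec)$. The inclusion condition~\eqref{eqord:1} places each endpoint of each edge in some concept of $\bfA'$, so I may pick $C_i\in\bfA'$ with $d_i\in C_i^{\Smod{\T}{\A}}$. The disjointness condition~\eqref{eqord:2} then forbids an $s$-edge from running between $C_{i-1}$ and $C_i$ unless $C_{i-1}\prec C_i$; reading this off every edge $(d_{i-1},d_i)$ produces the chain $C_0\prec C_1\prec\cdots\prec C_n$. Because $\prec$ is strict, the $C_i$ are pairwise distinct members of $\bfA'$, so $n+1\le|\bfA'|\le\ell(\C)$, i.e.\ $n\le\ell(\C)-1$. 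As the path was arbitrary, no $S_r$-path is longer than $\ell(\C)$, and $\A$ is $\ell(\C)$-bounded for $\T$.

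The step requiring real care — and the one I expect to be the main obstacle — is the bookkeeping at a node $d_i$ shared by edges $i$ and $i{+}1$: I must ensure that a \emph{single} concept $C_i$ can act simultaneously as the upper concept for edge $i$ and the lower concept for edge $i{+}1$, all within one and the same concept set $\bfA'$, so that the per-edge comparisons chain into one increasing sequence and the bound $n+1\le|\bfA'|$ is legitimate. This is exactly where ``$\C$ covers $\T$'' is needed in full strength: it provides a common order $(\prec,\bfA)$ for all roles guarding $r$, letting the disjointness condition~\eqref{eqord:2} be applied uniformly along the path no matter which $s\in S_r$ each $s_i$ refines. I would discharge this by proving the chaining invariant ``$d_i\in C_i^{\Smod{\T}{\A}}$ with $C_{i-1}\prec C_i$'' by induction on $i$, using~\eqref{eqord:1} to locate $d_i$ and~\eqref{eqord:2} to compare it with its predecessor at each step.
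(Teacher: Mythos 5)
Your proposal follows the paper's own strategy --- lift the $S_r$-path from $\A$ into $\Smod{\T}{\A}$ via the closure conditions of Definition~\ref{def:ABoxExt}, use $\C$-admissibility to turn the path into a strictly $\prec$-increasing chain of concepts, and count --- and both the lifting step and the final count are fine as far as they go. The genuine gap is exactly the step you yourself flag as the main obstacle, and your proposed discharge of it does not work. When edge $i$ of the path refines some $s \in S_r$ and edge $i{+}1$ refines a different $s' \in S_r$, the two relevant constraints are $\ord(s,\bfA',\prec)$ and $\ord(s',\bfA'',\prec)$ with possibly different (even disjoint) subsets $\bfA',\bfA'' \subseteq \bfA$. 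Condition~\eqref{eqord:2} of the constraint for $s'$ quantifies only over concept pairs drawn from $\bfA''$; it imposes no relation at all between a concept $C_i \in \bfA' \setminus \bfA''$ and any concept containing $d_{i+1}$. So the ``common order $(\prec,\bfA)$'' cannot be ``applied uniformly'': at a node where the path switches roles, the invariant $C_{i-1} \prec C_i$ cannot be established, and there is no single set $\bfA'$ containing all the $C_i$, so the count $n{+}1 \leq |\bfA'| \leq \ell(\C)$ is unjustified.

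Worse, under the main-text definition of covering that you invoke, the statement is actually false (for the mixed-role reading of $S_r$-paths, which is the reading the $k$-unfolding lemma needs, since the chase propagates $r$ along chains alternating between different roles of $S_r$). Take $\T=\{r\cdot s\ISA r,\ r\cdot s'\ISA r\}$, which is recursion-safe, $\bfA=\{A_1,A_2,A_3,A_4\}$ with $A_1 \prec A_2$ and $A_3\prec A_4$, and $\C=\{\ord(s,\{A_1,A_2\},\prec),\ \ord(s',\{A_3,A_4\},\prec)\}$, so $\ell(\C)=2$. The ABox $\{A_1(a), s(a,b), A_2(b), A_3(b), s'(b,c), A_4(c), A_1(c), s(c,d), A_2(d), A_3(d), s'(d,e), A_4(e),\dots\}$ is $\C$-admissible --- every $s$-edge runs from $A_1$ to $A_2$ and every $s'$-edge from $A_3$ to $A_4$ --- yet it contains arbitrarily long $S_r$-paths. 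This is precisely why the paper's appendix proof opens by \emph{strengthening} the covering definition, requiring for each $s \in S_r$ a constraint $\ord(s,\bfA,\prec) \in \C$ over one and the same set $\bfA$ rather than a per-role subset, and explicitly remarks that otherwise the claim ``would not hold in every case''. With that single common $\bfA$, conditions \eqref{eqord:1} and \eqref{eqord:2} apply to every edge of the path with the same concept set and your induction goes through verbatim. To repair your proof you must adopt this strengthened covering hypothesis; the shared order alone is not enough.
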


\begin{proof}[Proof (sketch)]
For any $\I$, if $\I \models \ord(s,\bfA,\prec)$, for each chain 
	of individuals $a_1, \dots, a_{n}$ with $(a_i,a_{i+1}) \in
        s^\I$ for all $1 \leq j < n$, we have $n \leq |\bfA|$.  
        This applies to $\Smod{\T}{\A}$, as $(\T,\A)$ is $\C$-admissible. 
    Further, $\C$ covers $\T$, so for each $S_r = \{s \mid r \cdot s \ISA r \in \T\}$,
       all $S_r$-paths in $\Smod{\T}{\A}$ have size $\leq \ell(\C)$.
        Finally, all $S_r$-paths in $\A$ w.r.t.\,$\T$ are also in
        $\Smod{\T}{\A}$, so their length is $\leq \ell(\C)$. 
\end{proof}

Lemmas  \ref{claim:kboundFO} 
and \ref{lemma:adm-kbound} give us the desired result: we obtain
FO-rewritability in the presence of CRIs, 
whenever order constraints allow us to guarantee boundedness. 

\begin{theorem}
	Let $\T$ be a  recursion safe \newDL TBox, $\C$ a set of order
        constraints that covers $\T$, and  $q$ a CQ.  
Let $q_{\C}$ be the $\ell(\C)$-rewriting of $q$ w.r.t.\, $\T$, where  $\ell(\C) =max \{ |\bfA| \mid ord(s, \bfA, \prec) \in \C \}$.
	Then, for each ABox $\A$ such that $(\T,\A)$ is consistent and
        $\C$-admissible, $cert(q,\T, \A)= cert(q_{\C},\emptyset,\A )$. 
\end{theorem}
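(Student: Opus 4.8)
The plan is to chain together the two lemmas that have just been established, after verifying that their hypotheses are met. The final theorem asserts an equality $cert(q,\T,\A) = cert(q_\C, \emptyset, \A)$, where $q_\C$ is the $\ell(\C)$-rewriting of $q$. First I would observe that $q_\C$ is by definition precisely the $k$-rewriting $rew(\hat q, \T_k)$ with $k = \ell(\C)$, so the right-hand side unfolds to $\bigcup_{q' \in rew(\hat q, \T_{\ell(\C)})} cert(q', \emptyset, \A)$. This matches exactly the right-hand side of Lemma~\ref{claim:kboundFO} once we know that $\A$ is $\ell(\C)$-bounded for $\T$. Thus the entire argument reduces to supplying that boundedness hypothesis and then invoking Lemma~\ref{claim:kboundFO}.

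\smallskip\noindent\textbf{The key step: establishing boundedness.} The crucial intermediate fact is that $\A$ is $\ell(\C)$-bounded for $\T$. This is supplied directly by Lemma~\ref{lemma:adm-kbound}: since $\T$ is recursion-safe, $\C$ covers $\T$, and $(\T,\A)$ is $\C$-admissible (which holds because $(\T,\A)$ is assumed consistent and $\C$-admissible in the theorem statement), we conclude that $\A$ is $\ell(\C)$-bounded for $\T$. The hypotheses of Lemma~\ref{lemma:adm-kbound} line up one-to-one with what the theorem assumes, so no additional work is needed here beyond citing it.

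\smallskip\noindent\textbf{Combining the pieces.} With $k = \ell(\C)$-boundedness in hand, I would apply Lemma~\ref{claim:kboundFO} to $\T$, its $\ell(\C)$-unfolding $\T_{\ell(\C)}$, the query $q$, and the ABox $\A$. This yields
\[
cert(q,\T,\A) \;=\; \bigcup_{q' \in rew(\hat q,\, \T_{\ell(\C)})} cert(q',\emptyset,\A).
\]
Since $q_\C$ is defined to be exactly $rew(\hat q, \T_{\ell(\C)})$ (read as the UCQ whose disjuncts are these rewritings), the right-hand side is $cert(q_\C, \emptyset, \A)$, giving the claimed equality.

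\smallskip\noindent\textbf{Where the real content lies.} The theorem itself is essentially a corollary: its proof is just a matter of checking that the hypotheses of the two lemmas match and composing them. The genuinely substantive work has already been discharged in Lemmas~\ref{claim:kboundFO} and~\ref{lemma:adm-kbound}. The only point requiring minor care is the bookkeeping identification of $q_\C$ with $rew(\hat q, \T_{\ell(\C)})$ and the accompanying notation: one must confirm that the consistency assumption on $(\T,\A)$ is what legitimizes reading $\C$-admissibility (defined via $\Smod{\T}{\A} \models \C$) together with Proposition~\ref{prop:recsafeABoxExt} so that the model-theoretic picture underlying Lemma~\ref{lemma:adm-kbound} is sound. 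I anticipate no obstacle beyond this routine alignment of definitions.
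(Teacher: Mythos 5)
Your proposal is correct and is exactly the paper's own argument: the paper proves this theorem by observing that Lemma~4 (the $k$-rewriting completeness for $k$-bounded ABoxes) and Lemma~5 ($\C$-admissibility plus covering implies $\ell(\C)$-boundedness) compose directly, which is precisely your chain of reasoning, including the identification of $q_\C$ with $rew(\hat{q},\T_{\ell(\C)})$.
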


The last ingredient we need to leverage this result is an efficient way to
test for $\C$-admissibility.  
This can be done efficiently using the procedure in
Algorithm~\ref{alg:Admissib}, which evaluates some queries over our small
model  $\Smod{\T}{\A}$, and runs in time that is polynomial in $\C$, $\T$, and
$\A$.  Note that the queries have very restricted shape, and
that all variables are mapped to ABox individuals; 
answering them is not only tractable, but likely to be efficient in 
practice. Moreover, although the test for $\C$-admissibility is data
dependent, 
it does not depend on any input query, so once it is established,
 FO-rewritability is guaranteed for any CQ. 


\begin{proposition}
	Checking $\C$-admissibility for recursion-safe \newDL KBs
       is feasible in polynomial time in combined complexity.
\end{proposition}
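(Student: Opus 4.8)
The plan is to prove the proposition by showing that Algorithm~\ref{alg:Admissib} is a correct decision procedure for $\C$-admissibility and that it runs in polynomial time in combined complexity. Recall that $(\T,\A)$ is $\C$-admissible exactly when $\Smod{\T}{\A} \models \C$, i.e.\ when $\Smod{\T}{\A} \models \ord(s,\bfA,\prec)$ for every $\ord(s,\bfA,\prec) \in \C$, which by definition amounts to the two conditions~(\ref{eqord:1}) and~(\ref{eqord:2}). Hence it suffices to verify that, for each constraint, the containment and disjointness tests performed by the algorithm are faithful to these two conditions, and that each such test is computable in polynomial time.

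First I would bound the size of the structure that the algorithm queries. By Definition~\ref{def:ABoxExt}, the domain of $\Smod{\T}{\A}$ is $D_0 \cup D_1 \cup D_2$, with $|D_0| = |ind(\A)|$, $|D_1| \le |ind(\A)| \cdot |\nr_\T|$ and $|D_2| \le |\nr_\T|$, where $\nr_\T$ (resp.\ $\nc_\T$) denotes the set of role (resp.\ concept) names occurring in $\T$; hence $|\Delta^{\Smod{\T}{\A}}|$ is polynomial in $|\T| + |\A|$. The concept and role extensions of $\Smod{\T}{\A}$ are the least sets closed under rules 1--7 of Definition~\ref{def:ABoxExt}. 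Since the number of candidate facts is bounded by $|\Delta^{\Smod{\T}{\A}}| \cdot |\nc_\T| + |\Delta^{\Smod{\T}{\A}}|^2 \cdot |\nr_\T|$, and each closure step adds at least one previously absent fact, this fixpoint is reached after polynomially many steps, so $\Smod{\T}{\A}$ can be built explicitly in polynomial time. By Proposition~\ref{prop:recsafeABoxExt} this $\Smod{\T}{\A}$ is a genuine model of the (satisfiable) input KB, so querying it is meaningful.

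Next I would establish correctness of the per-constraint tests. For a fixed $\ord(s,\bfA,\prec)$, the query $q_1$ returns $s^{\Smod{\T}{\A}}$, whereas $q_2$ (resp.\ $q_3$) returns the $s$-connected pairs whose concept memberships are $\prec$-increasing (resp.\ not $\prec$-related). Thus $ans(q_1,\Smod{\T}{\A}) \subseteq ans(q_2,\Smod{\T}{\A})$ says that $s^{\Smod{\T}{\A}} \subseteq \bigcup_{A_1 \prec A_2}(A_1^{\Smod{\T}{\A}} \times A_2^{\Smod{\T}{\A}})$, while $ans(q_1,\Smod{\T}{\A}) \cap ans(q_3,\Smod{\T}{\A}) = \emptyset$ is literally condition~(\ref{eqord:2}). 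I would then argue that the conjunction of these two tests is equivalent to (\ref{eqord:1})$\,\wedge\,$(\ref{eqord:2}): the first test implies (\ref{eqord:1}) trivially, since $\prec$-pairs are among all $\bfA$-pairs; conversely, assuming (\ref{eqord:1}) and (\ref{eqord:2}), any $(d,d') \in s^{\Smod{\T}{\A}}$ has witnesses $d \in A_1^{\Smod{\T}{\A}}$, $d' \in A_2^{\Smod{\T}{\A}}$ for some $A_1,A_2 \in \bfA$ by (\ref{eqord:1}), and (\ref{eqord:2}) forbids $A_1 \nprec A_2$, forcing $A_1 \prec A_2$ and hence the first containment. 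Therefore the algorithm returns $\mathbf{false}$ for some constraint iff $\Smod{\T}{\A}$ violates that constraint, i.e.\ iff $(\T,\A)$ is not $\C$-admissible.

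Finally, the complexity bound follows by counting. Each of $q_2,q_3$ is a union of at most $|\bfA|^2$ conjunctive queries, each having only the two variables $x,y$ and three atoms; evaluating one disjunct over $\Smod{\T}{\A}$ takes $O(|\Delta^{\Smod{\T}{\A}}|^2)$ time, so the three query evaluations and the set comparisons for one constraint cost polynomial time, and the outer loop iterates $|\C|$ times. Together with the polynomial-time construction of $\Smod{\T}{\A}$, this yields a polynomial bound in $|\C| + |\T| + |\A|$, as required. I expect the main obstacle to be not the routine counting but making the correctness equivalence of the previous paragraph watertight---in particular, handling domain elements that simultaneously belong to several $\bfA$-concepts, where both directions of the argument between the algorithm's containment/disjointness tests and the semantic conditions (\ref{eqord:1})--(\ref{eqord:2}) must be argued carefully.
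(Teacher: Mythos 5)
Your proposal is correct and follows essentially the same route as the paper: the paper justifies the proposition exactly by Algorithm~\ref{alg:Admissib}, which evaluates the queries $q_1,q_2,q_3$ for each constraint over the polynomial-size model $\Smod{\T}{\A}$ of Definition~\ref{def:ABoxExt}. You simply make explicit what the paper leaves implicit --- the polynomial-time fixpoint construction of $\Smod{\T}{\A}$, the equivalence between the two per-constraint tests and conditions~(\ref{eqord:1})--(\ref{eqord:2}), and the final counting --- and these details are argued correctly.
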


\subsection{Dimensional Navigation using Query Reformulation Operators}
\begin{figure}\centering
	\begin{tikzpicture}
		\draw [rounded corners] (0,5) rectangle (7,4);
		\node (Loc) [right] at (5.7,4.7) {$\rm Location$};
		\node (All) [right] at (2.5,4.5) {$\sf All$};
	
	\draw [rounded corners](0,3.5) rectangle (7,2.5);
	\node (Country) [right] at (5.75,3.2) {$\rm Country$};
	
	\node (AT) [right] at (1.3,3) {$\sf Austria$};
	\node (FR) [right] at (3.4,3) {$\sf France$};

	\draw [rounded corners](0,2) rectangle (7,1);
	\node (City) [right] at (6,1.7) {$\rm City$};
	\node (Vie) [right] at (1.3,1.5) {$\sf Vienna$};	
	\node (Par) [right] at (3.5,1.5) {$\sf Paris$};
	
	\draw [rounded corners](0,0.5) rectangle (7,-0.5);
	\node (Venue) [right] at (5.9,0.2) {$\rm Venue$};
	\node (StateOpera) [right] at (0,0) {$\sf StateOpera$};
	\node (VTh) [right] at (2,0) {$\sf VolksTheater$};
	\node (GarnierOpera) [right] at (3.7,0) {$\sf GarnierOpera$};
	
	
	\draw[->] (StateOpera)--(Vie);
	\draw[->] (VTh)--(Vie);
	\draw[->] (GarnierOpera)--(Par);
	
	\draw[->] (Vie)--(AT);
		\draw[->] (AT)--(All);
	
	\draw[->] (Par)--(FR);
	\draw[->] (FR)--(All);
	
	
	
	\draw[->, dashed]  (City)--(Country);
	\draw[->, dashed]  (Country)--(Loc);
	\draw[->, dashed]  (Venue)--(City);
	
	\end{tikzpicture}
	\caption{Dimension {\sf Location}. The categories are
Venue, City, Country, Location, and the dashed arrows represent the child-parent relation between them. 
The objects in the boxes are the members of each category, and  the solid arrows illustrate the child-parent 
relation between them.}
	\label{ex:DI}
\end{figure}
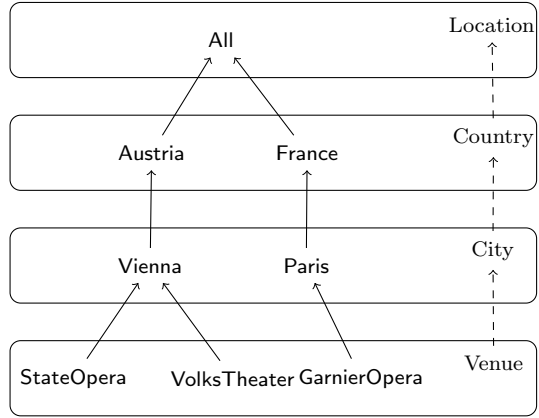
The purpose of the multi-dimensional data model is that it allows us to 
navigate data along the  different axes given by the dimensions, similarly to points in
a multi-dimensional space. 
This view lies, in fact, at the core of OLAP and similar data analytic 
applications.

In the OBDA setting, $\C$-admissible \newDL  and our query
reformulation operators enable such a 
dimensional navigation of data. 
To obtain the desired behavior, we need to ensure that the ontology contains
suitable CRIs for the roles that are used to access the dimensions. 
For example, if we use the role  $\role{occursIn}$ to query for the locations of 
events, we need to include the CRI that we have used in our examples: 
\begin{align}\role{occursIn \cdot locatedIn} \subsume
  \role{occursIn} \label{eqCRI}\end{align}
However, with these axioms and the description of the dimension explained
above, we can move up and down in the dimension retrieving data to different
levels of granularity.  
In the following examples, we illustrate how `rolling up' and `drilling down'
on the dimension are achieved.


\begin{example}[Roll up]
Let $\T_L$ be the extension of $\T_e$ with the CRI
\eqref{eqCRI}, and let $\A_L$ contain the assertions in $\A_e$ together with the assertions in
Example~\ref{examp:Dim}. 
We let $\alpha_1$ and $\alpha_2$ denote the following assertions in $\A_L$, to
which we refer below. 
\begin{align*}
\alpha_1 = & \role{locatedIn}({\sf StateOpera}, {\sf Vienna}) \\
\alpha_2 = & \role{locatedIn}({\sf Vienna}, {\sf Austria}) 
\end{align*}
Consider the following query: 
	\begin{align*}
		q(x)\leftarrow&\conc{Concert}(x), \role{occursIn}(x,y), y= {\sf Vienna}.
	\end{align*}
which returns as answers all concerts that occur  in Vienna (including, as
discussed, those that occur in a venue located in Vienna). 

We would like to `roll up' to the level of country in our location dimension,
and we expect to obtain the query
\begin{align*}
	q'(x)\leftarrow&\conc{Concert}(x), \role{occursIn}(x,y), y= {\sf Austria}.
\end{align*}
Indeed, we can obtain $q'$ from $q$ by applying  the following sequence of rules to $q$:
\begin{align*}
	  &y= {\sf Vienna} \in q, \text{ using }\alpha_2, \overset{\ref{gd:2}}{\Longrightarrow}  \\
	 &\begin{array}{ll}\quad q_1(x) \leftarrow &\conc{Concert}(x), \role{occursIn}(x,y),\role{locatedIn}(y,z), z={\sf Austria}\end{array} \\
	 &\role{occursIn}(x,y),	\role{locatedIn}(y,z) \in q_1,  \text{ using }\eqref{eqCRI}, \overset{\ref{g:6}}{\Longrightarrow}  \\
	&\qquad q_2(x)\leftarrow\conc{Concert}(x), \role{occursIn}(x,z), z = {\sf Austria}.
\end{align*}
%
The query is obtained by using relaxing rules only, that is,  $q
\gsArrowT{\K}{g}q_2$. So we know that all the answers of $q$ are preserved
when rolling up to $q'$. \demo
\end{example}
Of course, we are also interested in drilling down along our dimension in the converse
direction. 
\begin{example}[Drill down]
Starting again from the query
	\begin{align*}
		q(x)\leftarrow&\conc{Concert}(x), \role{occursIn}(x,y), y= {\sf Vienna}.
	\end{align*}
we now drill down  from the level of city to the level of venue. 
We perform the following operations on $q$:
 \begin{align*}
 &\role{occursIn}(x,y) \in q, \text{ using }\eqref{eqCRI} , \overset{{\ref{s:6}}}{\Longrightarrow}  \\
	&\begin{array}{ll}q_3(x)\leftarrow&\conc{Concert}(x), \role{occursIn}(x,z^q),\role{locatedIn}(z^q,y), y= {\sf Vienna}; \end{array}\\
  &\role{locatedIn}(z,y) \in q,  \text{ using } \alpha_1, \overset{\ref{sd:2}}{\Longrightarrow}\\
  &\begin{array}{ll}q_4(x)\leftarrow&\conc{Concert}(x), \role{occursIn}(x,z^q), z^q={\sf StateOpera},
   \role{locatedIn}(z^q,y),y= {\sf Vienna};\end{array} 
 \end{align*} 
We have
$q \gsArrowT{\K}{s} q_4$, which means that our drill down may restrict
the set of answers, but will not result in new ones.  \demo
\end{example}

\section{Related Work.}
The importance of CRIs was acknowledged since the earliest DL research, 
when \emph{role value maps} where considered very desirable \cite{Schmidt-SchaubB:1989:RVM}. 
The practical usefulness of CRIs lead to their inclusion in the OWL standard, 
both in  the OWL EL profile which is based on ${\cal EL}^{++}$ \cite{Baader2005}, 
and in full OWL 2 which is based on  ${\cal  SROIQ}$
\cite{Horrocks:2006:SROIQ}. 
Our work is also related to \emph{regular path queries} (RPQs) and their
extensions. In fact, the kind of query answering we advocate is
naturally supported in any ontology mediated setting where the DL has CRIs, or
the query language contains conjunctive RPQs; many such settings have been
considered in the literature and their complexity is well understood, see
\cite{DBLP:conf/amw/Ortiz13,DBLP:conf/rweb/OrtizS12} for references. 
However, any such combination is necessarily \textsc{NLogSpace}-hard in data complexity,
and the combined complexity is usually \textsc{PSpace}-hard even for lightweight DLs
\cite{Bienvenu:2015:RPQ}. Our focus here was on regaining FO-rewritability,
and tractable combined complexity.

 Approaches to query reformulation by removing or relaxing conditions to  return more answers is a 
problem that has been extensively studied in various communities\cite{Chaudhuri90,HuangLZ12,DologSWD09,InoueW11}.
%
%
For answering SPARQL queries over RDF data, typical relaxation steps consist in replacing a class by a superclass, a property by a superproperty, a URI or literal node by a variable, and also inserting/removing properties in property paths\cite{HurtadoPW08,ElbassuoniRW11}. 
Most of these approaches are based on similarity measures or use simple (RDF) ontologies to retrieve additional answers of possible relevance \cite{ReddyK10,HuangLZ08,HuangL10,VirgilioMT13}. 
%
The work by~\cite{DologSWD09} proposes an approach for relaxing  queries based on a rule-based query rewriting framework for RDF queries, This rather general approach is guided by domain knowledge dependent  preferences, and user preferences. \cite{FrosiniCPW17} propose query processing algorithms based on query rewriting  for SPARQL extended with query approximation and relaxation operators. 
A principled logical approach is followed by~\cite{InoueW11} for defining relaxations of conjunctive queries in so-called cooperative knowledge bases.
\cite{MartinenghiT14} propose a data model and query languages to support query relaxation over relational data. Their approach relies on simple taxonomies classifying terms used in the schema and data according to ad hoc generalization/specialization relationships. One focal point of the work is the development of abstract query languages for expressing relaxed queries over relational databases.  

The \emph{many-answers} problem, where given an initial query that returns a large number of answers has been studied for structured databases.
In this case, interactive faceted search approaches~\cite{RoyWDNM08,KashyapHP10} implementing effective drill-down strategies for helping  the user find acceptable results with minimum effort have been proposed. These approaches however, do not make use of domain knowledge. The work by~\cite{FacetedSearch2016,FacetedSearchAgg2017} address the theoretical underpinnings of faceted search in the context
of RDF and knowledge graphs. The main focus of the work on faceted search is to provide mechanisms  enabling  exploration of the underlying data and ontology, rather than on the deliberate construction of queries. An approach for evaluating  queries under generalization/specialization relations is presented in~\cite{AndreselOS16}. This work proposes a compilation technique that minimize data access in an OBDA setting, that allows to explore answers to queries along generalization/specialization steps.

In contrast to all the above work, our focus is on providing formalizations of the notions of generalizations and specialization wrt. ontological knowledge, and a principled extension for representing knowledge about multidimensional data in DLs without incurring in an increase of the data complexity of answering CQs. We consider the \dllite family of DLs as an starting point for our study, since these logics are well suited for OBDA. 
%
%

%


%
%

The notion of dimension used here is basis of the \emph{multi-dimensional
	data model} used for online-analytical processing (OLAP)
\cite{HurtadoMendelzonMD}.
Logic-based formalizations of dimensions and  multi-dimensional data schemata
have been proposed in the literature. 
Some works focus on modeling  such data and use DLs to reason about the
models, rather than for querying \cite{Franconi99adata,FranconiKamble2004}. 
A recent work in the database area focuses 
on operators for taxonomy-based relaxation of queries over 
relational data \cite{MartinenghiT14}.
Our work is closely related
to \cite{Bertossi:2018:OMD}, but  they rely on an expressive
fragment of Datalog$^{\pm}$ where dimensional knowledge can be easily
leveraged at the expense of higher complexity (i.e., not FO-rewritable).




\section{Discussion and Conclusions}
In this paper we have motivated the use of CRIs for getting more complete answers in the OBDA setting, and
we have introduced  extensions of  \dlLiteR that allow restricted forms of CRIs to preserve  FO-rewritability.  The restriction to simple roles in CRIs (Definition \ref{def:CRI}) guarantees that recursion is \emph{linear}  and  avoid a possible explosion in the size 
of rewritings.  An investigation of \newDL without
this restriction is left for future work, as well as studying CRIs and order constraints in other description logic languages.
In our first extension, \newDLnr, we disallow recursive CRIs and we showed that it is FO-rewritable, however, even for this restrictive case, CRIs lead to intractability (testing consistency is co-NP-complete). Next, we showed that recursion-safe condition allows recursive CRIs while preserving polynomial  complexity for consistency testing and instance query anwering. Lastly, we proved that if the ABox satisfies certain conditions, namely if the chains which trigger the recursive CRIs are bounded, then FO-rewritability is ensured.

We presented query reformulation rules that produce query relaxations and restrictions over any dataset, and 
more fine-grained rules that leverage the existing data. 
Finally, we have argued that \emph{admissibility of order constraints}  can describe multidimensional data, 
and that our reformulation rules
enable  navigation along dimensions, while preserving or refining answers. 
In our  query reformulation section, we have proposed a  few data-driven rules which 
intuitively take into account some patterns in the data. While there are multiple such patters that can be used for reformulating queries, our focus on these particular rules was motivated by our purpose to enhance dimensional navigation. Testing those patterns amounts to  a test for containment of certain answers for restricted queries over \newDL KBs that satisfy special properties: TBox is recurion safe and ABox is $k$-bounded. In this case, such test is $AC_0$ in data complexity. As a future research direction, we might  explore  additional data patterns that can support more flexible query reformulation. 

In  this paper, we have focused on query reformulations that either relax or restrain queries.  
It would be desirable to efficiently compute the answers to these reformulations. Thus,  we plan to investigate mechanisms  for compiling 
the data and the ontology  to support efficient answering of reformulated queries. 
Suitable  syntax and semantics of a declarative query language, in which relaxing or 
restraining operators   are first-class citizens, would definitely benefit OBDA.
Regarding the relation with multidimensional data, it would be interesting to consider aggregation and investigate whether  our operators are suitable 
for data analysis tasks, much like what OLAP systems are currently supporting.


\bibliography{bibfile}
\bibliographystyle{alpha}

\newpage

\section*{Appendix}

We start by extending the standard definition of  the \emph{chase
   procedure} of (Calvanese et al.\,2007) to \newDL. 
This construction will be used in the proofs below. 

\begin{definition}[Chase procedure]
	Let ${\cal K}=(\T, \A)$ be a \newDL KB and $f$ is a function which takes as input a set of assertions $\Gamma$ and a TBox axiom $\alpha$, and outputs the effect of applying $\alpha$ to $\Gamma$.
 We assume  the TBox to be normalized.
 We define $$chase({\cal K}) = \underset{j \in \mathbb{N}}{\bigcup}{\cal
   S}_j,$$ with ${\cal S}_0 = \A$, and ${\cal S}_{j+1}={\cal S}_j \cup
 f_{\alpha}(\Gamma)$, where  $\Gamma \subseteq {\cal S}_j$:
	\begin{itemize}
		\item if $\Gamma = \{A(a)\}$, and $A'(a) \notin {\cal S}_j$, then $f_{A \subsume A'}(\Gamma) = \{A'(a)\}$;
		\item if $\Gamma = A(a)$, and there is no individual $b$ such that $r(a,b) \in {\cal S}_j$, then $f_{A \subsume \exists r}(\Gamma)=\{r(a,a_{new})\}$;
		\item  if $\Gamma = r(a,b) $, then$f_{\exists r \subsume A}(\Gamma) = \{A'(a)\}$;
		\item if $\Gamma= \{r(a,b)\} $, and $s(a,b) \notin {\cal S}_j$, then $f_{r \subsume s}(\Gamma)=\{s(a, b)\}$;
		\item if $\Gamma = \{r(a,b)\}$, and $s(b,a) \notin {\cal S}_j$, then $f_{r \subsume s^-}(\Gamma)=\{s(b, a)\}$;
		\item if $\Gamma = \{s(a,b),t(b,c)\} $, and $r(a,c) \notin {\cal S}_j$, then $f_{s \cdot t \subsume r} (\Gamma)= \{r(a,c)\}$;
	\end{itemize}
	where $A, A' \in \nc$, $r,s,t \in \nr$, $a_{new} \in \nind \setminus \Sigma_{{\cal S}_j}$.	
\end{definition} 
We denote $chase_i(\K)=\underset{0 \leq  j \leq i}{\bigcup}{\cal S}_j$ to be the chase obtained after $i$ applications of the above rules.
We can construct from the chase, in a natural way, an interpretation $\I_{chase(\T,\A)}$, which in fact represents the \emph{canonical model} of the KB.
The following claim states that we can use $\I_{chase(\T,\A)}$, to evaluate CQs.
\begin{claim}
	Let $(\T, \A)$ be a \newDL KB and  $\I_{chase(\T,\A)}$ the interpretation constructed from $chase(\T, \A)$. If $(\T, \A)$ is consistent, then the following hold:
	\begin{enumerate}
		\item[(i)] $\I_{chase(\T,\A)}$ is a model of $(\T,\A)$; and
		\item[(ii)] for any CQ $q$: $cert(q,\T,\A)= ans(q, \I_{chase(\T,\A)})$.
	\end{enumerate}
\end{claim}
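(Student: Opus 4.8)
The plan is to follow the classical canonical-model argument for \dlLiteR of (Calvanese et al.\,2007), adapting it to the single new ingredient, the CRI closure rule $f_{s\cdot t\subsume r}$. Throughout I assume the chase is built under a \emph{fair} strategy, so that every rule instance that becomes applicable at some stage is eventually fired; since $chase(\K)=\bigcup_j \mathcal{S}_j$ this makes the chase a fixpoint under all the rules. For part (i) I first note that $\I_{chase(\T,\A)}$ satisfies every ABox assertion because $\mathcal{S}_0=\A$, and then verify the positive axioms case by case. For $A\ISA A'$, the rule guarantees $A'(a)$ whenever $A(a)$ is present; for $A\ISA\exists r$, a fresh witness $a_{new}$ is added; for $\exists r\ISA A$ and the role inclusions $r\ISA s$, $r\ISA s^-$ the corresponding atoms are forced; and for a CRI $s\cdot t\ISA r$ the rule $f_{s\cdot t\subsume r}$ adds $r(a,c)$ whenever $s(a,b),t(b,c)$ are present. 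Fairness ensures none of these is ever left unfired, so all positive inclusions and all CRIs hold in $\I_{chase(\T,\A)}$.

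The disjointness (negative) axioms are not touched by the chase rules, and this is exactly where consistency enters. The plan is to establish, as the key lemma, that there is a homomorphism $h$ from $\I_{chase(\T,\A)}$ into \emph{every} model $\mathcal{J}$ of $(\T,\A)$ that fixes each individual of $\A$. Granting this, if $\I_{chase(\T,\A)}$ violated some $\mathbf{disj}(B_1,B_2)$ at an element $d$, then $h(d)$ would violate the same axiom in $\mathcal{J}$; but consistency supplies at least one such $\mathcal{J}$, a contradiction. The same argument handles role-disjointness. Hence all negative axioms hold as well, completing part (i).

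For the homomorphism lemma I would proceed by induction on the chase stages. The base case sets $h$ to the identity on the individuals of $\A$, which is forced by the standard name assumption. In the inductive step the only rule that introduces a new domain element is $A\ISA\exists r$: when $a_{new}$ is created as an $r$-successor of $a$, we have $A(a)\in\mathcal{S}_j$, so by induction $h(a)\in A^{\mathcal{J}}$; since $\mathcal{J}\models A\ISA\exists r$ there is some $e$ with $(h(a),e)\in r^{\mathcal{J}}$, and we extend $h$ by $h(a_{new})=e$. Every other rule adds only atoms among already-mapped elements, and $h$ respects the new atom precisely because $\mathcal{J}$ satisfies the triggering axiom; in the crucial CRI case, $(h(a),h(b))\in s^{\mathcal{J}}$ and $(h(b),h(c))\in t^{\mathcal{J}}$ yield $(h(a),h(c))\in r^{\mathcal{J}}$ because $\mathcal{J}\models s\cdot t\ISA r$. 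This gives the desired $h$. Part (ii) then follows: the inclusion $cert(q,\T,\A)\subseteq ans(q,\I_{chase(\T,\A)})$ is immediate from part (i), since $\I_{chase(\T,\A)}$ is itself a model; conversely, any match $\pi$ of $q$ in $\I_{chase(\T,\A)}$ composes with $h$ to a match $h\circ\pi$ in $\mathcal{J}$, and because the answer variables are mapped to individuals that $h$ fixes, the answer tuple is preserved, so $ans(q,\I_{chase(\T,\A)})\subseteq cert(q,\T,\A)$ as $\mathcal{J}$ ranges over all models.

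The main obstacle I anticipate is bookkeeping around the possibly \emph{infinite} chase: with recursive CRIs the construction need not terminate, so fairness must be stated carefully and the homomorphism built over all stages. This causes no real trouble for part (ii), however, because a match of a CQ touches only finitely many domain elements, each already present at some finite stage $\mathcal{S}_j$, so the restriction of $h$ to that finite prefix suffices. The only genuinely new content relative to the \dlLiteR proof is checking that the CRI rule both preserves modelhood (paragraph one) and is compatible with the homomorphism (paragraph three); both are clean, since a CRI only adds role edges between existing elements and such edges are preserved by any homomorphism into a model of the CRI. I therefore expect the argument to be routine once the canonical-model machinery is set up.
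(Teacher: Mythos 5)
Your proof is correct and takes essentially the same route as the paper: the paper's entire proof of this claim is the remark that it extends the \dlliteR argument of Calvanese et al.\,by ``showing that there exists a homomorphism from $\I_{chase(\T,\A)}$ to any model of the KB,'' which is precisely the key lemma you construct and then use for both the disjointness axioms in part (i) and the $ans \subseteq cert$ direction of part (ii). Your explicit treatment of fairness, the CRI rule, and the possibly infinite chase merely fills in details the paper leaves implicit.
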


The proof of the claim is again an extension of the same proof for \dlLiteR, which is done by showing that there exists a homomorphism from $\I_{chase(\T,\A)}$ to any model of the KB.



\subsection*{Correctness of query rewriting}



\begin{lemma2}{3}\label{lemma:perfectref} 
	Let $\T$ be a \newDLnr TBox, $q$ a CQ. For every ABox $\A$ consistent with
	$\T$:
	\[cert(q,\T,\A) = \underset{q' \in \mathit{rew}(q,\T)}{\bigcup}
	cert(q',\emptyset,\A).\]
\end{lemma2}

\begin{proof}  ~\paragraph{Direction "$\supseteq$"}
	We show that for each $q' \in  \mathit{rew}(q,\T)$ we have that $cert(q', \emptyset, \A) \subseteq cert(q,\T ,\A)$. When $q'$ is obtained in one of cases $\bf S1-5$ or $\bf S7$, then it holds from (Calvanese et al, 2007). We argue for $\bf S6$: let $q'$ be obtained from $q$ by applying atom substitution $\theta = [r(x,y)/\{t(x,z^q),(z^q,y)\}]$. Then $t \cdot s \ISA r \in \T$ and $r(x,y) \in q$.
	Let $\pi$ be a match of $q'$ in $\I_\A$, which is the interpretation constructed in a natural way from ABox $\A$, such that $\pi(x)=a$, $\pi(z^q)=b$, and $\pi(y)=c$, where $a,b,c \in ind(\A)$. Then, $\pi$ is also a match of $q'$ in $\I_{chase(\T,\A)}$ the interpretation constructed from $chase(\T,\A)$, therefore there must be that $t(a,b),s(b,c) \in chase(\T,\A)$. From the construction of the chase, we get that $r(a,c) \in chase(\T,\A)$. Since $z^q \notin vars(q)$ and if $\Gamma \in q$ and $\Gamma \neq r(x,y)$, then $\Gamma \in q'$. Therefore, the match $\pi \mid_{(vars(q))}$ ($\pi$ restricted to $vars(q)$) is also a match of $q$ in $\I_{chase(\T,\A)}$. The interpretation $\I_{chase(\T,\A)}$ represents the cannonical mode of $(\T,\A)$ over which certain answers of CQs can be obtained. Therefore we can conclude $cert(q,\T,\A) \supseteq  cert(q',\emptyset,\A)$, for each $q' \in \mathit{rew}(q,\T)$.
	
	 ~\paragraph{Direction "$\subseteq$"}
	Let $\vec{t} \in cert_{\A}(\T,q)$ and we assume that $chase_k(\T,\A)$ contains a match for $q(\vec{x})$. We define ${\cal G}_k \subseteq  chase_k(\T,\A)$ to be a \emph{witness} of $\vec{t}$ w.r.t. $q$ in $chase_k(\T,\A)$, if there exists a substitution $\varphi$ from existentially quantified variables in $q$ to individuals in ${\cal G}_k$
	 such that ${\cal G}_k = \varphi(q(\vec{t}))$. For $i \in \{0,\dots,k\}$, ${\cal G}_{k-i}$ is a \emph{pre-witness} of $\vec{t}$ w.r.t. $q$ in $chase_k(\T,\A)$, and it is defined as follows: \\
	\quad $\begin{aligned}
	{\cal G}_{k-i} = \{ &\Gamma \in chase_{k-i}(\T,\A) \mid  \text{ there exists }\alpha_1, \dots, \alpha_i \\
	& \text{such that } f_{\alpha_i}(f_{\alpha_{i-1}}(\dots (f_{\alpha_1}(\Gamma)))) \in {\cal G}_k\}
	\end{aligned}$ \\
	A pre-witness contains the set of assertions in $chase_{k-i}(\T,\A)$ that trigger assertions in $ {\cal G}_k$, through application of TBox axioms.
	We have to show by induction that for each pre-witness ${\cal G}_{k-i}$ there exists $q' \in \mathit{rew}(q,\T)$ such that ${\cal G}_{k-i}$ is a witness of $\vec{t}$ w.r.t. $q'$ in $chase_{k-i}$.
	
	\textit{Base step}: $i=0$, then  $q \in \mathit{rew}(q,\T)$ and $chase_k(\T,\A)$ contains ${\cal G}_k$ which is a witness of $\vec{t}$ w.r.t. $q$.
	
	\textit{Induction step}: Assume that for ${\cal G}_{k-i+1}$ there exists $q' \in \mathit{rew}(q,\T)$ such that ${\cal G}_{k-i+1}$ is a witness of $\vec{t}$ w.r.t. $q'$ in $chase_{k-i+1}$. 
	
We do a case distinction according to the axiom used to obtain $chase_{k-i+1}$ from $chase_{k-i}$. We only give here the proof for  $\alpha= s \cdot t \subsume r \in \T$, which is new, and for $\alpha=  A \subsume  \exists r \in \T$ which requires rule {\bf S7}.
The other cases are analogous adaptations of the proof in (Calvanese et al.\,2007). 

\begin{itemize}
\item	Let $chase_{k-i+1}$ be obtained from $chase_{k-i}$ by applying axiom $\alpha= s \cdot t \subsume r \in \T$.  Let $s(a,b),t(b,c) \in chase_{k-i}$ such that $r(a,c) \notin chase_{k-i}$, then $chase_{k-i+1} = chase_{k-i} \cup \{r(a,c)\}$.
	 If $r(a,c) \notin {\cal G}_{k-i+1}$, then ${\cal G}_{k-i}={\cal G}_{k-i+1}$ therefore the claim holds. If $r(a,c) \in {\cal G}_{k-i+1}$ then ${\cal G}_{k-i}$ contains $s(a,b),t(b,c)$ and there must be some $r(x,y) \in q'$. From rule {\bf S6}, $\alpha$ is applicable to $q'$, obtaining $q'' = (q' \setminus \{r(x,y)\})\cup \{s(x,x'),t(x',y)\}$ where $x'$ is a fresh variable. Then the substitution of $q'$ in ${\cal G}_{k-i+1}$ can be extended for mapping $x'$ to $b$, hence ${\cal G}_{k-i}$ is a witness of $\vec{t}$ w.r.t. $q''$ in $chase_{k-i}$.

\item	Let  $chase_{k-i+1}$ be obtained from $chase_{k-i}$ by applying axiom $\alpha=  A \subsume  \exists r \in \T$. Let $A(a) \in chase_{k-i}$ such that $r(a, a_{new}) \notin chase_{k-i}$, then $chase_{k-i+1} = chase_{k-i} \cup \{r(a, a_{new})\}$. It follows that $r(x,y) \in q'$ and suppose $\alpha$ is not applicable to $q'$, then there must be another atom $\varPhi(y) \in q'$. Since $a_{new}$ is a fresh constant not occuring anywhere else in ${\cal G}_{k-i+1}$, it must be that $\varPhi(y)$ is mapped to 
	 $r(a, a_{new})$, hence $r(z,y) \in q'$. Using rule {\bf S7} we can substitute $z$ with $x$ and obtain query $q'' = q' \setminus \{r(z,y)\}$. Using rule {\bf S2} we obtain $\hat{q}= (q'' \setminus \{r(x,y)\}) \cup \{A(x)\}$. Hence ${\cal G}_{k-i}$ is a witness of $\vec{t}$ w.r.t $\hat{q}$ in $chase_{k-i}(\T,\A)$.	
\end{itemize}
\end{proof}

\subsection*{Consistency testing for \newDLnr}

\begin{theorem2}{1}
	Let $\varphi$ be a 3SAT formula and $(\T_{\varphi}, \A_{\varphi})$ be the \newDLnr KB reduction of $\varphi$. Then, $\varphi$ is unsatisfiable iff $(\T_{\varphi}, \A_{\varphi})$ is satisfiable. 
\end{theorem2}

\begin{proof}
	~\paragraph{Direction "if":} Assume that $\varphi$ is unsatisfiable, then for each variable assignment $\tau$, there exists some clause $c_j$ such that $\tau(c_j) = \bot$. Let $\I_{chase(\T_{\varphi},\A_{\varphi})}$ be the interpretation constructed from $chase(\T_{\varphi},\A_{\varphi})$.
	It is clear that $\I_{chase(\T_{\varphi},\A_{\varphi})}$ is a model of all positive inclusions of $\T_{\varphi}$ as well as a model of $\A_{\varphi}$. From the definition of the chase procedure, we generate new individuals for each $A_i \ISA \exists r_{x_i}$ and $A_i \ISA \exists \overline{r}_{x_i}$, therefore $\I_{chase(\T_{\varphi},\A_{\varphi})} \ent {\bf disj}(r_{x_i}, \overline{r}_{x_i})$ for each propositional variable $x_i \in \varphi$. It easily follows that in $\I_{chase(\T_{\varphi},\A_{\varphi})}$, for each instance $b$ of $A_{m+1}$  there exists a clause $c_j$ such that $b$ is not an instance of $\exists (s_{c_j}^*)^-$, meaning that the variable assignment of the path from the root individual $a$ to the leaf $b$ does not satisfy 
	clause $c_j$. Therefore, from axioms (2) and (3) it follows $\exists (p_j^{j-1})^- \ISA \exists t_j$ does not trigger on this path, therefore since $j \leq n$, and path $(a,b)$ is arbitrarily chosen we get that $\I_{chase(\T_{\varphi},\A_{\varphi})} \ent \exists t_n \ISA \bot$.
	
	~\paragraph{Direction "iff":} If $(\T_{\varphi}, \A_{\varphi})$ is satisfiable then $\I_{chase(\T_{\varphi},\A_{\varphi})}$ is a model. Since $\I_{chase(\T_{\varphi},\A_{\varphi})} \ent \exists t_n \ISA \bot$, then from axioms (2) and (3) we get that for  instance of $A_{m+1}$ is not an instance of $\exists (s_{c_1}^*)^- \sqcap \dots  \sqcap \exists (s_{c_n}^*)^-$; hence for some $A_{m+1}(b) \in chase(\T_{\varphi},\A_{\varphi})$ and some role $s_{c_j}^*$, for all $1 \leq i < m$, we get that $b$ is not an instance of $\exists (s_{c_j}^i)^-$ in $\I_{chase(\T_{\varphi},\A_{\varphi}}$. Therefore clause $c_j$ is not satisfied by the variable assignement denoted by path from root $a$ to $b$, and since $b$ an arbitrary leaf node, we get that $\varphi$ is unsatisfiable.
	

	
\end{proof}

\subsection*{Recursion safe \newDL}
\begin{proposition2}{1}
	Let $\T = \T_p \cup \T_n$ be a recursion safe \newDL TBox, where $\T_p$ contains only positive inclusions, and $\T_n$ contains only disjointness axioms. Then, for every ABox $\A$:
	\begin{enumerate}
		\item[$\bf P1$] \label{C1} If $(\T,\A)$ is satisfiable, then $\Smod{\T}{\A} \ent (\T, \A)$; 
		\item[$\bf P2$] \label{C2} $(\T,\A)$ is inconsistent iff  $\Smod{\T}{\A}
		\not\models \alpha$ for some $\alpha \in \T_n$.
		\item[$\bf P3$] \label{C3} If $\A$ is consistent with $\T$, then for any instance query $q$, we have that $cert_{\A}(q,\T) = ans(q, \Smod{\T}{\A})$.
	\end{enumerate}
\end{proposition2}

\begin{proof}[Proof] 
	We first proceed by showing a key property of $\Smod{\T}{\A}$.  For any given interpretation $\I$ and any $d,d' \in \Delta^{\I}$, let $tp_{\I} (d) = \{B \mid d \in B^{\I}\}$, and $tp_{\I}(d,d') = \{r \mid (d,d') \in r^{\I} \}$.
 	Assume that $(\T,\A)$ is consistent and let $\I$ be an arbitrary model. 
	We proceed by showing the following claim:
	\begin{claim}
		For any given $d \in \Delta^{\Smod{\T}{\A}}$ {\emph{(i)} there exists $e \in \Delta^{\I}$ such that $tp_{\Smod{\T}{\A}}(d) \subseteq tp_{\I}(e)$ and \emph{(ii)} for each $d' \in \Delta^{\Smod{\T}{\A}}$ such that $tp_{\Smod{\T}{\A}} (d,d') \neq \emptyset$ we have that there exists $e' \in \Delta^{\I}$ such that $tp_{\Smod{\T}{\A}} (d,d') \subseteq tp_{\I} (e,e')$}.
		\label{prop:1}
	\end{claim}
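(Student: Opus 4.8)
The plan is to factor the required matching through the canonical model $\I_{chase(\T,\A)}$. Since $(\T,\A)$ is consistent, $\I_{chase(\T,\A)}$ is a model of $(\T,\A)$, and (by the homomorphism property of the chase established above) there is a homomorphism $h_0$ from $\I_{chase(\T,\A)}$ into the arbitrary model $\I$. As $h_0$ preserves concept and role memberships, it satisfies $tp_{\I_{chase(\T,\A)}}(c) \subseteq tp_{\I}(h_0(c))$ for every $c$ and $tp_{\I_{chase(\T,\A)}}(c,c') \subseteq tp_{\I}(h_0(c),h_0(c'))$ for every pair $c,c'$. Hence it is enough to establish the claim with $\I$ replaced by $\I_{chase(\T,\A)}$ and then compose with $h_0$, setting $e = h_0(\hat d)$ and $e' = h_0(\hat{d'})$ for the chase witnesses $\hat d,\hat{d'}$ produced below.

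To match $\Smod{\T}{\A}$ into the chase, I would first define a folding map $\eta$ from $\Delta^{\Smod{\T}{\A}}$ to $\Delta^{\I_{chase(\T,\A)}}$: every individual $a \in D_0$ is sent to itself; every $c_{ar} \in D_1$ is sent to the chase element created as the $r$-successor of $a$; and every shared filler $c_r \in D_2$ is sent to some chase element introduced as an anonymous $r$-successor (all such elements carry the same type, namely the one generated by $\exists r^{-}$ and the concept inclusions of $\T$). I would then prove, by induction on the stage at which facts enter the minimal fixpoint defining $\Smod{\T}{\A}$, the two statements: $(a)$ if $d \in A^{\Smod{\T}{\A}}$ then $\eta(d) \in A^{\I_{chase(\T,\A)}}$; and $(b)$ for every edge $(d,d') \in r^{\Smod{\T}{\A}}$ there is $\hat{d'}$ with $tp_{\Smod{\T}{\A}}(d,d') \subseteq tp_{\I_{chase(\T,\A)}}(\eta(d),\hat{d'})$. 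The inductive cases follow conditions $1$--$7$ of Definition~\ref{def:ABoxExt}: ABox assertions occur verbatim in the chase; the concept and role inclusions fire identically in both structures; each existential edge is matched by a fresh chase successor of the same type; and each CRI application transfers because the chase is closed under the same axiom.

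Granting $(a)$ and $(b)$, the claim follows by composition. For a given $d$, put $e = h_0(\eta(d))$; then $tp_{\Smod{\T}{\A}}(d) \subseteq tp_{\I_{chase(\T,\A)}}(\eta(d)) \subseteq tp_{\I}(e)$, which is $(i)$. For a neighbour $d'$ of $d$, take the $\hat{d'}$ from $(b)$ and put $e' = h_0(\hat{d'})$; then $tp_{\Smod{\T}{\A}}(d,d') \subseteq tp_{\I_{chase(\T,\A)}}(\eta(d),\hat{d'}) \subseteq tp_{\I}(e,e')$, which is $(ii)$.

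The hard part will be step $(b)$ over the folded anonymous region. Because $\Smod{\T}{\A}$ collapses all anonymous $r$-successors into a single $c_r$, which may even lie on cycles among the shared fillers, there is in general no \emph{global} homomorphism from $\Smod{\T}{\A}$ into the tree-shaped chase, so I can only hope to preserve one-step neighbourhoods. I would therefore have to argue that the folding is type-faithful, i.e.\ that $c_r$ and each chase element it represents carry exactly the type generated by $\exists r^{-}$ and have matching incident roles. The delicate point is to rule out role edges created by CRIs inside the anonymous region that the folding might misrepresent, and this is precisely where recursion-safety enters: its second condition forces the simple role guarding any recursive CRI to be not existentially implied, so recursive CRIs can only fire on pairs containing an ABox individual and never among anonymous elements, while the non-recursive CRIs fire in the same way in $\Smod{\T}{\A}$ and in the chase. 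Consequently the one-step neighbourhoods of $c_r$ and of its chase representatives agree up to role inclusion, which is all that $(b)$ needs.
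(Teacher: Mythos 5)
Your strategy is sound, and it genuinely differs from the paper's. The paper proves the claim by a direct induction on the fixpoint construction of $\Smod{\T}{\A}$, comparing it immediately against the arbitrary model $\I$, with a case split on whether $d$ lies in $D_0$, $D_1$ or $D_2$; you instead factor the comparison through the canonical model, building a one-step-neighbourhood-preserving folding $\eta:\Delta^{\Smod{\T}{\A}}\to\Delta^{\I_{chase(\T,\A)}}$ and composing with the chase-to-model homomorphism $h_0$. The mathematical core is shared: both proofs rest on the observations that only local (one-step) preservation can be hoped for, since $\Smod{\T}{\A}$ collapses anonymous successors into shared fillers, and that recursion-safety prevents CRI guards from being existentially implied, so guard edges (hence CRI firings) occur only at ABox individuals, both in $\Smod{\T}{\A}$ and in the chase. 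What your factorization buys is modularity: the inductive invariant is checked against a concrete, minimal structure whose closure rules mirror conditions 1--7 of Definition~\ref{def:ABoxExt}, and the arbitrariness of $\I$ is discharged once and for all by the universality of the chase (which the paper establishes anyway); the paper's direct route avoids relying on that universality, but must re-derive inside its induction what $h_0$ gives you for free.

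Three loose ends to tighten. First, $\eta$ as you define it may be undefined: the chase rule for $A\ISA\exists r$ is lazy, so if $r(a,b)\in\A$ (or another existential already fired) no fresh $r$-successor of $a$ is ever created; send $c_{ar}$ to an arbitrary $r$-witness of $a$ in the chase instead, which suffices since you only need containment of types, not equality. Second, your parenthetical that all anonymous $r$-successors of the chase carry the same type is not quite accurate: a fresh $r$-successor of an ABox individual $a$ can acquire extra memberships $\exists t'$ via a CRI firing on its inverse edge into $a$ followed by a guard edge leaving $a$, while successors of anonymous elements cannot; this is harmless for your containments (and the shared fillers $c_r$ have only anonymous parents), but the claim as stated is false in general. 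Third, when arguing that ``each CRI application transfers,'' you need the witness choice to be the identity on pairs from $D_0\times D_0$, so that the two premises of a CRI are witnessed at the \emph{same} middle element in the chase; this is implicit in $\eta$ being the identity on $D_0$, but it is precisely the point where the inductive invariant must be stated carefully.
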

	
	
	\begin{itemize}
		\item We start by proving the claim for each $d \in D_0$.
		Assume there exists $B \in tp_{\Smod{\T}{\A}}(d)$ and $B \notin tp_{\I}(d^{\I})$. If $B(d) \in \A$,  this directly leads us to a contradiction since $\I$ is a model. As induction hypothesis, we assume there exists some $B' \in tp_{\Smod{\T}{\A}}(d)$ such that $B' \in tp_{\I}(d^\I)$. Then, either
		\inparaenum[a)]{\item $B' \ISA^* B \in \T$, case in which the contradiction is obvious, or \item $B= \exists r$ and here we can distinguish two sub-cases: \inparaenum[1.]{\item $B' \ISA \exists ^*r'$ and $r' \ISA^* r$ in $\T$, case that is also straightforward, or \item there exists $d_1,d_2 \in \Delta^{\Smod{\T}{\A}}$ such that either $(d,d_1) \in t^{\Smod{\T}{\A}}$, $(d_1,d_2) \in s^{\Smod{\T}{\A}}$ and $t \cdot s \ISA r \in \T$, or $(d_1,d_2) \in t^{\Smod{\T}{\A}}$, $(d_2,d) \in s^{\Smod{\T}{\A}}$ and $t \cdot s \ISA r^-$; since $s \in \nrsimple$ and $\exists s$ does not occur on rhs of any axiom of $\T$, only case 5 (in the definition of $\Smod{\T}{\A}$) can be applied for obtaining $(d_1,d_2) \in s^{\Smod{\T}{\A}}$, respectively $(d_2,d) \in s^{\Smod{\T}{\A}}$ hence $d_2 \in D_0$ and if $d_1 \in D_0$, then clearly  $\I \ent B(d)$; if $d_1 \in D_1$, then  $d_1=c_{dt}$ or $d_1= c_{d_2t^-}$ and $B^* \in tp_{\Smod{\T}{\A}}(d)$, respectively $B^* \in tp_{\Smod{\T}{\A}}(d_2)$ and $B^* \ISA t^{(-)}$. By our assumption it must be that $B^* \in tp_{\I}(d)$, or respectively $B^* \in tp_{\I}(d_2)$, hence again we obtain that $\I \ent B(d)$; the case when $d_1 \in D_2$ is not possible since, there must be some $a \in D_1$ which implies the existence of $d_1$, and since $\T$ allows only non-simple roles to generate both $a$ and $d_1$, case 6 would not be applicable.   }}
		We can follow same reasoning for proving statement \emph{(ii)} of Claim \ref{prop:1}. Let $d' \in \Delta^{\Smod{\T}{\A}}$ be such that $tp_{\Smod{\T}{\A}}(d,d') \neq \emptyset$. If $d' \in D_0$ then clearly $d' \in \Delta^{\I}$, and we easily get that $tp_{\Smod{\T}{\A}}(d,d') \subseteq tp_{\I}(d^\I, d'^\I)$. As we argued above $d' \not\in D_2$. If $d' \in D_1$, then there exists $B \ISA \exists r \in \T$ and since $B \in tp_{\I}(d)$, we get that there exists $e \in \Delta^{\I}$ such that $(d,e) \in r^{\I}$ and it follows from case b.2 above that $tp_{\Smod{\T}{\A}} (d,d') \subseteq tp_{\I} (d,e)$.
		
		\item[-] For each $d \in D_1$, there must be some $d' \in D_0$ such that $B \in tp_{\Smod{\T}{\A}(d')}$ and $B \ISA \exists r \in \T$, hence there must be some $e \in \Delta^{\I}$ such that $(d',e) \in r^{\I}$, hence $\exists r^- \in tp_{\I}(e)$. Arguing that $tp_{\Smod{\T}{\A}}(d) \subseteq tp_{\I}(e)$ is the same as above. 
		Let $d' \in \Delta^{\Smod{\T}{\A}}$ such that $tp_{\Smod{\T}{\A}} (d,d') \neq \emptyset$. Due to the restrictions of $\T$ on simple roles, it follows that for each $r \in tp_{\Smod{\T}{\A}} (d,d')$, $r$ is non-simple. The case when $d' \in D_0$ is already discussed. If $d' \in D_1$ then clearly $tp_{\Smod{\T}{\A}}(d,d') = \emptyset$ since no CRI can be applied to connect $d$ and $d'$. Lastly, if $d' \in D_2$, then there exists some $B \in tp_{\Smod{\T}{\A}}(d)$ such that $B \ISA \exists r' \in \T$, hence there must be some $e' \in \Delta^{\I}$ such that $(e,e') \in r'^{\I}$. 
		Assume there exists some $t$ such that $t \in tp_{\Smod{\T}{\A}}(d,d')$ and $t \not \in tp_{\I}(e,e')$. Since each $r,t \in tp_{\Smod{\T}{\A}}(d,d')$ and they are non-simple roles, then it must be that $r \ISA^* t \in \T$ hence a contradiction is obtained. 
		
		\item[-] The case when $d \in D_2$ is analogous to previous case. 
	\end{itemize}
	
	~\paragraph{P1:} From the definition of $\Smod{\T}{\A}$ we get that all non-disjointness axioms are satisfied, as well as $\A$. 
	Assume that there exists  $\alpha = {\bf disj}(B_1,B_2) \in \T$ (for role disjointness axioms is analogous) such that $\Smod{\T}{\A} \not \ent \alpha$. Then, there must be some $d \in \Delta^{\Smod{\T}{\A}}$ such that $B_1, B_2 \in tp_{\Smod{\T}{\A}}$, hence using Claim \ref{prop:1}, for each model $\I$ we have that $B_1,B_2 \in tp_{\I}(d)$, which leads to a contradiction with the fact that the KB is satisfiable. We can now conclude that $\Smod{\T}{\A} \ent (\T,\A)$.
	~\paragraph{P2:} If $(\T,\A)$ is inconsistent, then there exists some $\alpha \in \T_n$ such that for each $\I \ent (\T_p,\A)$ we have $\I \not \ent \alpha$. Since $\Smod{\T}{\A}$ is always a model of $(\T_p, \A)$, we get that direction "if" holds.  The other direction follows almost immediately from Claim \ref{prop:1}.
	~\paragraph{P3:}  From Proposition \ref{prop:1} follows that for any instance query and any match in $\Smod{\T}{\A}$ we can easily construct a match in any model $\I$. The other direction follows from $\bf P1$.
\end{proof}

\begin{lemma2}{4}
	Let $\T$ a recursion safe \newDL TBox, $\T_k$ be a $k$-unfolding of $\T$, for some $k \geq 0$, and $q$ a CQ over $\Sigma_{\T}$.
	Then, for every $k$-bounded ABox $\A$: 
	$$cert_(q, \T, \A) = \underset{q' \in rew(\hat{q}, \T_k) }{\bigcup}cert(q', \emptyset, \A) $$
	\label{claim:kboundFO}
\end{lemma2}

\begin{proof} 
	Let $\T^* = \T \cap \T_{k}$, then 
	we have that:
	\begin{align}
	chase(\T^*, \A)&\subseteq chase(\T, \A) \text{ and } \nonumber\\
	chase(\T^*, \A) &\subseteq chase(\T_k, \A) \label{commonchase}
	\end{align}
	
	\noindent We proceed by showing "$\supseteq$" of the lemma: 
	
	For that we need to show the following claim:
	\begin{claim}
		For each $r(a,b) \in chase(\T,\A)$ such that $r \cdot s \subsume r \in \T$, there exists $r^*(a,b) \in chase(\T_k, \A)$.
		\label{claim:rstar}
	\end{claim}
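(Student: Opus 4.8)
The plan is to prove the claim through a single structural statement about how recursive role edges arise in the canonical model, and then to transfer that structure across the $k$-unfolding. Writing $\T^{*}=\T\cap\T_{k}$ for the axioms shared by $\T$ and $\T_{k}$ (so that $chase(\T^{*},\A)\subseteq chase(\T_{k},\A)$), I would first establish the following decomposition, call it $(\star)$: for every recursive role $r$ with $r\cdot s\ISA r\in\T$ and every $r(a,b)\in chase(\T,\A)$, there exist a \emph{base} edge $r(a,c_{0})\in chase(\T^{*},\A)$ (one whose derivation uses no recursive CRI) together with individuals $c_{0},\dots,c_{m}=b\in ind(\A)$, $m\le k$, and guard edges linking $c_{i-1}$ to $c_{i}$ by some $s_{i}$ with $s_{i}\ISA_{\T}^{\sf s}s$ and $s\in S_{r}$, all lying in $chase(\T^{*},\A)$. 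Intuitively, $(\star)$ says every recursive $r$-edge is a single ``primitive'' $r$-edge extended by a bounded $S_{r}$-chain that runs entirely through ABox individuals.

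Granting $(\star)$, the claim is immediate. The base edge $r(a,c_{0})$ lies in $chase(\T_{k},\A)$, so $r\ISA r_{0}\in\T_{k}$ yields $r_{0}(a,c_{0})$. Each guard edge also lies in $chase(\T_{k},\A)$ and, being witnessed by simple inclusions $s_{i}\ISA_{\T}^{\sf s}s$ that survive in $\T_{k}$, gives an $s$-edge $s(c_{i-1},c_{i})\in chase(\T_{k},\A)$. Applying the unfolding axioms $r_{i-1}\cdot s\ISA r_{i}$ successively produces $r_{1}(a,c_{1}),\dots,r_{m}(a,c_{m})$; since $m\le k$ the role $r_{m}$ is one of the fresh roles introduced by the unfolding, and $r_{m}\ISA\hat r\in\T_{k}$ finally gives $\hat r(a,b)=r^{*}(a,b)\in chase(\T_{k},\A)$, as required.

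The proof of $(\star)$ I would carry out by induction on the number $n$ of applications of recursive CRIs in the derivation of $r(a,b)$. For $n=0$ the edge is base and $m=0$. For the step, the last recursive application has the shape $r(a,c),\,s(c,b)\vdash r(a,b)$ with $s\in S_{r}$; applying the induction hypothesis to $r(a,c)$ (derived with $n-1$ recursive applications) and appending the guard edge $s(c,b)$ extends the chain by one. Two facts make this sound and bounded. First, recursion-safety forces every guard role $s\in S_{r}$ to be simple, and since CRIs only ever have non-simple heads while simple inclusions preserve simplicity, $s$ is never the head of a CRI; hence $s$-edges are generated only by ABox assertions, role inclusions and inverses, i.e.\ by $\T^{*}$, so they are primitive. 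Moreover recursion-safety forbids any $A\ISA\exists t$ with $t\ISA_{\T}^{\sf s}s$ or $t\ISA_{\T}^{\sf s}s^{-}$, so no $s$-edge is ever attached to a freshly generated object; both endpoints of every guard edge are therefore ABox individuals, and the whole chain stays inside $ind(\A)$. Second, if the chain revisits a vertex ($c_{i}=c_{j}$) the intermediate loop only re-derives an already present $r$-edge and can be excised, so we may assume the chain is a simple $S_{r}$-path between the ABox individuals $c_{0}$ and $b$; $k$-boundedness of $\A$ then bounds its length by $k$.

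The main obstacle is exactly $(\star)$, and within it the two structural facts just sketched: that recursive CRIs can only ``fire'' between ABox individuals, and that their triggering $S_{r}$-chains can be taken loop-free so that the $k$-boundedness hypothesis, which is phrased in terms of \emph{pairwise distinct} intermediate individuals, becomes applicable. Both rest squarely on the two recursion-safety conditions (simplicity of guards and their non-derivability by existential axioms); everything after $(\star)$ is a routine translation through the unfolding axioms using $chase(\T^{*},\A)\subseteq chase(\T_{k},\A)$. A final bookkeeping point I would check is the correspondence between the freshly generated individuals of $chase(\T,\A)$ and of $chase(\T_{k},\A)$, which is unproblematic since $\T$ and $\T_{k}$ share all existential and simple-role axioms and differ only on the recursive roles.
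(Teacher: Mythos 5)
Your proposal is correct and follows essentially the same route as the paper: both decompose each recursive $r$-edge $r(a,b)$ into a base $r$-edge derivable without recursive CRIs followed by an $S_r$-chain of length at most $k$ running through ABox individuals, and then push this decomposition through the unfolding axioms $r \ISA r_0$, $r_{j-1}\cdot s \ISA r_j$, $r_j \ISA \hat{r}$ using $chase(\T^*,\A) \subseteq chase(\T_k,\A)$. The only difference is one of rigor: the paper merely asserts this decomposition ``using the fact that $\A$ is $k$-bounded,'' whereas your statement $(\star)$, its induction on the number of recursive CRI applications, and the recursion-safety and loop-excision arguments supply exactly the justification the paper leaves implicit.
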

	
	\textit{Proof of Claim \ref{claim:rstar}}
	If $r(a,b)$ is produced by axioms in $\T^*$ then $r(a,b) \in chase(\T^*, \A)$, and since $r \ISA  r^* \in \T_{k}$ we get $r^*(a,b) \in chase(\T_{k},\A)$.
	If $r(a,b)$ is produced by $r\cdot s_j \subsume r \in \T$, , then using the fact that $\A$ is $k$-bounded, we distinguish the following two cases:
	\begin{enumerate}
		\item[(i)] $r(a,b) \in \A$, or
		\item[(ii)] $r(a,a'), \, s^i(a',b) \in chase{(\T^*, \A)}$, where $1 \leq i \leq k$. 
	\end{enumerate}
	For case \emph{(i)} the claim trivially holds since $ r \subsume r^*$. For case \emph{(ii)}, 
	since $r(a,a') \in chase{(\T^*, \A)}$ then $r(a,a') \in chase{(\T_k, \A)}$. Then from axioms $r \subsume r_0$, $r_{i-1} \cdot s \subsume r_{i}$, $r_i \subsume r^*$ in $\T'$, we conclude $r^*(a,b) \in \I^{(\T', \A)}$.
	\textit{End proof.}
	
	Therefore, we obtain that for each match $\pi$ of $q$ in $\I_{chase{(\T, \A)}}$,  $\pi$ is also a match for $q'$ in  $\I_{chase{(\T_k, \A)}}$. 
	
	We proceed now with "$\subseteq$" of the lemma:	Let $\pi'$ be a match for $q'$ in
	$\I_{chase{(\T_k, \A)}}$ and assume that $r^*(a,b) \in \pi'(q')$. 
	Using Lemma \ref{lemma:perfectref}, we can apply the query rewriting rules on $q'$ and $\T'$. Then, axioms
	$r_i \subsume r^*$, $r_{i-1} \cdot s \subsume r_i$, $r \subsume r_0$,
	become succesively applicable, yelding query $q_i$ containing $r(x,y),
	s^{i}(y,z)$, for each $0\leq i \leq k$. Therefore, either
	$r(a,b) \in \A$ or $r(a,c), s^{j}(c,b) \in chase{{(\T^*, \A)}}$, where
	$1 \leq j \leq k$, and since it must be that $r \cdot s \ISA r \in \T$ we can conclude that $\pi'$ is also a match for $q$ in $\I_{chase{(\T, \A)}}$.
\end{proof}



\subsection*{Order constraints and $\C$-admissibility}

We first provide a slightly modified version of the covering definition, otherwise the following claim would not hold in every case.

\begin{lemma} 
	Let $(\T,\A)$ be a recursion-safe \newDL KB, and let $\C$ be a
	set of order constraints that covers $\T$. 
	Let $\ell(\C) =max \{ |\bfA| \mid ord(s, \bfA, \prec) \in \C \}$.
	If $(\T,\A)$ is $\C$-admissible, then $\A$ is $\ell(\C)$-bounded for $\T$.
	\label{lemma:adm-kbound}
\end{lemma}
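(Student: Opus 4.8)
The plan is to reduce the statement to a purely order-theoretic bound on chains in the small model $\Smod{\T}{\A}$, and then to transfer every $S_r$-path of $\A$ into such a chain. Throughout I would fix a recursive role $r$ together with its guard set $S_r=\{s\mid r\cdot s\ISA r\in\T\}$. Since $\C$ covers $\T$, all roles of $S_r$ refer to one common strict partial order $(\bfA,\prec)$ whose concept set has size at most $\ell(\C)$; and since $(\T,\A)$ is $\C$-admissible, $\Smod{\T}{\A}\ent\ord(s,\bfA,\prec)$ for every $s\in S_r$. I would also use, independently of consistency, that $\Smod{\T}{\A}$ contains $\A$ and is closed under the (possibly inverse) simple role inclusions, since both facts are immediate from the defining conditions of $\Smod{\T}{\A}$.

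First I would prove the key chain bound: in any $\I$ satisfying the order constraints of all $s\in S_r$, every sequence $e_0,e_1,\dots,e_m$ of pairwise distinct elements with $(e_{j-1},e_j)\in s_j^\I$ for some $s_j\in S_r$ ($1\le j\le m$) satisfies $m+1\le\ell(\C)$. The idea is to pick, for each vertex $e_j$, a concept $A_j\in\bfA$ with $e_j\in A_j^\I$; such a concept exists because the inclusion condition $s_j^\I\subseteq\bigcup_{A,A'\in\bfA}(A^\I\times A'^\I)$ of the constraint for an incident edge forces each endpoint of an $s_j$-edge into some concept of $\bfA$. Then the disjointness condition $s_j^\I\cap\bigcup_{A\nprec A'}(A^\I\times A'^\I)=\emptyset$, applied to the edge $(e_{j-1},e_j)$, forces $A_{j-1}\prec A_j$: otherwise $(e_{j-1},e_j)$ would lie in $A_{j-1}^\I\times A_j^\I$ with $A_{j-1}\nprec A_j$, contradicting the emptiness of that intersection. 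Hence $A_0\prec A_1\prec\cdots\prec A_m$, and because $\prec$ is transitive and irreflexive these concepts are pairwise distinct, giving $m+1\le|\bfA|\le\ell(\C)$.

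Next I would instantiate this chain bound at $\I=\Smod{\T}{\A}$, which is licensed by $\C$-admissibility, and then lift ABox paths into it. Given an $S_r$-path $a=d_0,d_1,\dots,d_{n-1},d_n=b$ in $\A$ w.r.t.\,$\T$, each edge is an assertion $s_j(d_{j-1},d_j)\in\A$ with $s_j\ISA_\T^{\sf s} s$ for some $s\in S_r$. From $\A\subseteq\Smod{\T}{\A}$ we get $(d_{j-1},d_j)\in s_j^{\Smod{\T}{\A}}$, and closure of $\Smod{\T}{\A}$ under the inclusions comprising $\ISA_\T^{\sf s}$ yields $(d_{j-1},d_j)\in s^{\Smod{\T}{\A}}$ with $s\in S_r$. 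Since the $d_j$ are pairwise distinct by definition of an $S_r$-path, they form an admissible chain in $\Smod{\T}{\A}$, so the chain bound gives $n+1\le\ell(\C)$, i.e.\ $n<\ell(\C)$. As $r$ ranges over the recursive roles, no $S_r$-path exceeds length $\ell(\C)$, which is exactly $\ell(\C)$-boundedness of $\A$ for $\T$.

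The hard part will be the chain bound when an $S_r$-path uses several distinct roles of $S_r$ on consecutive edges. If the constraints for different $s\in S_r$ were permitted to speak about genuinely different concept sets, an internal vertex could be typed by incomparable concepts inherited from its two incident edges, the chosen representatives $A_{j-1},A_j$ could fail to be $\prec$-comparable, and one could in fact build arbitrarily long alternating paths. This is precisely why the covering notion must be strengthened, as announced just before the lemma: the modified definition makes the roles of $S_r$ share a single ordered concept set, so that the disjointness condition of every incident constraint simultaneously controls all of a vertex's $\bfA$-types and the representative chosen at each vertex is coherent across both incident edges. Checking that this strengthening indeed makes the per-vertex choice consistent is the delicate point; the transfer and the order-theoretic counting are then routine.
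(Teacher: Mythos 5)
Your proposal is correct and takes essentially the same approach as the paper's proof: both bound $S_r$-chains in $\Smod{\T}{\A}$ by extracting a strictly $\prec$-increasing chain of concepts from $\bfA$ (using the inclusion condition to type each vertex and the disjointness condition to force $\prec$ between consecutive types, hence at most $\ell(\C)$ vertices), and then transfer $S_r$-paths of $\A$ w.r.t.\ $\T$ into $\Smod{\T}{\A}$ via $\A \subseteq \Smod{\T}{\A}$ and closure under simple role inclusions. Your concluding remark about needing all roles in $S_r$ to share a single ordered concept set is exactly the "slightly modified covering definition" the paper itself adopts in its appendix proof, so the proposal and the paper rest on the same strengthened hypothesis.
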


\begin{proof}
Since $\C$ covers $\T$ we get that for each $s \in S_r$, where $r$ is recursive in $\T$ there is a unique set of concept names $\bfA$ and a strict partial order $\prec$ over  $\bfA$ such that $ord(s, \bfA, \prec) \in \C$. We proceed with showing that in each $S_r$-path in $\Smod{\T}{\A}$ has size $\leq \ell(\C)$. 

Assume there exists in $\Smod{\T}{\A}$ some $S_r$-path of size $\ell(\C)+1$. Therefore, there exists $d_1, \dots, d_{\ell(\C)+1} \in ind(\A)$ such that $(d_i,d_{i+1}) \in s^{\Smod{\T}{\A}}$, where $s \in S_r$. Since $(\T,\A)$ is $\C$-admissible, then each $ord(s,\bfA , \prec) \in \C$, where $s \in S_r$, is satisfied in $\Smod{\T}{\A}$. Hence, for each $s \in S_r$ we have that $s^{\Smod{\T}{\A}} \subseteq A_1^{\Smod{\T}{\A}} \times A_2^{\Smod{\T}{\A}}$ such that $A_1 \prec A_2 \in (\bfA ,\prec)$. We can then construct a concept hierarchy $A_1 \prec A_2 \dots \prec A_j$, where $j = \ell({\C})+1$ and each $A_j \in \bfA$, hence we obtain a contradiction with the fact that $\ell(\C)$ is the size of the maximal concept order in $\C$. Therefore each $S_r$-path in $\Smod{\T}{\A}$ has size at most $\ell(\C)$.
Due to recursion-safety conditions in $\T$, and the construction of $\Smod{\T}{\A}$ we get that for each $S_r$-path in $\Smod{\T}{\A}$ there exists an $S_r$-path in $\A$, hence $\A$ is $\ell(\C)$-bounded for $\T$.
\end{proof}


\end{document}